\newif\ifTR
    \newcommand{\prophecy}[2]{\ensuremath{\mathsf{prophecy}~#1~\mathsf{obs}~#2}\xspace}
    \newcommand{\replace}[1]{\ensuremath{\mathsf{replace}~#1}\xspace}
    \newcommand{\pblock}[2]{\ensuremath{\fbox{\ensuremath{#1}}_{#2}}}
    \newcommand{\punblock}[2]{\ensuremath{\dbox{\ensuremath{#1}}_{#2}}}
    \def\arcr{\@arraycr}
    \def\acm@copyrightmode{0}
    \lstdefinelanguage{csharp}{
        morekeywords={try,public,yield,return,void,foreach,var,in,this,break,catch,while,finally,synchronized,lock},
        sensitive=true,
        basicstyle=\small\ttfamily,
        keywordstyle=\color{blue},
        morecomment=[f]{*},
        morecomment=[s]{/*}{*/},
        morecomment=[n]{/*}{*/},
        commentstyle=\color{teal}\small\ttfamily}
    \lstdefinelanguage{coq}{
        morekeywords={Inductive,Set},
        basicstyle=\small\ttfamily,
        keywordstyle=\color{blue}
    }
\lstdefinelanguage{Scheme}{
  morekeywords=[1]{define, define-syntax, define-macro, lambda, define-stream, stream-lambda},
  morekeywords=[2]{begin, call-with-current-continuation, call/cc, call/comp,
    set!, abort/cc,
    call-with-input-file, call-with-output-file, case, cond,
    do, else, for-each, if,
    let*, let, let-syntax, letrec, letrec-syntax,
    let-values, let*-values,
    and, or, not, delay, force,
    quasiquote, quote, unquote, unquote-splicing,
    define-syntax-rule, raise-syntax-error, define-syntax-parameter, 
    class, super-new, define/public, define/override, super, send,
    map, fold, syntax, syntax-rules, eval, environment, query },
  morekeywords=[3]{import, export, require},
  alsodigit=!\$\%&*+-./:<=>?@^_~,
  sensitive=true,
  morecomment=[s]{\#|}{|\#},
  morestring=[b]",
  basicstyle=\small\ttfamily,
  keywordstyle=\bf\ttfamily\color[rgb]{0,.3,.7},
  commentstyle=\color[rgb]{0.133,0.545,0.133},
  stringstyle={\color[rgb]{0.75,0.49,0.07}},
  upquote=true,
  breaklines=true,
  breakatwhitespace=true,
  literate=*{`}{{`}}{1}
}
    \let \MathparLineskip \mpr@lesslineskip
    \title{Sequential Effect Systems with Control Operators}
    \title{Lifting Sequential Effects to Control Operators}
    \author{Colin S. Gordon}{Drexel University, USA \and \url{https://cs.drexel.edu/~csgordon}}{csgordon@drexel.edu}{https://orcid.org/0000-0002-9012-4490}{}
    \authorrunning{Colin S. Gordon}
\keywords{Type systems, effect systems, quantales, control operators, delimited continuations}
\newcommand{\opt}[1]{\ensuremath{\underline{#1}}}
\theoremstyle{remark}
\newtheorem{usecase}{Use Case}
\begin{document}

    \maketitle
    \begin{abstract}
    Sequential effect systems are a class of effect system that exploits information about program order, rather than discarding it as traditional commutative effect systems do.  This extra expressive power allows effect systems to reason about behavior over time, capturing properties such as atomicity, unstructured lock ownership, or even general safety properties.
    While we now understand the essential denotational (categorical) models fairly well, application of these ideas to real software is hampered by the variety of source level control flow constructs and control operators in real languages.  
    \looseness=-1

    We address this new problem by appeal to a classic idea: macro-expression of commonly-used programming constructs in terms of control operators.
    We give an effect system for a subset of Racket's tagged delimited control operators, as a lifting of an effect system for a language without direct control operators.
    This gives the first account of sequential effects in the presence of general control operators.
    Using this system, we also re-derive the sequential effect system rules for control flow constructs previously shown sound directly, and derive sequential effect rules for new constructs not previously studied in the context of source-level sequential effect systems.
    This offers a way to directly extend source-level support for sequential effect systems to real programming languages.
    \end{abstract}
    
    \section{Introduction}
    Effect systems extend type systems to reason about not only the shape of data, and available operations --- roughly, what a computation produces given certain inputs --- but to also reason about \emph{how} the computation produces its result.  Examples include 
    ensuring data race freedom by reasoning about what locks a computation assumes held during its execution~\cite{Abadi2006,objtyrace99,boyapati01,boyapati02}, 
    restricting sensitive actions (like UI updates) to dedicated threads~\cite{ecoop13}, 
    ensuring deadlock freedom~\cite{safelocking99,tldi12,Abadi2006,suenaga2008type},
    checking safe region-based memory management~\cite{tofte1997region,lucassen88},
    or most commonly checking that a computation handles (or at least indicates) all errors it may encounter --- Java's checked exceptions~\cite{gosling2014java} are the most widely used effect system.
    
    Most effect systems discard information about program order: the same join operation on a join semilattice of effects is used to overapproximate different branches of a conditional or different subexpressions executed in sequence.
    Despite this simplicity, these traditional \emph{commutative} effect systems (where the combination of effects is always a commutative operation) are powerful.
    Still, many program properties of interest are sensitive to evaluation order.
    For example, commutative effect systems handle scoped \lstinline|synchronized| blocks as in Java with ease: the effect of (the set of locks required by) the \lstinline|synchronized|'s body is permitted to contain the synchronized lock, in addition to the locks required by the overall construct.
    But to support explicit lock acquisition and release operations that are not block-structured, an effect system must track whether a given expression acquires and/or releases locks, and must distinguish their ordering: releasing and then acquiring a given lock is not the same as acquiring before releasing.
    To this end, \emph{sequential} effect systems (so named by Tate~\cite{tate13}) reason about effects with knowledge of the program's evaluation order.
    
    Sequential effect systems are much more powerful than commutative effect systems, with examples extending through generic reasoning about program traces~\cite{skalka2008types,Skalka2008} and even propagation of \emph{liveness} properties from oracles~\cite{Koskinen14LTR} --- well beyond what most type systems support.
    The literature includes sequential effect systems for 
    deadlock freedom~\cite{tldi12,suenaga2008type,Abadi2006,boyapati02},
    atomicity~\cite{flanagan2003atomicity,flanagan2003tldi},
    trace-based security properties~\cite{skalka2008types,Skalka2008},
    safety of concurrent communication~\cite{amtoft1999,nielson1993cml}, general linear temporal properties with a liveness oracle~\cite{Koskinen14LTR}, and more.
    Yet for all the power of this approach, for years each of the many examples of sequential effect systems in the literature 
    individually rederived much structure common to all sequential effect systems.
    Recent years have seen efforts to unify understanding of sequential effect systems with general frameworks, first denotationally~\cite{tate13,katsumata14,mycroft16,atkey2009parameterised}, and recently as an extension to the join semilattice model~\cite{ecoop17}.
    These frameworks can describe the structure of established sequential effect systems from the literature.

    However, these generic frameworks stop short of what is necessary to apply sequential effect systems to real languages: they lack generic treatments of critical features of real languages that interact with evaluation order --- 
    control operators, including established features like exceptions and increasingly common features like generators~\cite{Coyle:1991:BAI:122179.122181}..  And with the exception of the effect systems used to track correct return types with delimited continuations (answer type modification~\cite{danvy1989functional,asai2007polymorphic,KoboriKK16}), there are no sequential effect systems that consider the interaction of control and sequential effects. 
    This means \emph{promising sequential effect systems~\cite{skalka2008types,Skalka2008,Koskinen14LTR,flanagan2003atomicity,tldi12,amtoft1999,nielson1993cml,suenaga2008type,boyapati02}
    cannot currently be applied directly to real languages like Java~\cite{gosling2014java}, Racket, C\#~\cite{csharpgen}, Python~\cite{pythongen}, or JavaScript~\cite{jsgen}}.
    \looseness=-1
    
    Control operators effectively reorder, drop, or duplicate portions of a program's execution at runtime, changing evaluation order.  
    In order to reason precisely about flexible rearrangement of evaluation order, a sequential effect system must reason about control operators.
    The classic example is again Java's \lstinline|try-catch|: if the body of a \lstinline|try| block both acquires and releases a lock this is good, but if an exception is thrown mid-block the release may need to be handled in the corresponding catch.
    Clearly, applying sequential effect systems to real software requires support for exceptions in a sequential effect system.  Working out just those rules is tempting, but exceptions interact with loops.  The effect before a throw inside a loop --- which a catch block may need to ``complete'' (e.g., by releasing a lock) --- depends on whether the throw occurs on the first or $n$th iteration.
    Many languages include more than simply \lstinline|try-catch|, for example with the \emph{generators} (a form of coroutine) now found in 
    C\#~\cite{csharpgen}, Python~\cite{pythongen}, and JavaScript~\cite{jsgen}.
    These interact with exceptions \emph{and} loops.  
    Treating each new control operator individually seems inefficient.

    An alternative to studying all possible combinations of individual control constructs in common languages is to study more general constructs, such as the very general delimited continuations~\cite{FelleisenF87,Felleisen88} present in Racket. These are useful in their own right (for Racket, or the project to add them to Java~\cite{projectloom}), and can macro-express many control flow constructs and control operators of interest, including loops, exceptions~\cite{Flatt2007}, coroutines~\cite{DBLP:conf/lfp/HaynesFW84,DBLP:journals/cl/HaynesFW86}, generators~\cite{Coyle:1991:BAI:122179.122181}, and more~\cite{danvy2006analytical}.
    Then general principles can be derived for the general constructs, which can then be applied to or specialized for the constructs of interest.
    This both solves the open question of how to treat general control operators with sequential effect systems, and leads to a basis for more compositional treatment of loops, exceptions, generators, and future additions to languages.  This is the avenue we pursue in this paper.
    \looseness=-1

    Delimited continuations solve the generality problem, but introduce new challenges since sequential effect systems can track evaluation order~\cite{skalka2008types,Skalka2008,quantalesjournal,Koskinen14LTR}.
    The effect of an expression that aborts out of a prompt depends on what was executed before the abort, but not after.  The body of a continuation capture (\lstinline|call/cc|) must be typed knowing the effect of the \emph{enclosing context} --- the code executing after, but not before (up to the enclosing prompt).  
    We lay the groundwork for handling modern control operators in a sequential effect system:
    \looseness=-1
    \begin{itemize}
        \item We give the first generic characterization of sequential effects for continuations, by giving a \emph{generic} lifting of a control-unaware sequential effect system into one that can support tagged delimited continuations.
        The construction we describe provides a way to automatically extend existing systems with support for these constructs, and likewise will permit future sequential effect system designers to ignore control operations initially and add support later for free (by applying our construction).
        As a consequence, we can transfer prior sequential effect systems designed \emph{without} control operators to a setting \emph{with} control operators.
        \item We give sequential effect system rules for \lstinline|while| loops, \lstinline|try-catch|, and generators by deriving them from their macro-expression~\cite{felleisen1991expressive} in terms of more primitive operators.  The loop characterization was previously known (and technically a control flow construct, not a general control operator), but was given as primitive.  The others are new to our work, and necessary developments in order to apply sequential effect systems to most modern programming languages. The derivation approach we describe can be applied to other control operators that are not explicitly treated in the paper.
        \item We demonstrate how prior work's notion of an iteration operator~\cite{ecoop17,quantalesjournal} derived from a closure operator on the underlying effect lattice is not specific to loops, but rather provides a general tool for solving recursive constraints in sequential effect systems.
        \item We prove syntactic type safety for a type system using our sequential control effect transformation with any underlying effect system.
    \end{itemize}

    \section{Background}
    \label{sec:bg}
    We briefly recall the details of 
    \ifTR
    standard type-and-effect systems, 
    \fi
    sequential type-and-effect systems, and tagged delimited continuations.  We emphasize the view of effect systems in terms of a \emph{control flow algebra}~\cite{mycroft16} --- an algebraic structure with operations corresponding to the ways an effect system might combine the effects from subexpressions in a program.
    
\ifTR
\subsection{Effect Systems}
    Traditional type-and-effect systems extend the typing judgment $\Gamma\vdash e : \tau$ for an additional component.  The extended judgment form $\Gamma\vdash e : \tau \mid \chi$ is read ``under local variable assumptions $\Gamma$, the expression $e$ evaluates to a value of type $\tau$ (or diverges), with effect $\chi$ during evaluation.''
    The last clause of that reading is vague, but carries specific meanings for specific effect systems.  For checked exceptions, it could be replaced by ``possibly throwing exceptions $\chi$ during evaluation'' where $\chi$ would be a set of checked exceptions.  For a data race freedom type system reasoning about lock ownership, it could be replaced by ``and is data race free if executed while locks $\chi$ are held.''
    
    In traditional effect systems the set of effects tracked forms a join semilattice: a partial order with a (binary) least-upper bound operation (join, written $\sqcup$), and a least element $\bot$.  As is standard for join semilattices, $\sqcup$ is commutative and associative.  Any time effects of subexpressions must be combined, they are mixed with this join.
    Functions introduce an additional complication that requires modifying function types: the effect of a function's body does not occur when a function (e.g., a lambda expression) is evaluated, but only when it is applied.  So the effect of a function expression itself (like other values) may simply be bottom.
    A function type then carries the \emph{latent effect} of the body --- the effect that does not ``happen'' until the function is actually invoked.
    For example, consider checked exceptions in Java.  The allocation of a class instance (such as what a lambda allocation there translates to) does not actually run any method(s) of the class --- invocation does.  So allocating a class instance throws no exceptions (assuming the constructor throws no exceptions).  But invoking a method with a \lstinline|throws| clause may --- the \lstinline|throws| clause is the latent effect of the method for Java's checked exceptions. 
    To make this more explicit, let us consider the standard type rules for lambda expressions and function application in a generic effect system:
    \begin{mathpar}
    \inferrule*[left=T-Lambda]{
        \Gamma,x:\tau \vdash e : \sigma \mid \chi
    }{
        \Gamma\vdash (\lambda x\ldotp e) : \tau\overset{\chi}{\rightarrow}\sigma \mid \bot
    }
    \and
    \inferrule*[left=T-App]{
        \Gamma\vdash e_1 : \tau\overset{\chi}{\rightarrow}\sigma \mid \chi_1\\
        \Gamma\vdash e_2 : \tau \mid \chi_2\\
    }{
        \Gamma\vdash e_1\;e_2 : \sigma \mid \chi_1\sqcup\chi_2\sqcup\chi
    }
    \end{mathpar}

    Key to note here are that the lambda expression's type carries the latent effect of the function body, but itself has only the bottom effect; and that when a function is applied, the overall effect of the expression is the combination (via join) of all subexprssions' effects \emph{and} the latent effect of the function.
    We call these effect systems \emph{commutative} not only to distinguish them from the broader class of systems we study in this paper, but also because all combinations of effects in such systems are commutative, and disregard evaluation order --- the only means to combine an effect in a commutative effect system is with the (commutative) join operation.  Other rules with multiple subexpressions, such as while loops, conditions, and more, similar join effects without regard to program order or repetition.
    
    By contrast, many effect systems use a richer structure to reason about cases where evaluation order is important.  
\else
\subsection{Sequential Effect Systems}
    Traditional type-and-effect systems extend the typing judgment $\Gamma\vdash e : \tau$ for an additional component.  The extended judgment form $\Gamma\vdash e : \tau \mid \chi$ is read ``under local variable assumptions $\Gamma$, the expression $e$ evaluates to a value of type $\tau$ (or diverges), with effect $\chi$ during evaluation.''
    The last clause of that reading is vague, but carries specific meanings for specific effect systems.  For checked exceptions, it could be replaced by ``possibly throwing exceptions $\chi$ during evaluation'' where $\chi$ would be a set of checked exceptions.  For a data race freedom type system reasoning about lock ownership, it could be replaced by ``and is data race free if executed while locks $\chi$ are held.''

    The join semilattice structure of standard effect systems is well-known, as are the corresponding denotational analogues (e.g., indexed monads~\cite{wadler2003marriage}).  The limitation common to all of these systems, however, is that they discard program order, using the (commutative) join for any combination of effects.
    In contrast, there is growing work on \emph{sequential}~\cite{tate13} effect systems, which capture a wide array of order-sensitive phenomena.
\fi
    This includes effect systems for atomicity~\cite{flanagan2003atomicity,flanagan2003tldi}, deadlock freedom~\cite{suenaga2008type,tldi12,boyapati02,Abadi2006}, race freedom with explicit lock acquisition and release~\cite{suenaga2008type,ecoop17}, message passing concurrency safety~\cite{nielson1993cml,amtoft1999}, security checks~\cite{skalka2008types}, and (with the aid of an oracle for liveness properties) general linear-time properties~\cite{Koskinen14LTR}.
    Tate labels these systems \emph{sequential} effect systems~\cite{tate13}, as their distinguishing feature is the use of an additional sequencing operator to join effects where one is known to be evaluated before another.  Consider the sequential rules for functions, function application, conditionals, and while loops:
    \looseness=-1

    \vspace{-1em}
    \begin{mathpar}
    \inferrule[T-App]{
        \Gamma\vdash e_1 : \tau\overset{\chi}{\rightarrow}\sigma \mid \chi_1\\
        \Gamma\vdash e_2 : \tau \mid \chi_2\\
    }{
        \Gamma\vdash e_1\;e_2 : \sigma \mid \chi_1\rhd\chi_2\rhd\chi
    }
    \and
    \inferrule[T-While]{
        \Gamma\vdash e_c : \mathsf{boolean} \mid \chi_c\\
        \Gamma\vdash e_b : \tau \mid \chi_b\\
    }{
        \Gamma\vdash \mathsf{while}\;e_c\;e_b : \mathsf{unit} \mid \chi_c\rhd(\chi_b\rhd\chi_c)^*
    }
    \and
    \inferrule[T-Lambda]{
        \Gamma,x:\tau \vdash e : \sigma \mid \chi
    }{
        \Gamma\vdash (\lambda x\ldotp e) : \tau\overset{\chi}{\rightarrow}\sigma \mid I
    }
    \and
    \inferrule[T-If]{
        \Gamma\vdash e_c : \mathsf{bool} \mid \chi_c\\
        \Gamma\vdash e_t : \tau \mid \chi_t\\
        \Gamma\vdash e_f : \tau \mid \chi_f
    }{
        \Gamma\vdash \mathsf{if}~e_c~e_t~e_f : \tau \mid \chi_c\rhd(\chi_t\sqcup\chi_f)
    }
    \end{mathpar}
    The sequencing operator $\rhd$ is associative but \emph{not} (necessarily) commutative.  Thus the effect in the new \textsc{T-App} reflects left-to-right evaluation order: first the function position is reduced to a value, then the argument, and then the function body is executed.
    The conditional rule reflects the execution of the condition followed by \emph{either} (via commutative join) the true or false branch.
    The while loop uses an iteration operator $(-)^*$ to represent 0 or more repetitions of its argument; we will return to its details later.  The effect of \textsc{T-While} reflects the fact that the condition will always be executed, followed by 0 or more repetitions of the loop body and checking the loop condition again.
    The rule for typing lambda expressions switches from a bottom element, to a general unit effect: identity for sequential composition.

    To formalize the intuition above,
    Gordon~\cite{ecoop17} proposed \emph{effect quantales} as a model that captures prior effect systems' structure:
    
    \begin{definition}[Effect Quantale]
        \label{def:eq}
        An \emph{effect quantale} is a join-semilattice-ordered monoid with nilpotent top.  That is, it is a structure $(E,\sqcup,\top,\rhd,I)$ where:
        \begin{itemize}
        \item $(E,\sqcup,\top)$ is an upper-bounded join semilattice
        \item $(E,\rhd,I)$ is a monoid
        \item $\top$ is nilpotent for sequencing ($\forall x\ldotp x\rhd\top=\top=\top\rhd x$)
        \item $\rhd$ distributes over $\sqcup$ on both sides: $a\rhd(b\sqcup c)=(a\rhd b)\sqcup(a\rhd c)$ and
        $(a\sqcup b)\rhd c = (a\rhd c)\sqcup(b\rhd c)$
        \end{itemize}
    \end{definition}
    The structure extends a join semilattice with a sequencing operator, a designated error element to model possibly-undefined combinations, and laws specifying how the operators interact.
    Top ($\top$) is used as an indication of a type error, for modeling partial join or sequence operators: expressions with effect $\top$ are rejected. $\sqcup$ is used to model non-deterministic joins (e.g., for branches) as in the commutative systems, and $\rhd$ is used for sequencing.  The default effect of ``uninteresting'' program expressions (including values) becomes the unit $I$ rather than a bottom element (which need not exist).
    As a consequence of the distributivity laws, it follows that $\rhd$ is also monotone in both arguments, for the standard partial order derived from a join semilattice: $x\sqsubseteq y \equiv x\sqcup y = y$.

    Gordon~\cite{ecoop17} also showed how to exploit \emph{closure operators}~\cite{birkhoff,blyth2006lattices,fuchs2011partially} to impose a well-behaved notion of iteration (the $(-)^*$ operator from \textsc{T-While}) that coincides with manually-derived versions for the effect quantales modeling prior work for many effect quantales.
    Gordon~\cite{quantalesjournal} recently generalized the construction, and showed that large general classes of effect quantales meet the criteria to have such an iteration operator.
    The effect quantales for which the generalized iteration is defined are called \emph{laxly iterable}.
    An effect quantale is \emph{laxly iterable} if for every element $x$, the set of subidempotent elements ($\{s \mid s\rhd s\sqsubseteq s\}$) greater than both $x$ and $I$ has a least element.
    This is true of all known effect quantales corresponding to systems in the literature.
    \looseness=-1

    The iteration operator for an iterable effect quantale takes each effect $x$ to the least subidempotent effect greater than or equal to $x\sqcup I$ (which exists, by the definition of laxly iterable).
    This iteration operator satisfies 5 essential properties for any notion of iteration~\cite{quantalesjournal}, which we will find useful when deriving rules for loops.
    Iteration operators are \emph{extensive} ($\forall e\ldotp e \sqsubseteq e^*$), \emph{idempotent} ($\forall e\ldotp (e^*)^*=e^*$), \emph{monotone} ($\forall e,f\ldotp e\sqsubseteq f \Rightarrow e^*\sqsubseteq f^*$), \emph{foldable} ($\forall e\ldotp e\rhd e^* \sqsubseteq e^*~\textrm{and}~e^*\rhd e\sqsubseteq e^*$), and \emph{possibly-empty} ($\forall e\ldotp I \sqsubseteq e^*$).
    Another useful property of iteration that we will sometimes use is that $\forall x,y\ldotp x^*\sqcup y^*\sqsubseteq (x\sqcup y)^*$\ifTR\footnote{$x^*\sqcup y^*\sqsubseteq (x\sqcup y)^*\sqcup y^*$ by monotonicity and $x\sqsubseteq x\sqcup y$, so for $y$ similarly $\ldots\sqsubseteq (x\sqcup y)^*\sqcup(x\sqcup y)^*=(x\sqcup y)^*$}\fi. 
    Gordon~\cite{ecoop17,quantalesjournal} gives more details on closure operators and the derivation of iteration.  We merely require its existence and properties.
    \looseness=-1

    For our intended goal of giving a transformation of any arbitrary sequential effect system into one that can use tagged delimited continuations, we require \emph{some} abstract characterization. We choose effect quantales as the abstraction for lifting for several reasons.  First, they characterize the structure of a range of concrete systems from prior work~\cite{ecoop17,quantalesjournal}, while other proposals omit structure that is important to these concrete systems. Second, while effect quantales are not maximally general, they remain very general: the motivating example for Tate's work~\cite{tate13} (which \emph{is} maximally general) can be modeled as an effect quantale.  Third, we would like to check whether our derived rules are sensible; effect quantales are the only abstract characterization for which imperative loops have been investigated, offering appropriate points of comparison.  Finally, the iteration construction on effect quantales offers a natural approach to solving recursive constraints on effects, which we will use in deriving closed-form derived rules for macro-expressed control flow constructs and control operators.
    \looseness=-1

\ifTR
    Gordon~\cite{ecoop17} gives an effect quantale for enforcing data race freedom in the presence of unstructured locking, where the elements are pairs of multisets of locks --- a multiset counting (recursive) lock aquisitions assumed before an expression, and a multiset counting (recursive) lock claims after an expression.  Using pre- and post-multisets rather than simply tracking acquisitions and releases makes it possible to enforce data race freedom: an atomic action (e.g., field read) accessing data guarded by a lock $\ell$ can have effect $(\{\ell\},\{\ell\})$, which indicates the guarding lock must be held before and remains held.
    The idempotent elements for this effect quantale are the error element, plus those where the pre- and post-multisets are the same --- so iterating an expression where the lock claims are loop-invariant does nothing, while iterating an action that acquires and/or releases locks (in aggregate, not internally) yields an error.
\fi

As a running example throughout the paper, we will use
    a simplification of various trace or history effect systems~\cite{skalka2008types,Skalka2008,Koskinen14LTR}.  For a set (alphabet) of events $\Sigma$, consider the non-empty subsets of $\Sigma^*$ --- the set of possibly-empty strings of letters drawn from $\Sigma$ (the strings, not the subsets, may be empty). 
    This gives an effect quantale $\mathcal{T}(\Sigma)$ whose elements are these subsets or an additional top-most error element $\mathsf{Err}$.  Join is simply set union lifted to propagate $\mathsf{Err}$.  Sequencing is the double-lifting of concatenation, first to sets ($A\cdot B=\{xy \mid x\in A\land y\in B\}$), then again to propagate $\mathsf{Err}$.  The unit for sequencing is the singleton set of the empty string, $\{\epsilon\}$.
    If $\Sigma$ is a set of events of interest --- e.g., security events --- then effects drawn from this effect quantale represent sets of possible finite event sequences executed by a program.
    Effects drawn from this effect quantale show the possible sequences of operations code may execute, which will allow us to show explicitly how fragments of program execution are rearranged when using control operators.
    For our examples, we will assume a family of language primitives $\mathsf{event}[\alpha]$ with effect $(\emptyset,\emptyset,\{\alpha\})$ (similar to Koskinen and Terauchi~\cite{Koskinen14LTR}), where $\alpha$ is drawn from a set $\Sigma$ of possible events.
    The key challenge we face in this paper is, viewed through the lens of $\mathcal{T}(\Sigma)$, to ensure that when continuations are used, the effect system does not lose track of events of interest or falsely claim a critical event occurs where it may not.
    \looseness=-1

\subsection{Tagged Delimited Continuations}
Control operators have a long and rich history, reaching far beyond what we discuss here.
     Many different control operators exist, and many are macro-expressible~\cite{felleisen1991expressive} in terms of each other (i.e., can be translated by direct syntactic transformation into another operator), though some of these translations require the assumption of mutable state, for example.  
    But a priori there is no single most general construct to study which obviously yields insight on the source-level effect typing of other constructs.
    A suitable starting place, then, is to target a highly expressive set of operators that see use in a real language.
    If the operators are sufficiently expressive, this provides not only a sequential type system for an expressive source language directly, but also supports deriving type rules for \emph{other} languages' control constructs, based on their macro-expression in terms of the studied control operators.
    
    We study a subset of the tagged delimited control operators~\cite{FelleisenF87,Felleisen88,Sitaram1993,SitaramF90a,SitaramF90b} present in Racket~\cite{Flatt2007}, shown in 
\ifTR
    Figure \ref{fig:opsem}.\footnote{Racket afficianados familiar with Flatt et al.'s work may skip ahead while noting we omit continuation marks and \lstinline|dynamic-wind|, deferring these to future work.
    Continuation marks are little-used outside Racket.  \lstinline|dynamic-wind| is the heart of constructs like Java's \lstinline|finally| block or the unconditional lock release of a \lstinline|synchronized| block, but we leave them to future work.
    Composable continuations extend their context, rather than replacing it.  We give their semantics below for comparison and because they are necessary for completeness in some models~\cite{SitaramF90a,SitaramF90b}, but leave their treatment in a sequential effect system to future work as well.}
\else
    Figure \ref{fig:opsem}.
\fi
    The semantics include both local ($\rightarrow$) and global ($\Rightarrow$) reductions on configurations consisting of a state $\sigma$ and expression $e$.
    All continuations in Racket are delimited, and tagged.  There is a form of \emph{prompt} that limits the scope of any continuation capture:
    \lstinline|(% tag e e2)| is a tagged prompt with tag \lstinline|tag|, body \lstinline|e|, and abort handler \lstinline|e2|.  
    Without tags, different uses of continuations --- e.g., error handling or concurrency abstractions --- can interfere with each other~\cite{Sitaram1993}; as a small example, if loops and exceptions were both implemented with \emph{un}delimited continuations, throwing an exception from inside a loop inside a try-catch would jump to the loop boundary, not the catch.  Thus prompts, the continuation-capturing primitives \lstinline|call/cc| and \lstinline|call/comp|, and the \lstinline|abort| primitive all specify a tag, and only prompts with the specified tag are used to interpret continuation and abort boundaries. This permits jumping over unrelated prompts (e.g., so exceptions find the nearest \emph{catch}, not merely the nearest control construct).
    In most presentations of delimited continuations, tags are ignored (equivalently, all tags are equal), while most implemenations retain them for the reasons above.  Here the tags are essential to the theory as well: an abort that ``skips'' a different prompt must be handled differently by our type-and-effect system.
    
\begin{figure*}\small
\[
    E ::= \bullet \mid (E~e) \mid (v~E) \mid (\%~t~E~v) \mid (\mathsf{call/cc}~t~E) \mid (\mathsf{call/comp}~t~E) \mid (\mathsf{abort}~t~e)
\]
\begin{mathpar}
\fbox{$\sigma;e\overset{q}{\rightarrow} \sigma;e$}
\quad
\inferrule*[left=\small E-App]{ }{
    \sigma;((\lambda x\ldotp e)~v)\overset{I}{\rightarrow}\sigma;e[v/x]
}
\ifTR
\and
\inferrule*[left=\small E-PrimApp]{
    \delta(\sigma,p~\overline{e})=(\sigma',v,\chi)\\
    \mathsf{Values}(\overline{e})
}{
    \sigma;(p~\overline{e})\overset{\chi}{\rightarrow}\sigma',v
}
\fi
\quad
\inferrule*[left=\small E-PromptVal]{ }{
    \sigma;(\%~\ell~v~h)\xRightarrow{I}\sigma;v
}
\\
    \fbox{$\sigma;e\overset{q}{\Rightarrow}\sigma;e$}
    \quad
    \inferrule*[left=\small E-Context]{ \sigma;e\overset{q}{\rightarrow}\sigma';e'}{\sigma;E[e]\overset{q}{\Rightarrow}\sigma';E[e']}
    \quad
    \inferrule*[left=\small E-Abort]{ E'~\textrm{contains no prompts for}~\ell }{
        \sigma;E[(\%~\ell~E'[(\mathsf{abort}~\ell~v)]~h)]
        \xRightarrow{I}
        \sigma;E[h~v]
    }
    \and
    \inferrule*[left=\small E-CallCC]{ E'~\textrm{contains no prompts for}~\ell }{
        \sigma;E[(\%~\ell~E'[(\mathsf{call/cc}~\ell~k)]~h)]
        \xRightarrow{I}
        \sigma;E[(\%~\ell~E'[(k~(\mathsf{cont}~\ell~E'))]~h)]
    }
    \and
    \inferrule*[left=\small E-InvokeCC]{ E'~\textrm{contains no prompts for}~\ell }{
        \sigma;E[(\%~\ell~E'[((\mathsf{cont}~\ell~E'')~v)]~h)]
        \xRightarrow{I}
        \sigma;E[(\%~\ell~E''[v]~h)]
    }
\end{mathpar}
        \vspace{-2em}
\caption{Operational semantics
        \vspace{-1em}
}
\label{fig:opsem}
\end{figure*}
    \lstinline|call/cc tag f| is the standard (delimited) call-with-current-continuation: \lstinline|f| is invoked with a delimited continuation representing the current continuation up to the nearest prompt with tag \lstinline|tag| (\textsc{E-CallCC}).  Invoking that continuation (\textsc{E-InvokeCC}) replaces the context up to the nearest dynamically enclosing prompt with the same tag, leaving the delimiting prompt in place.
    Both capture and replacement are bounded by the nearest enclosing prompt for the specified tag.
    The surrounding captured or replaced context ($E'$ in both rules) may contain prompts for other tags, but not the specified tag.
    Racket also includes \lstinline|(abort t e)| (absent in many formalizations of continuations), which evaluates \lstinline|e| to a value, then replaces the enclosing prompt (of the specified tag \lstinline|t|) with an invocation of the handler applied to that value (\textsc{E-Abort}).
    Racket's rules differ from some uses of \lstinline|abort| in the literature. Figure \ref{fig:opsem}'s rules are Flatt et al.'s rules~\cite{Flatt2007} without continuation marks and \lstinline|dynamic-wind|.  Flatt et al.\ formalized Racket's control operators in Redex~\cite{felleisen2009semantics,klein2012redex}, including showing they passed the Racket implementation tests for those features.  We have verified the rules above continue to pass the relevant tests in Redex (see supplementary material~\cite{controltr}).
    \looseness=-1

    We chose this set of primitives, over related control operators~\cite{shan2007static} such as \lstinline|shift/reset| or \lstinline|shift0/reset0| which can simulate these primitives, for several reasons.  First, they are general enough to use for deriving rules for higher-level constructs like generators from their macro-expansion.  Second, the control operators we study are implemented as primitives in a real, mature language implementation (Racket), used in real software~\cite{KrishnamurthiHMGPF07}.  And finally, it is known~\cite{Flatt2007} how these control operators interact with other useful control operators like \lstinline|dynamic-wind|~\cite{Haynes:1987:ECP:29873.30392} (relevant to \lstinline|finally| or \lstinline|synchronized| blocks) and continuation marks~\cite{clements2001modeling}.  Thus our Racket subset is a suitable basis for future extension, while we are unaware of established extensions of \lstinline|shift/reset|, \lstinline|shift0/reset0|, etc. with continuation marks or \lstinline|dynamic-wind|.
    \looseness=-1

    The operators we study can express loops, exceptions, coroutines~\cite{DBLP:conf/lfp/HaynesFW84,DBLP:journals/cl/HaynesFW86}, and generators~\cite{Coyle:1991:BAI:122179.122181}.
    Racket also includes \emph{compositional} continuations, whose application extends the current context rather than discarding it, giving completeness with respect to some denotational models~\cite{SitaramF90b}, and alleviating space problems when using \lstinline|call/cc| to simulate other families of control operators (it is known to macro-express another popular form of delimited continuations, the combination of \lstinline|shift| and \lstinline|reset|~\cite{Flatt2007}).
    \ifTR
    We excise discussion of compositional continuations to Appendix \ref{apdx:compositional}.
    \else
    Our technical report~\cite{controltr} extends our development to include compositional continuations as well.
    \fi

    One final point about the semantics worth noting is the presence of effect annotations on the reduction arrows.
    These semantics are further adapted from Flatt et al.~\cite{Flatt2007} to ``emit'' the primitive effect of the reduction, which is typical of syntactic type safety proofs for effect systems, including ours (Section \ref{sec:summary_soundness}).  They do not influence evaluation, but only mark a relationship between the reduction rules and static effects.
    \ifTR
    Non-unit (non-$I$) effects arise from primitives, via \textsc{E-PrimApp}, which is taken from Gordon's parameterized system~\cite{ecoop17}; it assumes a function $\delta$ giving state transformer semantics and a primitive effect to raise $\chi$ when primitives are applied to a correct number of values.
    \else
    Non-unit (non-$I$ effects) arise from a choice of primitives that depends on the particular effect system studied.
    \fi
    
    \section{Growing Sequential Effects: Control, Prophecies, and Blocking}
    \label{sec:usecases}
    To build intuition for our eventual technical solution, we motivate its components through a series of progressively more sophisticated use cases.
    We will use $\mathcal{T}(\Sigma)$ in all of our examples, because we find traces to be an effective way of explaining the difficulties the effect system must address related to program fragments (i.e., events) being repeated, skipped, or reordered.  A reader may choose to impart security-specific meanings to these events (as Skalka et al.~\cite{skalka2008types} do) or as any other protocol of personal interest (e.g., lock acquisition and release).  However, our development in Section \ref{sec:formal} is \emph{not} specific to this effect quantale, but instead parameterized over an arbitrary effect quantale.
    Our goal is to develop a sequential effect system based on transforming an \emph{underlying} base effect quantale $Q$ into a structure we will call $\mathcal{C}(Q)$ with sequencing and join operations, and a unit effect.
    This ensures our transformation works for \emph{any} valid effect quantale, which includes all sequential effect systems we are aware of~\cite{ecoop17,quantalesjournal}.

    \begin{usecase}[Control-Free Programs] Since programs are not required to use control operators, our solution must include a restriction equivalent to the class of underlying effects to reason about.  For example, if 
    \lstinline|event[$\alpha$]| has effect $\{\alpha\}\in\mathcal{T}(\Sigma)$ and 
    \lstinline|event[$\beta$]| has effect $\{\beta\}\in\mathcal{T}(\Sigma)$,  we should expect the effect of \lstinline|event[$\alpha$]; event[$\beta$]| to be somehow equivalent to sequencing those underlying effects --- $\{\alpha\}\rhd\{\beta\}=\{\alpha\beta\}$.  This suggests underlying effects should be at least a component of continuation-aware effects.
    \end{usecase}

    \begin{usecase}[Aborting Effects]
        \label{use:basicabort}
         The simplest control behavior we can use is to abort to a handler, and this interacts with both sequencing and conditionals. Consider:
    \begin{lstlisting}
        (% t ((if c (event[$\alpha$]) (abort t 3)); event[$\beta$]) ($\lambda$n. event[$\gamma$]) )
    \end{lstlisting}
    Assuming \lstinline|c| is a variable (i.e., pure), there are two paths through this term:
    \begin{itemize}
        \item If \lstinline|c| is true, the code will emit events $\alpha$ and $\beta$ (in order), and not execute the handler
        \item If \lstinline|c| is false, the code will abort to the handler, which will emit event $\gamma$.
    \end{itemize}
    So intuitively, the effect for this term should contain those traces; ideally the effect would be $\{\alpha\beta,\gamma\}$, containing only those traces. For an effect system to validate this effect for this term, it must not only track ordinary underlying effects, but also two aspects of the \lstinline|abort| operation's behavior: it causes some code to be discarded (the \lstinline|event[$\beta$]|), and it causes the handler to run. We can track this information by making effects pairs of two components: a set of behaviors up to an abort (which we will call the \emph{control effect set}, since it tracks effects due to non-local control transfer), and an underlying effect for when no abort occurs.  We could then give the body of the prompt the effect $(\{\mathsf{abort}(\{\epsilon\})\},\{\alpha\beta\})$ to indicate that it either executes normally producing a trace $\alpha\beta$, or it aborts to the nearest handler after doing nothing (more precisely, after performing actions with the unit effect for $\mathcal{T}(\Sigma)$). The type rule for prompts can then recognize the body may abort, and for each possible prefix of an aborting execution, add into the overall (underlying) effect of the prompt the result of sequencing that prefix with the \emph{handler}'s effect: here, $(\{\epsilon\}\rhd\{\gamma\})\sqcup\{\alpha\beta\}=\{\alpha\beta,\gamma\}$.

    Above the body effect was given as a whole from intuition, but in general this must be built compositionally from subexpression effects, motivating further questions.
    First, what is the effect of the subterm \lstinline|abort t 3|? In particular, what is its underlying effect?  One sound choice would be $I$ ($\{\epsilon\}$ for $\mathcal{T}(\Sigma)$), but this would introduce imprecision: it would produce effects suggesting it was possible to execute only event $\beta$ (since the conditional's underlying effect would include both $\alpha$ and the empty trace $\epsilon$, sequenced with $\beta$).  Instead, we will make the underlying component \emph{optional}, writing $\bot$ when it is absent.  We will continue to use metavariables $Q$ to indicate a definitely present element of the underlying effect quantale, but will use the convention of underlining metavariables (e.g., $\opt{Q}$) when they may be $\bot$.  This permits the conditional's underlying effect to only contain the trace $\alpha$, because joining the branches' effects can simply ignore the missing underlying effect from the aborting branch.

    Second, we should consider how the sequencing and join operations interact with abort effects. While component-wise union/join is a natural (and working) starting point, sequencing is less obvious. 
    Prefixing the last example's body with an extra event is instructive:
    \begin{lstlisting}
        (% t (event[$\delta$]; (if c (event[$\alpha$]) (abort t 3)); event[$\beta$]) ($\lambda$n. event[$\gamma$]) )
    \end{lstlisting}
    Execution could generate two traces: $\delta\alpha\beta$ and $\delta\gamma$. So \emph{both} traces in the effect for the last body should gain this $\delta$ prefix: not only the underlying effect component, but also the portion related to the \lstinline|abort|. This suggests the following definition of sequencing:\\
    \centerline{$(C_1,\opt{Q_1})\rhd(C_2,\opt{Q_2}) = (C_1\cup(\opt{Q_1}\rhd C_2), \opt{Q_1}\rhd \opt{Q_2})$}
    Assuming there is a lifting (given later) of the underlying sequencing to possibly-absent underlying effects ($\opt{Q_1}\rhd\opt{Q_2}$), this reflects the natural ways of combining paths through these terms: the \emph{control effect set} collecting abort behaviors should include both the abort behaviors from $C_1$ (an execution corresponding to one of those behaviors means nothing from the second effect will execute), as well as the result of first executing \emph{normal} behaviors of the first effect (e.g., $\{\delta\}$) before the aborting behaviors of the second --- $\opt{Q_1}\rhd C_2$. So the effect of this new example's prompt body should be the join of the two branches ($(\{\mathsf{abort}(\{\epsilon\})\},\bot)\sqcup(\emptyset,\{\alpha\})=(\{\mathsf{abort}(\{\epsilon\})\},\{\alpha\})$), sequenced between the effects of event $\delta$ and event $\beta$: 
    $(\emptyset,\{\delta\})\rhd (\{\mathsf{abort}(\{\epsilon\})\},\{\alpha\})\rhd(\emptyset,\{\beta\})=(\emptyset,\{\delta\})\rhd (\{\mathsf{abort}(\{\epsilon\})\},\{\alpha\beta\})=(\{\mathsf{abort}(\{\delta\})\}, \{\delta\alpha\beta\})$.
    Repeating our informal prompt handling above gives us the new expected underlying effect $\{\delta\alpha\beta,\delta\gamma\}$.
    We refer to the way the control effect set accumulates prefixes on the left as \emph{left-accumulation} of effects.
    This definition is associative, and distributes over the component-wise union/join.
    For now we continue with the simplifying assumption that all tags are \lstinline|t| (equivalent to \emph{untagged} continuations), and consider issues with multiple tags in Use Case \ref{use:tags}.
    A final detail deferred to Section \ref{sec:formal} is that the value thrown by an \lstinline|abort| must be of the type expected by the corresponding handler.
    \looseness=-1

    This is already part-way to our goal of deriving type rules for common control operators from delimited continuations:
    the work so far supports basic checked exceptions: 
    \\\centerline{$
        \llbracket\mathsf{try}\;e\;\mathsf{catch}\;{C\Rightarrow e_c}\rrbracket = (\%~C~\llbracket{e}\rrbracket~\llbracket{e_c}\rrbracket)
        \qquad
        \llbracket\mathsf{throw}_C e\rrbracket = (\mathsf{abort}\;t_C\;\llbracket{e}\rrbracket)
    $}
    The formal rules for prompts and aborts will directly dictate the rules for these macro-expressions of checked exceptions, just as the informal discussion here transfers directly to these macro-expressed checked exceptions.
\end{usecase}

\begin{usecase}[{Invoking Simple Continuations}]
    \label{use:basiccallcont}
    Operationally, invoking an existing continuation is similar to using \lstinline|abort| in that it discards the surrounding context up to the nearest prompt.
    But unlike uses of \lstinline|abort|, invoking a continuation does not cause handler execution --- instead (by \textsc{E-InvokeCC}) the prompt and handler remain in place.
    For the type rule for prompts to treat this additional mechanism, effects must indicate that invocation may occur.  We can do this by extending effects slightly: instead of $C$ in the effect $(C,\opt{Q})$ only containing prefixes of aborting computations, it should also include prefixes of continuation invocations --- this is why elements of $C$ were labeled $\mathsf{abort}(-)$ in the previous example, and we can now include effects tagged by $\mathsf{replace}(-)$ to indicate control behaviors that \emph{replace} the current continuation with a new one.  
    Considering a term invoking a continuation \lstinline|k|:
    \begin{lstlisting}
        (% t (event[$\delta$]; (if c (event[$\alpha$]) (k ())); event[$\beta$]) ($\lambda$n. event[$\gamma$]) )
    \end{lstlisting}
    As in the previous example, one possible trace of this program is $\delta\alpha\beta$ when \lstinline|c| is true.  When \lstinline|c| is false, however, this program will emit event $\delta$, then invoke \lstinline|k|, replacing the rest of the body with \lstinline|k|'s captured continuation (with unit in its hole).  
    Thus typing this requires also knowing additional information about \lstinline|k|, not required in the \lstinline|abort| case.  We require continuations to carry a latent effect similar to functions: while the latent effect of a function $(\lambda x. e)$ describes the effect of the term obtained by substituting an appropriately-typed value $v$ into $e$ --- the effect of $e[v/x]$ --- the latent effect of a continuation $(\mathsf{cont}^\tau~\ell~E)$ describes the effect of plugging an appropriately-typed value $v$ into $E$'s hole --- the effect of $E[v]$.

    Assuming \lstinline|k| has only underlying effects --- e.g., if $k=(\mathsf{cont}^\mathsf{unit}~t~(\bullet; \mathsf{event}[\eta]))$, then we should expect the type rule for invocation to take that underlying effect from \lstinline|k|'s latent effect and move it to the control effect set, under $\mathsf{replace}(-)$.  
    So if \lstinline|k|'s latent effect is $(\emptyset,\{\eta\})$, the effect of \lstinline|(k ())| should be $(\{\mathsf{replace}(\{\eta\})\},\bot)$.
    Sequential composition should be extended to treat \textsf{replace} control effects similar to \textsf{abort} control effects, by accumulating on the left.  With that adjustment, we can conclude the body above has effect $(\{\mathsf{replace}(\{\delta\eta\})\},\{\delta\alpha\beta\})$.
    The type rule for prompts can treat these similarly to \textsf{abort} control effects, but without sequencing them with the handler (which doesn't execute in this case), producing an overall effect $(\emptyset,\{\delta\alpha\beta,\delta\eta\})$.
    Notice that this is slightly different from $\mathsf{abort}(-)$ control effects: those track \emph{only} the prefix effect before the abort, but the approach just outlined would have $\mathsf{replace}(-)$ control effects include prefix effects before invoking the continuation \emph{and} the underlying effect of behaviors \emph{after} the control operation. 
    This corresponds to the location of the behavior that executes after the control operation --- remote for aborts (the handler is non-local) and local for continuation invocation (the continuation is at the call site).
\end{usecase}

\begin{usecase}[Invoking Continuations that Abort or Invoke Other Continuations]
    \label{use:nestedcont}
    The example above assumed \lstinline|k| had only underlying effects, but in general \lstinline|k|'s body might use \lstinline|abort| or invoke other continuations.  In such cases, \lstinline|k|'s latent effect would be some pair $(C,\opt{Q})$ for non-empty control effect set $C$.  It turns out simply treating those na\"ively --- including them into the control effect set for invoking \lstinline|k| --- is adequate for now (we revisit this when considering multiple tags).  If $k=(\mathsf{cont}^\mathsf{unit}~t~(\bullet; \mathsf{event}[\eta]; \mathsf{abort}~t~3))$ in the previous example, then the latent effect will be $(\{\mathsf{abort}(\{\eta\})\},\bot)$, and simply making the effect of \lstinline|(k ())| be the same (dropping the absent underlying latent effect since the continuation can only return via control operators, but making the application's underlying effect $\bot$ because by definition it does not return directly) gets the expected result at the prompt, including the trace $\delta\eta\gamma$ from emitting $\delta$, invoking \lstinline|k|, emitting $\eta$ (from \lstinline|k|'s restored body) and aborting to the handler.
    Because the latent control effects from \lstinline|k|'s body already includes the prefixes from the start of \lstinline|k|'s body to the \lstinline|abort|, the existing left accumulation in our definition of $\rhd$ correctly accumulates prefixes from the site of continuation invocation, into the continuation, to its uses of control operations.  
    In general (for a single tag), invoking a continuation with latent effect $(C,{Q})$ has effect $(C\cup\{\mathsf{replace}(Q)\},\bot)$, though this assumes a non-empty underlying effect for the continuation --- an assumption the final type rules will need to relax, along with extension for multiple tags, and issues with the continuation's argument and result type.
\end{usecase}

\begin{usecase}[{Capturing Continuations}]
    \label{use:capture}
    Typing uses of \lstinline|call/cc| is the most complex problem this effect system must address. For a term \lstinline|(call/cc t ($\lambda$k. e))|, the rule must type the body function, which means choosing a type for the variable \lstinline|k| that will be bound to the continuation. 
    \ifTR
    As just discussed, this type includes argument and return types, as well as a latent effect.  When invoked, the argument provided takes the place of the \lstinline|call/cc| term itself in the surrounding context, so the argument type must be equal to the result type of the term at hand.  Unfortunately, the others must be consistent with parts of the program that are \emph{not subterms of the \lstinline|call/cc| use}.  The result type of the continuation must agree with the result type of the prompt within which it is invoked.  
    We will defer discussing the continuation return type and focus on the question of captured continuations' latent effects.
    \else
    We defer continuation argument and result types for now.
    \fi
    The latent effect of the continuation parameter depends on the effect of the code in the captured context --- code that is (at runtime) ``between'' the \lstinline|call/cc| and the (dynamically) nearest prompt.
    Consider applying a purely local type rule for \lstinline|call/cc| to this simple example:
    \begin{lstlisting}
        (% t ((call/cc t ($\lambda$k. e)); (foo 3)) ...)
    \end{lstlisting}
    Here the context captured will clearly be \lstinline|($\bullet$; (foo 3))|.
    This is awkward when typing the subterm \lstinline|(call/cc t ($\lambda$k. e))|, because that context is not a subterm of what is being typed. 
    \ifTR
    It may be tempting to push the surrounding context's latent effect ``down'' into subterms, but the continuation invocation above may have arisen from reducing a function call at runtime --- the source level type checker may not see a unique context for each \lstinline|call/cc|.  
    \fi

    We will take a ``guess and check'' approach\footnote{In the sense that a human constructing a typing derivation involving \lstinline|call/cc| would need to informally guess a prophecy, then use the type rules to check that it was a sound guess.} to typing \lstinline|call/cc|, assuming a certain latent effect and ensuring it can be checked elsewhere when additional information is available (in our case, in the type rule for prompts), essentially a form of \emph{prophecy}~\cite{abadi1988existence,abadi1991existence}.
    We track prophecies in a third (final) component, the \emph{prophecy set}, resulting in three-component effects $(P,C,\opt{Q})$.  We call these three-part effects \emph{continuation effects}, using metavariable $\chi$.
    An individual prophecy records the assumed latent effect of a continuation. Since that continuation may capture further continuations, the prophecy must predict a full continuation effect, making prophecies and continuation effects mutually recursive.

    Prophecies alone are only local guesses about non-local phenomena; the effect system requires a way to validate them.
    Intuitively, a prophecy that a \lstinline|call/cc| captures a continuation with latent effect $(P,C,Q)$ is valid if in any dynamic context the \lstinline|call/cc| is evaluated, the effect of the program text ``between'' the capture and the nearest prompt has an effect less than $(P,C,Q)$ in the partial order (to be defined). 
    \ifTR
    This is complicated as usual because the dynamic context is always \emph{outside} the \lstinline|call/cc| itself (not a subterm), and because the context likely does not appear explicitly in the program text. 
    While the full context captured is visible in
    \begin{lstlisting}
        (% t ((call/cc t ($\lambda$k. e)); event[$\alpha$]) ($\lambda$n. ...))
    \end{lstlisting}
    it is not synactically obvious in
    \begin{lstlisting}
        let f = ($\lambda$x. (call/cc t ($\lambda$k. k x))) in
        (% t ((f ()); event[$\alpha$]) ($\lambda$n. ...))
    \end{lstlisting}
    and if $f$ were invoked in multiple contexts, it becomes more difficult to predict.
    Yet in either case, it is safe for the typing of the \lstinline|call/cc| to assume the (underlying) latent effect of the continuation is $\{\alpha\}$, while adding a second \lstinline|event[$\beta$]| to the end of the prompt body would invalidate such an assumption, so the validation depends not only on what prophecy was made, but also on \emph{where in the term} the \lstinline|call/cc| occurs.
    \else
    Checking this statically is non-trivial.
    While the full context captured is visible in \lstinline|(% t ((call/cc t ($\lambda$k. e)); event[$\alpha$]) ($\lambda$n. ...))|, 
    enclosing the \lstinline|call/cc| in a function makes it non-local, and that function may be invoked in multiple contexts.
    Minor changes to the context may also (need to) break typability: if this \lstinline|call/cc|'s body requires that the context has only underlying effect $\{\alpha\}$, then adding an additional \lstinline|event| to the end of the prompt's body must make the expression untypable.
    \fi

    We can again turn to the idea of accumulation, but \emph{to the right}.
    Thus we write individual prophecies (again, ignoring tags and argument and result types of continuations) as $\mathsf{prophecy}~\chi~\mathsf{obs}~\chi'$ --- indicating that for a certain \lstinline|call/cc|, an effect of $\chi$ was prophecized, meaning that the term whose effect contains this prophecy assumed a latent effect $\chi$ for the continuation, and the effect of the term fragment between the \lstinline|call/cc| and the boundary of the term has effect $\chi'$, which we call the \emph{observation}.  We will extend sequential composition to accumulate effects to the right into the observation.  When type checking a prompt, there is no more to accumulate because the prompt is the boundary of the continuation capture: at that point, the observation reflects the \emph{actual} effect of the code \emph{statically} on control flow paths between \lstinline|call/cc| and the prompt, so the prophecy was sound if the observed effect is less than the prophecy effect.
    \looseness=-1

    With this prophecy-and-observation approach, let us consider:
    \begin{lstlisting}
        (% t (((call/cc t ($\lambda$k. e)); event[$\alpha$]); event[$\beta$]) ($\lambda$n. ...))
    \end{lstlisting}
    The type rule for \lstinline|call/cc| can assume a surrounding context effect of $(\emptyset,\emptyset,\{\alpha\beta\})$.  Let us assume for simplicity the inner body of the \lstinline|call/cc| is pure (unit effect).  Then the effect of the \lstinline|call/cc| term can be 
    $(\{\mathsf{prophecy}~(\emptyset,\emptyset,\{\alpha\beta\})~\mathsf{obs}~(\emptyset,\emptyset,I)\},\emptyset,I)$
     (writing the unit effect as simply $I$ for readability).  Then the effect of sequencing that with the \lstinline|event[$\alpha$]| (i.e., $(\emptyset,\emptyset,\{\alpha\})$) should both update the underlying effect \emph{and the observation} of the prophecy:
    $(\{\mathsf{prophecy}~(\emptyset,\emptyset,\{\alpha\beta\})~\mathsf{obs}~(\emptyset,\emptyset,\{\alpha\})\},\emptyset,\{\alpha\})$.
     Repeating this for the next event, we would get
    $(\{\mathsf{prophecy}~(\emptyset,\emptyset,\{\alpha\beta\})~\mathsf{obs}~(\emptyset,\emptyset,\{\alpha\beta\})\},\emptyset,\{\alpha\beta\})$.
    At that point, this has typed the entire body of the prompt, and the type rule for the prompt can check that the observed effect is no greater than the prophecized effect --- in this case trivially since they are equal.

\end{usecase}

\begin{usecase}[{Trouble with Tags}]
    \label{use:tags}
    Work with delimited continuations typically formalizes work without tags, but then adds them to the implementation as a ``straightforward'' extension.  There are, however, several ways in which handling multiple tags is \emph{non-trivial} in this context, warranting an explicit treatment.  
    First, nested continuations result in more subtlety when handling control effect sets in the prompt rule. An obvious change is for a prompt tagged \lstinline|t| to ignore aborts and continuation invocations (elements of the control effect set) that target a different tag (which requires tracking the target tag for each \textsf{replace} or \textsf{abort} effect).
    However, this is insufficient.
    A continuation \lstinline|k| that restores up to tag $t$ may include a latent abort to tag $t2$ --- but if the nearest prompt is tagged $t2$, this is subtle.
    Consider the continuation $k=(\mathsf{cont}^\mathsf{unit}~t~(\bullet; \mathsf{abort}~\mathit{t2}~3))$ in the context of:
    \begin{lstlisting}[escapechar=`]
        (% t2 (% t `\colorbox{white}{\tt (\% t2 (\colorbox{lightgray}{(k ())}; event[$\alpha$]) ...)}` ...) ($\lambda$n. event[$\beta$]))
    \end{lstlisting}
    \lstinline|k| is invoked nested inside multiple prompts of different tags: when \lstinline|k| is restored, it will discard and replace the inner \lstinline|t2| prompt (white background) entirely.
    This is important: after context restoration, the \lstinline|abort| inside \lstinline|k| will be executed, passing 3 to the \emph{outermost} handler and emitting event $\beta$: 
    the innermost \lstinline|t2| prompt contains an abort to \lstinline|t2|, yet the innermost handler will not execute; the \emph{outermost} handler will.
    \begin{lstlisting}[escapechar=`]
(% t2 (% t ((); abort t2 3) ...) ($\lambda$n. event[$\beta$])) $\Rightarrow^*$ ($\lambda$n. event[$\beta$]) 3 
    \end{lstlisting}
    Our initial effect for invoking continuations with further control effects (Use Case \ref{use:nestedcont}) would have the inner prompt's body effect contain $\mathsf{abort}~\mathit{t2}~\{\epsilon\}$, which our suggested prompt handling would then join into the \emph{innermost} prompt due to the matching tag.  This has two problems: it spuriously suggests the innermost prompt's handler could execute, introducing a kind of imprecision; and because that abort effect would no longer be present in the effect of the outermost prompt's body, it would be missed that the outer handler \emph{could} run, making the approach unsound. We must refine our approach for these ``jumps.''
    \looseness=-1

    We resolve this by adding one final nuance to control effect sets.  We allow control effects to be either basic (the \textsf{replace} or \textsf{abort} effects we have already seen, with target tags) or \emph{blocked} $\pblock{c}{t}$ for a control effect $c$.  A blocked control effect $\pblock{c}{t}$ is ignored by prompts (left in the control effect set) until a prompt tagged $t$ is reached --- that prompt will \emph{unblock} the effect, leaving $c$ in the control effect set of the prompt.
    When invoking a continuation, instead of simply including the latent control effects in the effect of the invocation, the type rule will include the latent control effects \emph{blocked} until the target tag of the continuation.  So in the example above, the inner prompt's body effect would instead include $\pblock{\mathsf{abort}~\mathit{t2}~\{\epsilon\}}{t}$.  This would be ignored (propagated) by the inner prompt's typing (so the inner handler would not be spuriously considered), unblocked by the middle prompt's typing, and finally resolved in the outer prompt's typing (which would trigger consideration of the outer handler).  Because the overall effect of any execution path that triggers such blocked control effects must still execute code along the way to the continuation invocation, blocked control effects still accumulate on the left.

    Prophecies raise similar issues that lead to similar introduction of blocked prophecies: if \lstinline|k| above instead captured a continuation up to a prompt tagged $\mathit{t2}$, the white-background portion of the term discarded when \lstinline|k| was invoked would not be part of the captured continuation, so it should not be incorporated into the observation part of a prophecy.  Because of this, blocked prophecies \emph{must not accumulate while blocked}.  The difference is this: blocked control effects continue to accumulate on the left because control operations do not discard code that occurs on the way to a control effect, while blocked prophecies must ``skip over'' the code discarded by control operations, which always appears to the right (later in source program order).
\end{usecase}

    \section{Continuation Effects}
    \label{sec:formal}
    \begin{figure}\small
        \centerline{$
            \begin{array}{rlp{0.46\textwidth}}
                c\in\textrm{ControlEffect}  ::= & \mathsf{replace}~\ell:Q\leadsto\tau & \textrm{Invoked continuation up to $\ell$, with prefix \& underlying effect $Q$, continuation result type $\tau$}\\
                 \mid & \mathsf{abort}~\ell~Q\leadsto\tau & \textrm{Abort up to $\ell$ after effects $Q$, throwing value of type $\tau$}\\
                 \mid & \pblock{c}{\ell} & \textrm{Blocked control effect, frozen until nearest prompt for $\ell$ (triggered inside restored continuation targeting $\ell$)}\\
                p\in\textrm{Prophecy}  ::= &\mathsf{prophecy}~\ell~\chi~\leadsto\tau~\mathsf{obs}~\chi' & \textrm{Prophecy of latent continuation effect $\chi$, effect $\chi'$ observed since point of prophecy (\lstinline|call/cc|)}\\
                 \mid & \pblock{p}{\ell} & \textrm{Prophecy blocked until $\ell$}\\
                \chi\in\textrm{ContinuationEffect}   = & 
                    \multicolumn{2}{l}{\mathsf{set}~\mathsf{Prophecy}\times\mathsf{set}~\mathsf{ControlEffect}\times\mathsf{option}~\mathsf{UnderlyingEffect} }
            \end{array}
        $}
        \caption{Grammar of continuation effects over an existing effect quantale $Q\in\textrm{UnderlyingEffect}$}
        \vspace{-1.5em}
        \label{fig:contgrammar}
    \end{figure}
    This section makes the outlines of the previous section precise, and fills in missing details (such as coordinating the type of a value thrown with the argument type of a handler).
    Figure \ref{fig:contgrammar} defines the effects of $\mathcal{C}(Q)$ derived from the examples above, for underlying effect quantale $Q$.  Continuation-aware effects of an underlying effect quantale $Q$ are effects $\chi$ of three components: a prophecy set $P$, a control effect set $C$, and an optional underlying effect $\opt{Q}$.
    Basic control effects include effects representing aborts to a tagged prompt ($\mathsf{abort}~\ell~Q\leadsto\tau$) or invoking continuations that replace the context up the nearest tagged prompt ($\mathsf{replace}~\ell:Q\leadsto\tau$), as suggested by Use Cases \ref{use:basicabort} and \ref{use:basiccallcont}.
    These versions are additionally tagged with the specific prompt tag they target ($\ell$), and each carries a type $\tau$ --- for replacement this is the result type of the restored continuation, and for aborts this is the type of the value thrown (each intuitively a kind of result type for each control behavior).
    Both control effects and prophecies may also be blocked until a prompt with a certain tag if they originate inside a continuation that was invoked (per Use Case \ref{use:tags}).  Note that blocking constructors may nest arbitrarily deeply, because one restored continuation may restore another continuation which may restore another\ldots and so on.
    For a term with effect $(P,C,\opt{Q}$):
    \begin{itemize}
        \item The prophecy set $P$ contains prophecies for all uses of \lstinline|call/cc| within the term (some possibly blocked if introduced by restoring a continuation).
        \item The control effect set $C$ describes all possible exits of the term via control operations (abort or continuation invocation).  For aborts of continuation invocation that occur directly, $C$ will contain basic control effects.  For aborts or continuation invocation that may occur in the body of a restored continuation, there will be blocked control effects.
        \item The underlying effect $\opt{Q}$ describes (an upper bound on) the underlying effect of any execution of the term that does not exit via control operator.
    \end{itemize}

    To define sequencing and join formally, we must first lift the underlying effect quantale's operators to deal with missing effects:\\
    \centerline{$
            \opt{Q_1}\rhd\opt{Q_2} = \left\{
                \begin{array}{ll}
                    \top & \textrm{if}~\opt{Q_1}=\top\vee\opt{Q_2}=\top\\
                    \bot & \textrm{if}~\opt{Q_1}=\bot\vee\opt{Q_2}=\bot\\
                    Q_1\rhd Q_2 & \textrm{otherwise}
                \end{array}
                \right.
            \;
            \opt{Q_1}\sqcup\opt{Q_2} = \left\{
                \begin{array}{ll}
                    \top & \textrm{if}~\opt{Q_1}=\top\vee\opt{Q_2}=\top\\
                    Q_1 & \textrm{if}~\opt{Q_2}=\bot\\
                    Q_2 & \textrm{if}~\opt{Q_1}=\bot\\
                    Q_1\sqcup Q_2 & \textrm{otherwise}
                \end{array}
                \right.
    $}
    We also require a way to prefix control effects with an underlying effect, to implement left accumulation (recursively), extending the ideas of Use Cases \ref{use:basicabort}, \ref{use:basiccallcont}, and \ref{use:tags} to the final definition:\\
    \centerline{$\begin{array}{rcl}
            Q\rhd \pblock{c}{\ell} &=& \pblock{Q\rhd c}{\ell}\\
             Q\rhd\mathsf{replace}~\ell:Q'\leadsto\tau&=&\mathsf{replace}~\ell:(Q\rhd Q')\leadsto\tau \\
            Q\rhd \mathsf{abort}~\ell~Q'\leadsto\tau &=& \mathsf{abort}~\ell~(Q\rhd Q')\leadsto\tau
    \end{array}$}
    and we will lift this to operate on control effect \emph{sets} and possibly-absent underlying effects:\\
    \centerline{$
            \opt{Q}\rhd C = \mathsf{if}~(\opt{Q}=\bot)~\mathsf{then}~\emptyset~\mathsf{else}~(\mathsf{map}~(Q\rhd\_)~C)
    $}
    Likewise, we must define a means of right-accumulating in prophecies:\\
    \centerline{$
            \begin{array}{l}
            \pblock{\mathsf{prophecy}~\ell~(P,C,\opt{Q})\leadsto\tau~\mathsf{obs}~(P',C',\opt{Q'})}{\ell'}\blacktriangleright\chi'' =
                \pblock{\mathsf{prophecy}~\ell~(P,C,\opt{Q})\leadsto\tau~\mathsf{obs}~(P',C',\opt{Q'})}{\ell'}\arcr
            \mathsf{prophecy}~\ell~(P,C,\opt{Q})\leadsto\tau~\mathsf{obs}~(P',C',\opt{Q'})\blacktriangleright(P'',C'',\opt{Q''}) \arcr
            \qquad=\mathsf{prophecy}~\ell~(P,C,\opt{Q})\leadsto\tau~\mathsf{obs}~((P'\blacktriangleright(P'',C'',\opt{Q''}))\cup P'',C'\cup(\opt{Q'}\rhd C''),\opt{Q'}\rhd \opt{Q''})
            \end{array}
    $}
    which we also lift to operate on prophecy \emph{sets} (not shown, but analogous to the lifting of left-accumulation).
    Finally, this is enough to define sequencing and join:\\
    \centerline{$
\begin{array}{rcl}
            (P_1,C_1,\opt{Q_1})\sqcup(P_2,C_2,\opt{Q_2})&=&(P_1\cup P_2, C_1\cup C_2, \opt{Q_1}\sqcup \opt{Q_2})
            \\
            (P_1,C_1,\opt{Q_1})\rhd(P_2,C_2,\opt{Q_2})&=&((P_1\blacktriangleright(P_2,C_2,\opt{Q_2}))\cup P_2, C_1\cup(\opt{Q_1}\rhd C_2), \opt{Q_1}\rhd \opt{Q_2})
            \end{array}
    $}
    Joins are implemented component-wise, using set union on prophecy or control effect sets, and the (option-lifted) join from the underlying effect quantale.  The sequencing operator, and its relation to the accumulation on prophecies, is a bit complex and warrants some further explanation.
    The sequence operator $\rhd$ is defined according to the ideas driven by Use Cases \ref{use:basicabort}, \ref{use:basiccallcont}, and \ref{use:capture}. Underlying effects are sequenced by reusing the underlying effect quantale.
    Control effects are handled by left-accumulating.  In fact, in the case where there are no prophecies (\lstinline|call/cc|s) involved, the handling of the control effect sets and underlying effects is exactly as in Use Case \ref{use:basicabort}, just extended for the additional control effects.
    Deferring the right-accumulation $\blacktriangleright$ for one further moment, the full sequencing operator produces a resulting prophecy set as the union of prophecies from the second effect (which are unaffected by the first) with the result of the first effect's prophecies accumulating effects from the right, since that effect will (in the type rules) correspond to behavior that will be part of the captured continuation.
    The right accumulation for prophecies implemented by $\blacktriangleright$ essentially just implements sequential composition of the observation with the accumulated effect --- the recursive use of $\blacktriangleright$ could equivalently just be $\rhd$, but direct recursion is easier to prove things about than mutual recursion.  As suggested by Use Case \ref{use:tags}, blocked prophecies do not accumulate.
    It is easy to confirm that the unit element for $\rhd$ is $(\emptyset,\emptyset,I)$, where $I$ is the identity of the underlying effect quantale.
    
    \begin{remark}[Sets of Underlying Effects]
        \label{rmk:unions}
        We have described most of the structure of lifting an effect quantale to support delimited control: sequencing and join operators that distribute over each other, along with a unit for sequencing.
        We have not yet stated whether the result $\mathcal{C}(Q)$ of the lifting is an effect quantale.
        As described so far, it is not quite an effect quantale: there is no single distinguished top in the partial order induced by $\sqcup$: for any effect $(P,C,\opt{Q})$, a larger effect can be obtained by adding new control effects or prophecies. And because the underlying $\top$ can appear in multiple ways, conceptually many different incomparable effects should be considered erroneous.
        Introducing a distinguished top element $\mathsf{Err}$, and wrapping the sequencing and join definitions above with an additional operation producing $\mathsf{Err}$ any time the operations above produce effects containing underlying top.  This would produce an effect quantale, but we take an alternative approach that also enhances flexibility, without adding special cases for $\top$ throughout the system.  

        Using \emph{sets} of structures containing underlying effects can lead to the extra set structure being ``too picky'' in distinguishing effects, in the sense of distinguishing intuitively equivalent effects.
        Consider the following join:
        \\\centerline{$(\emptyset,\{\mathsf{abort}~\ell~\{\alpha\}\},\bot)\sqcup(\emptyset,\{\mathsf{abort}~\ell~\{\beta\}\},\bot) = (\emptyset,\{\mathsf{abort}~\ell~\{\alpha\},\mathsf{abort}~\ell~\{\beta\}\},\bot)$}
        which could be the effect of \lstinline|if c (event[$\alpha$]; abort $\ell$ 3) (event[$\beta$]; abort $\ell$ 3)|.
        Their join is \emph{not} the very similar
        $(\emptyset,\{\mathsf{abort}~\ell~\{\alpha,\beta\}\})$, which also indicates an abort after one of the two same underlying effects, and is the effect of a minor rewrite of the last expression: \lstinline|(if c (event[$\alpha$]) (event[$\beta$])); abort $\ell$ 3|.
        Both effects indicate executing $\alpha$ or $\beta$ before aborting, and replacing one with the other inside a prompt body will not change the \emph{prompt}'s effect (per Use Cases \ref{use:basicabort} and \ref{use:basiccallcont}): the prompt will sequence each with the handler effect, and join them together). Yet because $\sqcup$ is defined using set union, they are incomparable in the induced partial order $x\sqsubseteq y\leftrightarrow x\sqcup y=y$, because joining them yields a \emph{third} effect:
        $(\emptyset,\{\mathsf{abort}~\ell~\{\alpha\},\mathsf{abort}~\ell~\{\beta\},\mathsf{abort}~\ell~\{\alpha,\beta\}\},\bot)$. 
        This is not a valuable distinction. We would like at least
        {$(\emptyset,\{\mathsf{abort}~\ell~\{\alpha\},\mathsf{abort}~\ell~\{\beta\}\},\bot)\sqsubseteq(\emptyset,\{\mathsf{abort}~\ell~\{\alpha,\beta\}\})$}
        because every abort prefix on the left is over-approximated by an abort prefix on the right.\footnote{While in this $\mathcal{T}(\Sigma)$ example the effects are equivalent, examples in other effect quantales only justify $\sqsubseteq$.}
        One way to achieve this would be to replace the set union of control effect sets or prophecy sets in the sequencing and join operators with operators that also recursively joined the underlying effects of all aborts to the same tag, joined the underlying effects of all replacements to the same tag, and joined the observed effects of all prophecies to the same tag with the same prediction (plus joining types). 
        The partial order induced by this modification would establish this desirable order.
        \looseness=-1

        Directly extending $\sqcup$ and $\rhd$ as given to \emph{both} recursively join when combining sets and lift underlying $\top$ to a special $\mathsf{Err}$ would yield a proper effect quantale, but add significant complexity that is orthogonal to the key ideas of our approach.
        \emph{Instead}, we make two adjustments.  First, we define a direct partial order on continuation effects in Figure \ref{fig:directPO}, where an effect $\chi$ is less than another effect $\chi'$ if every prophecy observation, abort effect, replacement effect, or underlying effect in $\chi$ is over-approximated by \emph{some} corresponding component in $\chi'$.  This captures the intuition behind the desirable ordering outlined above. We also define a corresponding equivalence relation $\approx$ on continuation effects, which equates all continuation effects containing an underlying $\top$ (anywhere in the effect) and otherwise uses the partial order to induce equivalence.
        In the type rules considered in Section \ref{sec:rules}, this is the notion of subeffecting used (rather than the traditional partial order derived from a join), and all effects are considered modulo the equivalence relation.
        Quotienting $\mathcal{C}(Q)$ by the equivalence $\approx$ yields a proper effect quantale, equivalent to the direct but verbose version outlined above. See 
        \ifTR
        Section \ref{sec:modulo}
        \else
        our technical report~\cite{controltr})
        \fi
        for further details.
    \end{remark}

    \begin{figure*}\small
        \begin{mathpar}
            C_1\sqsubseteq C_2 = \bigwedge\left\{\begin{array}{l}
                \forall\ell,Q,\tau\ldotp \mathsf{replace}~\ell:Q\leadsto \tau\in C_1\Rightarrow \exists Q'\ldotp \mathsf{replace}~\ell:Q'\leadsto\tau\in C_2 \land Q\sqsubseteq Q'\\
                \forall\ell,Q,\tau\ldotp \mathsf{abort}~\ell~Q\leadsto\tau\in C_1\Rightarrow \exists Q'\ldotp \mathsf{abort}~\ell~Q'\leadsto\tau\in C_2 \land Q\sqsubseteq Q'\\
            \end{array}\right.
            \\
            P_1\sqsubseteq P_2 = \begin{array}{l}
                (\forall\ell,\chi,\chi',\tau\ldotp\mathsf{prophecy}~\ell~\chi~\leadsto\tau~\mathsf{obs}~\chi'\in P_1\Rightarrow
                \exists \chi''\ldotp\mathsf{prophecy}~\ell~\chi~\leadsto\tau~\mathsf{obs}~\chi''\in P_2\land \chi'\sqsubseteq\chi'')\land
                \\
                (\forall\ell,\ell',\chi,\chi',\tau\ldotp \pblock{\mathsf{prophecy}~\ell~\chi~\leadsto\tau~\mathsf{obs}~\chi'}{\ell'}\in P_1\Rightarrow
                \exists \chi''\ldotp\pblock{\mathsf{prophecy}~\ell~\chi~\leadsto\tau~\mathsf{obs}~\chi''}{\ell'}\in P_2\land \chi'\sqsubseteq\chi'')
            \end{array}
            \\
    (P,C,\opt{Q})\sqsubseteq(P',C',\opt{Q'}) \leftrightarrow P\sqsubseteq P' \land C\sqsubseteq C' \land \opt{Q}\sqsubseteq\opt{Q'} 
    \\
        \chi\approx\chi' =
        (\mathsf{NoUnderlyingTop}(\chi,\chi')\land \chi\sqsubseteq\chi'\land\chi'\sqsubseteq\chi) 
        \vee
        (\mathsf{UnderlyingTop}(\chi)\land \mathsf{UnderlyingTop}(\chi'))
        \end{mathpar}
        \vspace{-2em}
        \caption{Direct partial order and equivalence on continuation effects.
        \vspace{-1em}
        }
        \label{fig:directPO}
    \end{figure*} 

    \label{sec:rules}
    We consider a type-and-effect system for a language with the constructs from Figure \ref{fig:opsem}.  Our extended technical report~\cite{controltr} extends these results for composable continuations.  Our expressions and types are:
    \looseness=-1
    \vspace{-1ex}
    \[\begin{array}{r@{\ }l}
    \textrm{Expressions} & e ::= p \mid \lambda x\ldotp e \mid (e\;e)\mid (\%\;\ell\;E\;v) \mid (\mathsf{call/cc}\;\ell\;e) \mid (\mathsf{abort}\;\ell\;e) \mid (\mathsf{cont}\;\ell\;E) \mid \mathsf{if}\;e\;e\;e\\
    \textrm{Values} & v ::= (\lambda x\ldotp e) \mid \mathsf{cont}\;\ell\;E \mid v_p\\
    \textrm{Types} & \tau,\gamma ::= \mathsf{unit} \mid \mathsf{bool} \mid
        \tau\xrightarrow{\chi}\tau \mid (\mathsf{cont}\;\ell\;\tau\;\chi\;\tau) \mid \mu X\ldotp \tau
    \end{array}\]
    $p$ and $v_p$ are parameters of the system following Gordon's work~\cite{ecoop17,quantalesjournal}: primitives (which can include operations such as locking primitives or \lstinline|event[$\alpha$]|) and primitive values (e.g., for encoding locations). 
    Gordon's soundness framework also parameterizes operational semantics by an abstract notion of state, and semantics for primitives manipulating state; we assume (and later use) a similar framework, which admits a range of concrete examples.

    Types include common primitive types, function types with latent effects, equirecursive types (needed for typing loops), as well as a type for continuation values that we discuss with the type rule for invoking continuations.
    The type rules for lambda abstraction, function application, and conditionals are as in Section \ref{sec:bg} (though using continuation effects), so we do not discuss them further.   
    Typing uses of primitives requires additional rules and parameters to define the additional types (e.g., lock or location types) and their relationship to operational primitives, following Gordon~\cite{ecoop17,quantalesjournal}.  
    For intuition, readers may assume $p$ is simply the \lstinline|event[_]| primitive from our running example.
    We include subtyping ($<:$), including the standard type-and-effect subsumption rule, and function subtyping that is covariant in the body's latent effect. 
    Figure \ref{fig:controlops} gives central type rules for this paper, for prompts, aborts, continuation capture, and continuation invocation.

    \begin{figure*}[t]\small
    \begin{mathpar}
        \inferrule*[left=V-Effects]{
            \forall\tau',Q\ldotp(\mathsf{abort}~\ell~Q\leadsto\tau')\in \punblock{C}{\ell}\Rightarrow  
            \tau'<:\sigma 
            \quad
            \forall Q,\tau'\ldotp(\mathsf{replace}~\ell:Q\leadsto\tau')\in \punblock{C}{\ell}\Rightarrow  
            \tau'<:\tau
            \\
            \left(\begin{array}{l}
                \forall \chi_{proph},\tau',P_p,C_p,\opt{Q_p}\ldotp 
                    \mathsf{prophecy}~\ell~\chi_{proph}\leadsto\tau'~\mathsf{obs}~(P_p,C_p,\opt{Q_p})\in \punblock{P}{\ell}\Rightarrow
                \arcr
                \qquad 
                \punblock{P_p}{\ell}\sqsubseteq\punblock{P_{proph}}{\ell}\land
                \punblock{C_p}{\ell}\sqsubseteq \punblock{C_{proph}}{\ell}\land \opt{Q_p}\sqsubseteq \opt{Q_{proph}}
                \land 
                \tau<:\tau'
            \end{array}\right)
        }{
            \mathsf{validEffects}(P,C,\opt{Q},\ell,\tau,\sigma)
        }
        \and
        \inferrule*[left=T-Prompt]{
            \Gamma\vdash e : \tau \mid (P,C,\opt{Q})\\
            \Gamma\vdash h : \sigma\xrightarrow{(\emptyset,\emptyset,Q_h)}\tau\mid I\\
            \mathsf{validEffects}(P,C,\opt{Q},\ell,\tau,\sigma)
        }{
            \Gamma\vdash (\%~\ell~e~h) : \tau \mid (\punblock{P}{\ell}\setminus^{Q_h}\ell, \punblock{C}{\ell}\setminus\ell, \opt{Q}\sqcup\left(\bigsqcup \punblock{C}{\ell}|^{Q_h}_\ell\right) )
        }
        \and
        \inferrule*[left=T-CallCont]{
            \mathsf{NonTrivial}(\chi_k)\\
            \Gamma\vdash e : (\mathsf{cont}\;\ell\;\tau\;\chi_k\;\gamma)\overset{\chi}{\rightarrow}\tau \mid \chi_e
        }{
            \Gamma\vdash (\mathsf{call/cc}\;\ell\;e): \tau \mid (\chi_e\rhd\chi)\rhd(\{\prophecy{\ell~\chi_k\leadsto\gamma}{(\emptyset,\emptyset,I)}\},\emptyset,I)
        }
        \and
        \inferrule*[left=T-AppCont]{
            \Gamma\vdash k : \mathsf{cont}\;\ell\;\tau'\;(P,C,\opt{Q})~\tau'' \mid \chi_k\\
            \Gamma\vdash e : \tau' \mid \chi_e
        }{
            \Gamma\vdash (k\;e) : \tau \mid \chi_k\rhd\chi_e\rhd(\pblock{P}{\ell},\pblock{C}{\ell}\cup(\opt{Q}\rhd\{\replace{\ell:I\leadsto\tau''})\},\bot)
        }
        \and
        \inferrule*[left=T-Abort]{
            \Gamma\vdash e : \tau \mid \chi_e
        }{
            \Gamma\vdash \mathsf{abort}~\ell~e : \sigma \mid \chi_e\rhd(\emptyset,\{\mathsf{abort}~\ell~I\leadsto\tau\},\bot)
        }
    \end{mathpar}
    \vspace{-2em}
    \caption{Typing control operators with continuation effects.
    \vspace{-1em}
    }
    \label{fig:controlops}
    \end{figure*}

    We will discuss the type rules in relation to the Use Cases from Section \ref{sec:usecases}.

    \textsc{T-Abort} handles Use Case \ref{use:basicabort}.  As suggested in that discussion, the effect of an abort is to introduce a control effect signifying an abort to the targeted label, with an absent underlying effect.  The initial prefix of the abort is ``empty'' --- the underlying unit effect $I$ --- and the control effect tracks the type of the thrown value.  The overall effect of an \lstinline|abort| expression sequences this after the effect of reducing $e$ to a value, since this occurs earlier in evaluation order than the abort operation itself. A subtlety worth noting, is that this also ensures a \lstinline|call/cc| inside $e$ will correctly accumulate the pending abort in the captured context.
    \looseness=-1

    \textsc{T-AppCont} handles Use Cases \ref{use:basiccallcont} and \ref{use:nestedcont}.
    First consider the basic case where the invoked continuation has only simple underlying effects (i.e., $P$ and $C$ in the continuation's latent effect are both $\emptyset$).  As with \textsc{T-Abort}, subterms are reduced to values before the control behavior occurs, so those effects are sequenced before the control behavior itself.  With that taken care of, we may temporarily assume both $k$ and $e$ are already syntactic values to simplify discussion of effect.  In this case the rule simplifies to exactly what the example in Use Case \ref{use:basiccallcont} suggested: the underlying effect is absent (because the term does not return normally, but via a control behavior), and a control effect is introduced reflecting that a continuation with underlying effect $\opt{Q}$ is invoked.  A subtlety here is that we cannot simply write $\mathsf{replace}~\ell:\opt{Q}\leadsto\tau''$, because $\opt{Q}$ may be $\bot$, which would make that control effect invalid.  Instead we (ab)use the left-accumulation operator on control effects: prefixing the \emph{unit-effect} replacement effect with $\opt{Q}$ will give the expected result when $\opt{Q}$ is present, and otherwise result in the empty set.
    The replacement effect also records the result type of the invoked continuation, and the rule also ensures the argument provided to the continuation is the expected type.
    \looseness=-1

    \textsc{T-AppCont} also handles Use Case \ref{use:nestedcont}, adjusted per Use Case \ref{use:tags}: the latent control effects of the continuation are included, but \emph{blocked until $\ell$}, to ensure no prompt rule (discussed shortly) resolves those behaviors before the behaviors escape a prompt tagged $\ell$.
    Unlike the discussion in Section \ref{sec:usecases}, this finished rule also supports the case where the restored continuation contains (possibly-nested) uses of \lstinline|call/cc|, blocking the latent prophecies as well. 
    \looseness=-1

    The conclusion of \textsc{T-AppCont} critically overloads the syntax for constructing blocked prophecies and control effects, to block prophecy \emph{sets} and control effect \emph{sets}.  This overload \emph{almost} maps the appropriate blocking constructor over each set --- however, it first checks that this will not result in a control effect of the form $\pblock{\pblock{c}{\ell}}{\ell}$ for some tag $\ell$ (similarly for prophecies).  Such a control effect would represent a control effect that should propagate directly through \emph{two} prompts tagged $\ell$, but this is dynamically impossible: any number of nested restorations of continuations to the same tag remains within the same prompt, e.g.:
    \begin{lstlisting}
(% $\ell$ ((cont $\ell$ ($\bullet$; ((cont $\ell$ ($\bullet$; abort $\ell$ 5)) 4))) 3) h)
$\Rightarrow$ (% $\ell$ (3; ((cont $\ell$ ($\bullet$; abort $\ell$ 5)) 4)) h) $\Rightarrow$ (% $\ell$ ((cont $\ell$ ($\bullet$; abort $\ell$ 5)) 4) h)
$\Rightarrow$ (% $\ell$ (4; abort $\ell$ 5) h) $\Rightarrow$ (% $\ell$ (abort $\ell$ 5) h) $\Rightarrow$ (h 5)
    \end{lstlisting}
    Na\"ively mapping the blocking constructor would yield the control effect set 
    $\{\pblock{\pblock{\mathsf{abort}~\ell~I\leadsto\mathsf{nat}}{\ell}}{\ell}\}$ for the body; our discussion with Use Case \ref{use:tags} about each prompt removing one layer of blocking (which we will see does inform \textsc{T-Prompt}) would then not pair the abort with the local handler that is invoked.  With the modified mapping, the simplification of the control effect set is instead $\{\pblock{\mathsf{abort}~\ell~I\leadsto\mathsf{nat}}{\ell}\}$ (only one layer of blocking), which will match the \lstinline|abort| to the correct handler.

    \textsc{T-CallCont} is a bit more subtle.
    Standard for any type rule for \lstinline|call/cc|, the rule ensures the result type of the expression itself ($\tau$) is also the return type of the body function (for executions that return normally) and the argument type assumed for the continuation (since the location of the \lstinline|call/cc| becomes the hole the argument replaces at invocation).
    As suggested in the discussion of Use Case \ref{use:capture}, the effect arising from the use of \lstinline|call/cc| itself is a prophecy effect, recording the assumed latent effect of the captured continuation and the assumed result type of the captured continuation --- both of which make their way into the assumed type of the argument to the \lstinline|call/cc| body.
    The initial observation is empty, because this effect corresponds intuitively to the point in the execution from which the prediction begins --- but this point is the heart of another key subtlety.
    As with other rules, the subterm $e$ must be reduced before anything else, so its effect is sequenced before others.  
    But na\"ively ordering the body's (latent) effect would place it after the prophecy, as dynamically the continuation is captured before the body is evaluated (since the continuation becomes the argument in \textsc{E-CallCC}).  However, this would result in the prophecy effect observing the body of the \lstinline|call/cc| --- which is incorrect, as that behavior will not be part of the captured continuation.  Thus we place the prophecy \emph{after} the body's latent effect.

    For a small variation on Use Case \ref{use:capture}'s first example, this gives us:
    \\\centerline{$
        (\%\; t\; \underbrace{(\overbrace{(\mathsf{call/cc}\;t\;(\lambda k. e))}^{\chi_e\rhd(\{\mathsf{prophecy}\;t\;(\emptyset,\emptyset,\{\alpha\}\;\mathsf{obs}\;(\emptyset,\emptyset,I))\},\emptyset,I)};\;\overbrace{\mathsf{event}[\alpha]}^{(\emptyset,\emptyset,\{\alpha\})})}_{\chi_e\rhd(\{\mathsf{prophecy}\;t\;(\emptyset,\emptyset,\{\alpha\}\;\mathsf{obs}\;(\emptyset,\emptyset,\{\alpha\}))\},\emptyset,\{\alpha\})}\; ...)
    $}
    The individual effects of the \lstinline|call/cc| and \lstinline|event| expressions (given above the term) simplify to the effect below the term. The prophecy from the \lstinline|call/cc| observes the event that would be captured in its context.  Because $e$'s effect $\chi_e$ is to the \emph{left} of the resulting prophecy, $e$'s non-captured behavior is \emph{not} observed, and the prophecy under the term will appear in the prophecy set of the overall prompt's body.

    \textsc{T-CallCont} has one extra antecedent, constraining the prophecy to predict a \emph{non-trivial} effect, to avoid degeneracy.  The simplest problematic prediction would be to predict an effect of $(\emptyset,\emptyset,\bot)$.  Consider the effect of the code \lstinline|event[$\alpha$]; (k ())|.  The effect would be $(\emptyset,\emptyset,\{\alpha\})\rhd(\emptyset,\emptyset,\bot)=(\emptyset,\emptyset,\bot)$.  This is problematic: the term \emph{does} have a behavior, but the effect reflects \emph{no} behavior.  This could be introduced by a circularity:
    \begin{lstlisting}
        (% t (let k = (call/cc t ($\lambda$k. k)) in (event[$\alpha$]; (k ()))) ...)
    \end{lstlisting}
    This term is in fact the macro-expansion for an infinite loop that executes the event forever.  Assuming the degenerate latent effect in the \lstinline|call/cc| gives \lstinline|k| the degenerate effect, which gives the body of the the let-expression --- the context captured by \lstinline|call/cc| --- a degenerate effect, allowing the observed effect to match the prophecy (both degenerate).  Requiring a non-empty control effect set or underlying effect avoids this collapse. (This is not a termination-sensitivity issue; a terminating while loop has the same challenge, but a larger term.)
    \looseness=-1

    \begin{figure*}[t]\small
        \begin{mathpar}\scriptsize
            P\setminus^Q\ell = \{ p\setminus^Q\ell \mid p\in P \land \mathsf{OuterTag}(p)\neq\ell \}
            \quad
            C\setminus\ell =\{c\in C\mid \mathsf{OuterTag}(c)\neq\ell \} 
            \\
            C|^Q_\ell = \{Q'\rhd Q \mid \mathsf{abort}~\ell~Q'\leadsto\_\in C\}\cup\{ Q' \mid\mathsf{replace}~\ell:Q'\leadsto\_\in C\}
            \\
            \begin{array}{r@{~=~}l}
                \punblock{\mathsf{prophecy}~\ell'~(P,C,Q)\leadsto\tau~\mathsf{obs}~(P',C',Q')}{\ell} & \mathsf{prophecy}~\ell'~(P,C,Q)\leadsto\tau~\mathsf{obs}~(\punblock{P'}{\ell},\punblock{C'}{\ell},Q')\arcr
                \punblock{\pblock{\mathsf{prophecy}~\ell'~(P,C,Q)\leadsto\tau~\mathsf{obs}~(P',C',Q')}{\ell}}{\ell} & \mathsf{prophecy}~\ell'~(P,C,Q)\leadsto\tau~\mathsf{obs}~(\punblock{P'}{\ell},\punblock{C'}{\ell},Q')\arcr
                \punblock{
                    \pblock{\mathsf{prophecy}~\ell'~(P,C,Q)\leadsto\tau~\mathsf{obs}~(P',C',Q')}{\ell''}
                }{\ell} & \pblock{\mathsf{prophecy}~\ell'~(P,C,Q)\leadsto\tau~\mathsf{obs}~(P',C',Q')}{\ell''}~\quad\textrm{(if $\ell\neq\ell''$)}
            \end{array}\\
            \begin{array}{r@{~=~}l}
                (\mathsf{prophecy}~\ell'~\chi\leadsto\tau~\mathsf{obs}~(P',C',Q'))\setminus^Q\ell &
                \mathsf{prophecy}~\ell'~\chi\leadsto\tau~\mathsf{obs}~(
                    P'\setminus^Q\ell,
                    C'\setminus\ell,
                    Q'\sqcup(C'|^Q_\ell)
                )
                \arcr
                (\pblock{\mathsf{prophecy}~\ell'~\chi\leadsto\tau~\mathsf{obs}~(P',C',Q')}{\ell})\setminus^Q\ell &
                \mathsf{prophecy}~\ell'~\chi\leadsto\tau~\mathsf{obs}~(
                    P'\setminus^Q\ell,
                    C'\setminus\ell,
                    Q'\sqcup(C'|^Q_\ell)
                )
                \arcr
                (\mathsf{prophecy}~\ell'~\chi\leadsto\tau~\mathsf{obs}~(P',C',Q')~\mathsf{until}~\ell'')\setminus^Q\ell &
                \mathsf{prophecy}~\ell'~\chi\leadsto\tau~\mathsf{obs}~(
                    P',
                    C',
                    Q')
                )~\mathsf{until}~\ell''\quad(\textrm{if}~\ell\neq\ell'')
            \end{array}
        \end{mathpar}
        \vspace{-1em}
        \caption{Auxilliary definitions used by type rules.}
        \label{fig:aux}
    \end{figure*}

    \textsc{T-Prompt} is the most complex type rule in the system, because it serves many roles.  In addition to giving an overall type and effect to the prompt, it must check that:
    \begin{itemize}
        \item Any \lstinline|abort| to this handler throws values whose type is a subtype of the handler's argument.
        \item Any continuation invoked in the body that will restore up to this prompt has a result type that is a subtype of the prompt's own result type.
        \item Any \lstinline|call/cc| that captures a continuation delimited by this prompt was typed assuming the result was a \emph{supertype} of the prompt's actual result type.
        \item Any \lstinline|call/cc| that captures a continuation delimited by this prompt was typed assuming a valid latent effect for that continuation (i.e., that prophecies targeting that tag are upper-bounds on the observations).
    \end{itemize}
    The first three checks are handled by the subtyping constraints in the auxiliary judgment \textsc{V-Effects}.  Notice that the antecedents of \textsc{V-Effects} quantify over elements of \emph{unblocked} control effect sets and prophecy sets.  Unblocking, written $\punblock{-}{\ell}$ is defined for prophecies in Figure \ref{fig:aux}: it maps a corresponding per-element unblocking operation, which is a no-op on elements that were not blocked, a no-op on elements blocked until a \emph{different} tag, and strips off a block constructor for $\ell$ if that is the outermost constructor. Unblocking for control effect sets is defined analogously.  Intuitively, this corresponds to the fact that any control effect or prophecy blocked until a prompt for $\ell$ has now \emph{reached} a prompt for $\ell$.

    The last of the checks above is handled by the prophecy-related antecedent of \textsc{V-Effects}, which \emph{nearly} checks that the observations for a given prophecy are a subeffect of what was predicted --- that would be $(P_p,C_p,Q_p)\sqsubseteq\chi_{proph}$, which would be sound but overly conservative.  Instead, the comparison of the prophecy and observation, compares the \emph{unblocked} versions of prophecy and control effect sets: $\punblock{P_p}{\ell}\sqsubseteq\punblock{P_{proph}}{\ell}$ and $\punblock{C_p}{\ell}\sqsubseteq\punblock{P_{proph}}{\ell}$.
    This is useful because if a context captured by \lstinline|call/cc| contains an invocation of itself, the prophecy arising from \textsc{T-CallCont} will observe a \emph{blocked} version of \emph{its own prophecy} (from \textsc{T-AppCont}), making the naive subeffect check too conservative: no prophecy can predict a blocked version of itself. Rather than being an esoteric concern, this is actually quite practical: encodings of loops using delimited continuations do exactly this.
    Unblocking for the prompt's tag in \textsc{V-Effects} resolves this: just like the quantifications unblock because those sets have now reached the corresponding prompt, the prophecy validation must reflect that the observation has now ``reached'' the corresponding prompt.

    Let us revisit the example of Use Case \ref{use:capture} discussed above with \textsc{T-CallCont}. The prophecy in that derivation contains no blocked prophecies, so \textsc{T-Prompt} will effectively check that the observation is less than the prophecy --- that $(\emptyset,\emptyset,\{\alpha\})\sqsubseteq(\emptyset,\emptyset,\{\alpha\})$, which trivially succeeds because the prophecy predicted its context's behavior exactly.

    We can see the role of unblocking more clearly by revisiting the motivating example for requiring prophecized effects to be non-trivial (eliding types for brevity).
    \[
        (\%\; t\; (\mathsf{let}\; k = \overbrace{(\mathsf{call/cc}\; t\; (\lambda k\ldotp k))}^{(\{\mathsf{prophecy}\;t\;(\emptyset,\{\mathsf{replace}\;t\;\alpha^*\},\bot)\;\mathsf{obs}\;(\emptyset,\emptyset,I))\},\emptyset,I)}\; \mathsf{in}\; (\overbrace{\mathsf{event}[\alpha]}^{(\emptyset,\emptyset,\{\alpha\})}; (\overbrace{k\; ()}^{(\emptyset,\{\pblock{\mathsf{replace}\;t\;\alpha^*}{t}\},\bot)}))) ...)
    \]
    The effect of the prompt's body is the sequencing (with $\rhd$) of the three individual subterm effects written above the term fragments, writing $\alpha^*$ for the set of all finite traces consisting of only $\alpha$.  Simplifying the sequencing of the two right-most effects first will result in a control effect set $\{\pblock{\mathsf{replace}\;t\;(\{\alpha\}\rhd\alpha^*)}{t}\}$ (invoke the continuation after an $\alpha$ event, with aggregate behavior of some non-zero finite number of $\alpha$ events). 
    Simplifying again, the prophecy will observe this, and \textsc{T-Prompt} will check (via \textsc{V-Effects}) that
    $\punblock{\{\pblock{\mathsf{replace}\;t\;(\{\alpha\}\rhd\alpha^*)}{t}\}}{t}\sqsubseteq\punblock{\{\mathsf{replace}\;t\;\alpha^*\}}{t}$,
    which is equivalent to checking
    ${\{{\mathsf{replace}\;t\;(\{\alpha\}\rhd\alpha^*)}\}}\sqsubseteq{\{\mathsf{replace}\;t\;\alpha^*\}}$,
    which is true because $\{\alpha\}\rhd\alpha^*\sqsubseteq\alpha^*$ in $\mathcal{T}(\Sigma)$ (the set of non-empty finite traces containing only $\alpha$ is a subset of the set of possibly-empty finite traces containing only $\alpha$).

    Finally, \textsc{T-Prompt} must give a type and effect to the prompt expression itself. The underlying effect must include (1) the body's underlying effect, (2) the effect of any possible paths through the body that abort to the local handler and execute it (including those resulting from invoking continuations up to this prompt prior to aborting), and (3) the effect of any possible paths through the body that invoke one or more continuations up to this prompt before returning normally.  (1) is simply $\opt{Q}$. (2) is the result of sequencing any abort prefix for $\ell$ in the (unblocked) control effect set (which will incorporate aborts resulting from continuation invocation as well).  (3) is simply the set of replacement effects in the (unblocked) $C$.  (2) and (3) are computed via a projection operator $-|^{Q_h}_\ell$ applied to the unblocked control effect set, defined in Figure \ref{fig:aux}.  The superscript on the operator is the underlying effect of the handler (constrained to have no control effects), and the subscript is the choice of relevant prompt tag.  The resulting set of underlying effects is joined together with the body's underlying effect.

    The prophecy and control effect sets of the overall prompt should propagate prophecies and control effects targeting other prompts, and remove those targeting the prompt at hand.  To this end, the conclusion of \textsc{T-Prompt} unblocks both sets and then removes (1) prophecies related to this prompt (which have now been validated) and (2) control effects related to this prompt (which have been incorporated into the prompt's underlying effect, because they address control behaviors scoped to this prompt).  The (unblocked) control effects are simply filtered with $-\setminus\ell$ (Figure \ref{fig:aux}), which retains only basic control effects targeting other tags and control effects still blocked until other tags. ({Thus, the nested control effect from Use Case \ref{use:tags} appears blocked in the effect of the innermost prompt.})
    The prophecy set is filtered similarly, but the filtered results are also transformed --- the remaining observations must be adapted to model the changes to effects from going ``through'' this prompt.  Thus the filtering operation $-\setminus^Q\ell$ on prophecy sets (Figure \ref{fig:aux}) selects those prophecies related to other prompts, then recursively transforms their observations --- recursively filtering (and transforming) the observed prophecy and control effect sets, and joining transformations of their content into the observations' underlying effect, just as in the conclusion of \textsc{T-Prompt} itself.
    This is important due to interactions between tags: a prophecy to an outer prompt may observe in its context aborts to an inner prompt: $-\setminus^Q\ell$ joins such abort prefixes with the handler that would run, and joins that into the underlying effect of the prophecy.
    \looseness=-1

    \ifTR
    \subsection{$\mathcal{C}(Q)$ is Almost an Effect Quantale}
    \label{sec:modulo}
    Figure \ref{fig:directPO} defines an explicit preorder on continuation effects rather than inheriting one directly from $\sqcup$ as effect quantales do; this essentially permits certain underlying effects tracked by the continuation effects to be over-approximated by ``greater'' effects from the underlying effect quantale, rather than from the naive partial order induced by tracking sets for prophecies and control effects.  
    For example, this permits a closure with a latent effect indicating it aborts after underlying effect $A$ to be passed as an argument whose formal parameter type has a latent effect indicating the closure should abort after underlying effect $B$ as long as $A\sqsubseteq B$.\footnote{Inheriting $\sqsubseteq$ from $\sqcup$ would require the parameter type to list both $A$ and $B$.}
    \ifTR
    This is preferable to requiring the formal parameter to declare multiple control effects for both $A$ and $B$ as acceptable, when any pre-abort computation with effect $A$ also already has effect $B$; this relaxation reflects natural subsumption of underlying effects into the continuation-aware effects.\footnote{We could instead induce a partial order from the join, as we do with the underlying effect quantale, but this would be too ``picky'': code aborting after $A$ should count as code aborting after $B$, when $A\sqsubseteq B$ in the underyling effect quantale.}
    \fi
    The figure also defines an equivalence relation $\approx$ as a subset of mutual over-approximation: two effects are equivalent if each over-approximates the other \emph{or} both contain underlying $\top$, indicating that both contain a type error with respect to the underlying effect quantale.  

    \ifTR
    The most natural characterization of continuation effects as a generalization of effect quantales is as an \emph{effect quantale modulo equivalence}:
    \begin{definition}[Effect Quantale Modulo Equivalence]
        \label{def:eqme}
        An \emph{effect quantale modulo equivalence} is a structure $(E,\approx,\sqcup,T,\rhd,I)$ where:
        $\approx$ is an equivalence relation on $E$;
        $(E,\sqcup)$ is a join semilattice up to $\approx$, with greatest element (equivalence class) $T$;
        $(E,\rhd,I)$ is a monoid up to $\approx$;
        membership in $T$ is propagated by sequencing ($\forall x,t\ldotp t\in T\Rightarrow (x\rhd t)\in T \land (t\rhd x)\in T$);
        and $\rhd$ distributes over $\sqcup$ on both sides ($a\rhd(b\sqcup c)\approx(a\rhd b)\sqcup(a\rhd c)$ and
        $(a\sqcup b)\rhd c \approx (a\rhd c)\sqcup(b\rhd c)$).
    \end{definition}
    
    In particular, $\rhd$ is associative (up to equality) with unit element $(\emptyset,\emptyset,I_Q)$; $\sqcup$ is commutative and associative (up to equality), and always produces an element in the equivalence class (by $\approx$) of the least upper bound according to $\sqsubseteq$ (which is used in $\approx$ in such a way that the partial order induced by $\sqcup$ is compatible with the explicit $\sqsubseteq$ in Figure \ref{fig:directPO}).
    $\approx$ is not degenerate unless the underlying effect quantale is.
    $\rhd$ and $\sqcup$ respect $\approx$ and the distributivity requirements.
    $T$ is simply the set of all 
    Note that $\sqsubseteq$ does not respect equivalence (it is used to define equivalence), but we will later treat it as such due to side-conditions forbidding effects containing underlying $\top$.

    We will generally refer to $\mathcal{C}(Q)$ as simply an effect quantale, rather than an effect quantale modulo equivalence, because the latter can always be collapsed to the former:
    \begin{lemma}[Effect Quantales Modulo Equivalence Collapse]
        \label{lem:eqme_collapse}
        Given an effect quantale modulo equivalence $(E,\approx,\sqcup,T,\rhd,I)$, the result of quotienting $E$ by the equivalence relation --- $(E/\approx,\sqcup,T,\rhd,\{I\})$ --- is an effect quantale.
    \end{lemma}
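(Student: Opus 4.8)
The plan is to check that the quotient structure satisfies each clause of Definition \ref{def:eq} directly, transferring every law that Definition \ref{def:eqme} only guarantees ``up to $\approx$'' into a genuine equality of equivalence classes. Write $[x]$ for the $\approx$-class of $x\in E$; the unit of the quotient, written $\{I\}$ in the statement, is the class $[I]$, and the top is the class $T$, which Definition \ref{def:eqme} already takes to be a single equivalence class. Because the remarks accompanying Definition \ref{def:eqme} record that $\sqcup$ and $\rhd$ respect $\approx$, both operations are congruences, so the definitions $[x]\sqcup[y]:=[x\sqcup y]$ and $[x]\rhd[y]:=[x\rhd y]$ are independent of the chosen representatives. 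This well-definedness is the only place congruence is used; everything afterward is a transfer of laws through the quotient map $x\mapsto[x]$.

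With the operations well-defined, I would verify the remaining clauses in turn. For the monoid clause, associativity of $\rhd$ and the unit laws $I\rhd x\approx x\approx x\rhd I$ hold up to $\approx$ in $E$, hence become literal equalities $([x]\rhd[y])\rhd[z]=[x]\rhd([y]\rhd[z])$ and $[I]\rhd[x]=[x]=[x]\rhd[I]$ after passing to classes, so $(E/{\approx},\rhd,[I])$ is a monoid. For nilpotency, the hypothesis that $T$-membership is propagated by sequencing gives $x\rhd t\in T$ and $t\rhd x\in T$ for every $t\in T$; since $T$ is one class this reads $[x]\rhd T=T=T\rhd[x]$, which is exactly nilpotency of the top. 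Two-sided distributivity of $\rhd$ over $\sqcup$ is likewise an ``up to $\approx$'' law that becomes a literal equality of classes.

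The one clause demanding care is that $(E/{\approx},\sqcup,T)$ is an \emph{upper-bounded join semilattice}, since a semilattice requires an \emph{antisymmetric} order whereas Definition \ref{def:eqme} only furnishes a semilattice up to $\approx$. Here I would first note that commutativity, associativity, and idempotence of $\sqcup$ again transfer to genuine equalities, so $\sqcup$ on $E/{\approx}$ induces the usual preorder $[x]\sqsubseteq[y]\Leftrightarrow[x]\sqcup[y]=[y]$. The key point is that this induced relation is now antisymmetric: if $[x]\sqcup[y]=[y]$ and $[y]\sqcup[x]=[x]$, then $x\sqcup y\approx y$ and, using commutativity up to $\approx$, $x\sqcup y\approx x$, whence $x\approx y$ and $[x]=[y]$. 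This is precisely the defect the quotient repairs: in $E$ itself the order induced by $\sqcup$ is only a preorder, since mutual domination yields only $\approx$, not equality. That $[x]\sqcup[y]$ is the least upper bound of $[x]$ and $[y]$ in this order, and that $T$ is the greatest element (from $x\sqcup t\approx t$ for $t\in T$), are then routine.

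I expect the restoration of antisymmetry to be the conceptual crux; it is the whole reason for quotienting, and it is what lets the result be a bona fide effect quantale rather than merely a preordered structure. The remaining obligations are mechanical applications of the congruence property and the ``up to $\approx$'' laws, with no case analysis beyond the single-class status of $T$.
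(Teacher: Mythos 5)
Your proof is correct. Be aware, though, that the paper states Lemma \ref{lem:eqme_collapse} without any accompanying proof --- the collapse is asserted as routine --- so there is no paper argument to diverge from; what you have written is precisely the standard quotient construction the paper leaves implicit. Your load-bearing observations are all sound: (i) well-definedness of the class operations requires $\sqcup$ and $\rhd$ to be congruences for $\approx$, and you are right that this is supplied by the prose surrounding Definition \ref{def:eqme} (``$\rhd$ and $\sqcup$ respect $\approx$'') rather than by the definition's clauses themselves --- without that remark the lemma would not even be well-posed; (ii) every law stated ``up to $\approx$'' becomes a literal equality of equivalence classes under the quotient map; (iii) nilpotency of the top in the sense of Definition \ref{def:eq} follows from the membership-propagation clause exactly because $T$ is a single equivalence class, so $[x]\rhd T=[x\rhd t]=T$ whenever $x\rhd t\in T$, and symmetrically on the other side. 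One small refinement: the antisymmetry discussion, while correct, is not really a separate obligation. In the algebraic presentation of a join semilattice (a commutative, associative, idempotent binary operation), antisymmetry of the induced order $[x]\sqsubseteq[y]\Leftrightarrow[x]\sqcup[y]=[y]$ is automatic once those three laws hold as genuine equalities: mutual domination gives $[y]=[x]\sqcup[y]=[y]\sqcup[x]=[x]$ directly. Your derivation of $x\approx y$ at the representative level is that same argument unfolded one step below the quotient. So the conceptual crux is simply that quotienting turns $\approx$-laws into equalities; antisymmetry, and with it the repair of the preorder defect you correctly identify in $E$ itself, comes along for free.
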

    \else
    $\sqcup$ and $\rhd$ (and the type rules introduced shortly) all respect this equivalence relation, and in fact a true effect quantale can be constructed by quotienting $\mathcal{C}(Q)$ by $\approx$ (the elements become equivalence classes with respect to $\approx$; the details appear in the supplementary material~\cite{controltr}).
    Because of this, we will continue to refer to $\mathcal{C}(Q)$ as an effect quantale despite its relaxation by an equivalence relation.
    \fi
    In principle it would be possible to modify $\mathcal{C}(Q)$ to produce a standard effect quantale directly, which directly incorporates the sorts of over-approximation we want, rather than dealing with the equivalence relation (e.g., by computing a single over-approximation of each abort or replace effect, and explicitly lifting underlying $\top$ into a distinguished top element).  But this would add further significant complexity to metatheory, exposition, and derivation of type rules.
    \fi

    \ifTR
    \subsection{Design Notes}
    Readers familiar with other effects dealing with enclosing contexts~\cite{neamtiu2008contextual} might wonder why we are placing this prophecy information, and the machinery for validation, into effects rather than modifying the shape of the type judgment to, for example, include the contextual effect out to the next prompt (for each tag) and check at continuation invocation that the ``outer context effect'' was less than the assumed latent effect.  Such an alternative is a natural reflection of the duality observed by Crary et al.~\cite{crary1999typed}, that effect systems can be framed as ``pushing contextual information down'' into judgments of subterms, or ``passing analyzed behavior up'' to enclosing contexts as we do.  There are several reasons.  First, this approach leads to further invasive changes when the \lstinline|call/cc| is hidden inside a function.  It must be possible to typecheck function bodies in isolation, and reuse them in multiple contexts; taking a ``push the context into the type judgment'' approach would require regular function types carrying their own assumptions about ``acceptable contexts'' and this would likely require similar pervasive changes.  In contrast, one of the points of the original effect quantales work was that this sort of context management could be pushed into the structure of effects themselves --- essentially what we have done.  Second, doing this allows us to pose the entire tracking system as an effect quantale, which in addition to being theoretically pleasant (a transformation on effect quantales) has the practical advantage of allowing reuse of the iteration work for effect quantales, which becomes important when deriving rules for macro-expressed control operators and control flow operations. Making bespoke modifications to the shape of the type judgment would require generating a new approach to iteration in the presence of control operators, which would be non-trivial even with the insights of prior work to guide us.  We prefer to directly reuse proven constructions.
    \fi

\section{Iterating Continuation Effects}
    \label{sec:contiter}
    Prior work on effect quantales~\cite{ecoop17,quantalesjournal} introduced the notion of lax iterability to introduce a loop iteration operator, as outlined in Section \ref{sec:bg}.
    We would like to reuse this operator construction for two reasons.  First, we would like to check that if we macro-express loop constructs and derive rules for them as we proposed earlier, that they are consistent with manually-derived rules from prior work, which use the iteration operator.  Second, the iteration operator has properties that make it useful for solving recursive constraints over effects, such as those that arise in building derived rules for control flow constructs and control operators later in the paper.
    Of course, lax iterability and the construction are defined on standard effect quantales,\footnote{Lax iterability is defined for a version of effect quantales using partial operators~\cite{quantalesjournal} instead of the distinguished $\top$ used here to simulate partiality.  But the definitions simplify to those used here when given total operations and using side-conditions to reject effects containing $\top$.}
\ifTR
    not the effect quantales modulo equivalence, which is the structure we give.  Fortunately these are closely related.  Lemma \ref{lem:eqme_collapse} gives a standard effect quantale for each effect quantale modulo equivalence.  Lemma \ref{lem:eqme_collapse}'s quotient construction preserves lax iterability of the underlying effect quantale, meaning the existing iteration construction applies to the quotient effect quantale.  This construction takes each effect $X$ to the least subidempotent ($y\rhd y\sqsubseteq y$) effect greater than both $X$ and $I$; lax iterability ensures the least such element always exists.  Since we are applying this to a quotient construction, this naturally takes the form of an operation on elements of the effect quantale modulo equivalence, which respects the equivalence relation $\approx$.
\else
    so do not apply directly to $\mathcal{C}(Q)$. Fortunately, one can apply the construction to $\mathcal{C}(Q)$ quotiented by $\approx$ (a standard effect quantale), and since those elements are equivalence classes, simply use the behavior on elements to iterate in $\mathcal{C}(Q)$:
\fi

    \begin{theorem}[Lax Iterability with Continuations]
        \label{thm:laxiter}
        For a laxly iterable underlying effect quantale $Q$, the effect quantale $\mathcal{C}(Q)/\approx$ is also laxly iterable, with the closure operator given by lifting the following operator
        from elements of $\mathcal{C}(Q)$ to the corresponding equivalence class.
        \[(P,C,\opt{Q})^*=(\bigcup_{i\in\mathbb{N}}P\blacktriangleright(P,C,\opt{Q})^i,\opt{Q}^*\rhd C, \opt{Q}^*)\]
    \end{theorem}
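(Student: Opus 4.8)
The plan is to verify the three defining conditions of the iteration operator of a laxly iterable effect quantale (Section~\ref{sec:bg}): that the displayed operator, applied to a class $\chi=(P,C,\opt{Q})$, yields an element that (i) is subidempotent, $\chi^*\rhd\chi^*\sqsubseteq\chi^*$; (ii) over-approximates $\chi\sqcup I$; and (iii) lies below every subidempotent $s$ with $\chi\sqcup I\sqsubseteq s$. Since $\mathcal{C}(Q)/\approx$ is a genuine effect quantale (Lemma~\ref{lem:eqme_collapse}), establishing these three facts shows the set of subidempotent upper bounds of $\chi\sqcup I$ has a least element, namely $\chi^*$, which is exactly lax iterability together with the claim that the iteration operator coincides with the displayed formula. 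Throughout I would compute on representatives in $\mathcal{C}(Q)$ and separately confirm that $\sqcup$, $\rhd$, $\blacktriangleright$, and the formula all respect $\approx$ so the operator descends to classes; this $\approx$-invariance reuses the distributivity and monotonicity laws already recorded for $\mathcal{C}(Q)$, and it is also what lets me treat the set-union $\bigcup_i$ as the join $\bigsqcup_i$ in the induced order.

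I would then dispatch the two easy components. For the underlying slot, $\opt{Q}^*$ is the underlying quantale's iteration (with $\bot^*=I$ and $\top^*=\top$), so it inherits extensiveness, foldability, idempotence, monotonicity, and possibly-emptiness verbatim; in particular $\opt{Q}\sqsubseteq\opt{Q}^*$, $I\sqsubseteq\opt{Q}^*$, and $\opt{Q}^*\rhd\opt{Q}^*\sqsubseteq\opt{Q}^*$. For the control slot $\opt{Q}^*\rhd C$, extensiveness follows from $I\sqsubseteq\opt{Q}^*$ and $I\rhd C=C$ together with monotonicity of left-accumulation in its underlying argument, and the subidempotence contribution reduces, by associativity of left-accumulation, to $(\opt{Q}^*\rhd\opt{Q}^*)\rhd C\sqsubseteq\opt{Q}^*\rhd C$, which is just subidempotence of $\opt{Q}^*$ and monotonicity. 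Thus the control obligations fall back entirely onto the properties of $\opt{Q}^*$ already in hand.

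The prophecy component is where the real work lies. The key algebraic lever is that $\blacktriangleright$ implements sequential composition into observations and is therefore associative with respect to $\rhd$, giving $(P\blacktriangleright\chi^i)\blacktriangleright\chi^j=P\blacktriangleright(\chi^i\rhd\chi^j)=P\blacktriangleright\chi^{i+j}$, and that $\blacktriangleright$ distributes over the join that combines the indexed family. Using these, extensiveness for the prophecy slot is read off from the $i=0$ and $i=1$ summands ($P\blacktriangleright\chi^0=P$ and $P\blacktriangleright\chi^1$), while subidempotence should reduce, after distributing $\blacktriangleright$ through the union and reindexing via the $i+j$ identity, to the containment $\bigcup_{i,j}P\blacktriangleright\chi^{i+j}\sqsubseteq\bigcup_k P\blacktriangleright\chi^k$. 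The hard part will be making this reindexing precise in the quotient, and it is the main obstacle: because prophecies carry recursive continuation-effect observations, and because two prophecies with the same tag, prediction, and type must have their observations combined (the ``recursive join'' of Remark~\ref{rmk:unions}) before the closure is genuinely subidempotent, I must track how accumulating $\chi^*$ — whose underlying and control slots are the \emph{closures} $\opt{Q}^*$ and $\opt{Q}^*\rhd C$ rather than any single power $\chi^i$ — interacts with the finite powers appearing inside the observations, and confirm that this extra accumulation is absorbed up to $\approx$. This is also where blocked prophecies matter: since blocked prophecies do not accumulate, the recursion on observations terminates, and I would treat them as a separate, simpler case that contributes nothing to the observation growth.

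Finally, minimality is uniform across the three slots. Given a subidempotent $s=(P_s,C_s,\opt{Q_s})$ with $\chi\sqcup I\sqsubseteq s$, monotonicity of $\rhd$ gives $\chi^i\sqsubseteq s^i$ and subidempotence of $s$ gives $s^i\sqsubseteq s$, so every power $\chi^i$ is below $s$. Pushing this through the explicit componentwise description of $\chi^*$ — $\opt{Q}^*$ as the least subidempotent above $\opt{Q}\sqcup I$ in $Q$, $\opt{Q}^*\rhd C$ dominated using $\opt{Q}^*\sqsubseteq\opt{Q_s}$ together with the folding law, and the prophecy union dominated summand-by-summand — yields $\chi^*\sqsubseteq s$. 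I expect the underlying and control minimality to follow directly from the corresponding minimality of $\opt{Q}^*$ in $Q$, leaving the prophecy union as the one place where the summand-by-summand domination again leans on the $\blacktriangleright$/power bookkeeping established in the subidempotence step.
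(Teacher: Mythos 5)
Your proposal has the same skeleton as the paper's proof: both establish that the displayed operator yields the least subidempotent element above $(P,C,\opt{Q})\sqcup I$, working component by component, with the underlying and control slots discharged by the properties of $\opt{Q}^*$. Where you genuinely differ is minimality. The paper argues by contradiction: it posits a strictly smaller subidempotent upper bound $(P',C',\opt{Q'})$, uses minimality of $\opt{Q}^*$ in the ($\bot$-extended) underlying quantale to force the underlying and control components into equality, and then argues no prophecy in the infinite union can be dropped or strictly shrunk because every such prophecy arises from some finite power $(P,C,\opt{Q})^m$, which any subidempotent element above $(P,C,\opt{Q})$ must absorb ($s\rhd\chi^m\sqsubseteq s$). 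Your direct argument --- $\chi^i\sqsubseteq s^i\sqsubseteq s$ for every subidempotent upper bound $s$, then domination slot by slot using $\opt{Q}^*\sqsubseteq\opt{Q_s}$, $\opt{Q_s}\rhd C_s\sqsubseteq C_s$, and $P_s\blacktriangleright s\sqsubseteq P_s$ --- packages the same mathematical content (absorption of finite powers) more cleanly and avoids the paper's somewhat informal case analysis about which prophecies are ``strictly over-approximated.'' It does quietly require monotonicity of $\rhd$ and $\blacktriangleright$ with respect to the direct order of Figure~\ref{fig:directPO} (not just the $\sqcup$-induced order), and separate handling of the top classes of $\mathcal{C}(Q)/\approx$; both are routine but should be stated.

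One caution on subidempotence. Your first-pass reduction to $\bigcup_{i,j}P\blacktriangleright\chi^{i+j}\sqsubseteq\bigcup_k P\blacktriangleright\chi^k$ is not literally available: in $\chi^*\rhd\chi^*$ the second factor is $\chi^*$ itself, whose slots are the closures $\opt{Q}^*$ and $\opt{Q}^*\rhd C$ rather than finite powers, so the accumulated observations have the shape $\omega\rhd\chi^i\rhd\chi^*$, and these need not be dominated by any $\omega\rhd\chi^k$ --- already in $\mathcal{T}(\Sigma)$ with $\opt{Q}=\{\alpha\}$, the underlying part $\{\alpha\}^i\rhd\alpha^*$ is an infinite set of traces contained in no single finite power. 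You correctly identify exactly this as the ``main obstacle'' and defer it to absorption up to $\approx$; be aware that the paper gives you nothing to lean on here, since its entire treatment of subidempotence is the one-sentence assertion that it ``follows directly from the infinite union of prophecies, and the properties of the underlying effect quantale's iteration.'' Discharging this step honestly seems to require the Remark~\ref{rmk:unions}-style recursive joining of prophecies sharing a tag, prediction, and type (or the recursive prophecies the paper later introduces for generators), so that the infinitely many same-identity prophecies $p\blacktriangleright\chi^k$ collapse to one whose observation absorbs a further $\chi^*$. In short: your minimality is a clean improvement over the paper's, your subidempotence flag points at a real gap --- but it is a gap the paper's own proof shares rather than one you introduced.
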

    \ifTR
    \begin{proof}
        To prove this is the closure operator, we must prove that the right hand side is the minimum subidempotent element greater than both $I$ and $(P,C,Q)$, or more precisely that it respects $\approx$ and when applied to equivalence classes of this quotient construction it gives the least equivalence class with respect to $\sqsubseteq$.

        Subidempotence follows directly from the infinite union of prophecies, and the properties of of the underlying effect quantale's iteration.  Being greater than the original input and identity is straightforward. So it remains to prove minimality.
        By contradiction.  Assume there is a lesser such element than the above, $(P',C',\opt{Q'})$.  The fact that this is supposedly less than the result of $(P,C,\opt{Q})^*$ defined above requires that each component be ordered less, and at least one of these component-wise inequalities must be strict (else they could be equivalent:
        \begin{itemize}
            \item $\opt{Q'}\sqsubseteq \opt{Q}^*$
            \item $C'\sqsubseteq \opt{Q}^*\rhd C$
            \item $P'\sqsubseteq \bigcup_{i\in\mathbb{N}}P\blacktriangleright(P,C,\opt{Q})^i$
        \end{itemize}
        The first is only possible if $\opt{Q'}=\opt{Q}^*$, since $\opt{Q}^*$ is the minimal subidempotent within the $\bot$-extended underlying effect quantale.
        The second degenerates to an equality requirement for the same reason, so the final constraint must be a strict $\sqsubset$.
        The final component-wise constraint requires every prophecy in $P'$ to be over-approximated by some prophecy in the infinite union term.  
        So for the assumed $(P',C',\opt{Q'})$ to be strictly less than $(P,C,\opt{Q})^*$, either at least one prophecy in $P'$ is strictly over-approximated wherever it is over-approximated in the infinite union, or at least one prophecy in the infinite union is not necessary to over-approximate an element of $P'$.  
        The latter case is straightforwardly not possible: since the infinite union contains exactly all prophecies obtainable by finite iteration of $(P,C,\opt{Q})$, omitting any such prophecy from $P'$ would mean that for some $m$, $(P',C',\opt{Q'})\rhd(P,C,\opt{Q})^m$ would contain a prophecy not contained in $P'$, even though because $(P',C',\opt{Q'})$ is an iteration result and subidempotent, it should be the case $(P',C',\opt{Q'})\rhd(P,C,\opt{Q})^m\sqsubseteq(P',C',\opt{Q'})$; so this is not possible.
        The former case is similar: it is not possible for a prophecy $p\in P'$ to be only \emph{strictly} over-approximated by any larger elements of the infinite union, because that would require $p$ to be a prophecy that could not be generated by finite iteration of $(P,C,\opt{Q})$. 
    \end{proof}
    \fi

    \ifTR
    The requirements for lax iterability dictate exactly the operator above for iteration, but we can consider the relationship between this operator and various representative cases that may arise in typechecking to understand why this is not only mathematically necessary, but actually corresponds in a sensible way to the runtime semantics.
    \fi

    \ifTR
    We can build some intuition for the operator above by considering two special cases, then discussing the general case.

    \begin{example}[Control-Free Iteration]
        In the case where an iterated effect has no (escaping) prophecies or control effects, it behaves exactly as the iteration from the $\bot$-extended underlying effect quantale: $(\emptyset,\emptyset,\opt{Q})^*=(\emptyset,\emptyset,\opt{Q}^*)$.
    \end{example}
    \begin{example}[Prophecy-Free Iteration]
        In the case where the prophecies are empty --- where there are no unresolved continuation captures (such as throwing exceptions from within a loop) --- the results correspond to the intuitive idea that the control effects would occur after 0 or more non-exceptional runs of the underlying effect --- that any exceptional control action in $C$ would occur only after repeating $\opt{Q}$ some (possibly-zero) number of times: $(\emptyset,C,\opt{Q})^*=(\emptyset,\opt{Q}^*\rhd C, \opt{Q}^*)$.
    \end{example}
    \else
    Notice that when $P=\emptyset$ and $C=\emptyset$, this specializes to the intuitive embedding of the ($\bot$-extended) underlying iteration: $(\emptyset,\emptyset,\opt{Q}^*)$.
    When only $P=\emptyset$ this specializes to $(\emptyset,\opt{Q}^*\rhd C, \opt{Q}^*)$, intuitively reflecting that any control exits occur after repeating $\opt{Q}$ some number of times first.
    \fi
    While these examples ``merely'' drop certain components of Theorem \ref{thm:laxiter}, it helps to work from the simplest case up to the more complex versions, since the examples above correspond intuitively to various execution paths.
    The infinite union in the prophecy set is the most subtle part of the operation to explain.
    Consider an expression with the structure
    $ \mathsf{while}~c~(\ldots~(\mathsf{call/cc}~t\ldots)~\ldots) $:
    Assume the tag $t$ for the continuation captured inside the loop does not occur elsewhere inside the loop --- in particular, that the captured continuation would extend \emph{outside} the loop.  Considering the runtime execution, in some sense the prophecy captured by the \emph{first} loop iteration must predict not only the regular execution and exceptional executions of future iterations, but even the need for more prophecies to be generated by the \lstinline|call/cc|s in future iterations as well!  This is why the set of prophecies must still be sequenced with some form of themselves, rather than just some subset.
    During static typechecking, we must therefore conservatively overapproximate the number of iterations following a prophecy.  It may be 0, 1, 2, \ldots or any number.  So the approximation must consider all of those possibilities, hence the infinite union of finite repetitions following the prophecies. This requires prophecy sets to be possibly-infinite, but only countably so.

\section{Type Safety}
\label{sec:summary_soundness}
We have proven syntactic type soundness for the type system presented in Section \ref{sec:formal}.
\ifTR
\else
We summarize the proof here; see the technical report for full details~\cite{controltr}.
\fi
We continue to reuse Gordon's parameterization for soundness~\cite{ecoop17,quantalesjournal}, making the proof generic over a choice of abstract states (ranged over by $\sigma$) and related parameters subject to some restrictions.
Progress is uninteresting (if primitives satisfy progress), in the sense that effects play no essential role (they are merely ``pushed around'' and the proof looks otherwise like standard progress proofs).
Preservation is similar to the common formulation for syntactic type soundness results of sequential effect systems~\cite{tldi12,ecoop17,skalka2008types,Skalka2008}. 
It follows from single-step preservation: informally, for a well-typed runtime state $\sigma$ and term $e$, if $\sigma;e\xrightarrow{q}\sigma';e'$, then $\sigma'$ and $e'$ are also well-typed, and moreover if the static effects of $e$ and $e'$ are $\chi$ and $\chi'$ respectively, then $(\emptyset,\emptyset,q)\rhd\chi'\sqsubseteq\chi$ --- that is, sequencing the actual effect of the reduction with the residual effect of the reduced expression is soundly bounded by the effect of the original expression. This is the typical form of syntactic type safety proofs for sequential effect systems~\cite{tldi12,flanagan2003atomicity}.

Explaining the formal statement requires explaining the full details of parameterization.  
\ifTR
Full details appear in the appendix, and here for brevity
\else
For brevity and lack of space,
\fi
we offer the formal statement of single-step preservation specialized for our running example of $\mathcal{T}(\Sigma)$ and $\mathsf{event}[-]$ with trivial (i.e., unit) state\footnote{Meaningful loops obviously require more meaningful state.} for which many of the conditions simplify to \textsf{True}:
\begin{corollary}[Single-Step Preservation for $\mathcal{T}(\Sigma)$]
    If $\Gamma\vdash e : \tau \mid \chi$ and $();e\xRightarrow{q}();e'$, then there exists a $\tau'<:\tau$ and a $\chi'$ such that $\Gamma\vdash e' : \tau' \mid \chi'$ and $q\rhd\chi'\sqsubseteq\chi$.
\end{corollary}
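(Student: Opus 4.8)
The plan is to prove the corollary by case analysis on the derivation of $();e \xRightarrow{q} ();e'$, i.e.\ on which reduction rule fired, since trivial state forces $\sigma=\sigma'=()$ throughout. The genuinely local cases (\textsc{E-App}, \textsc{E-PrimApp}, lifted under an evaluation context by \textsc{E-Context}) are handled first and uniformly, and separately from the control rules (\textsc{E-PromptVal}, \textsc{E-Abort}, \textsc{E-CallCC}, \textsc{E-InvokeCC}). In every case the target inequality $q\rhd\chi'\sqsubseteq\chi$ is discharged using the algebraic facts about $\mathcal{C}(Q)$ established earlier: associativity of $\rhd$, its two-sided monotonicity (a consequence of distributivity), and the explicit definitions of left-accumulation on control effects and right-accumulation $\blacktriangleright$ on prophecies; all comparisons are read modulo $\approx$.

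Two structural lemmas do most of the work. First, a \emph{value-substitution lemma}: if $\Gamma,x{:}\tau\vdash e:\sigma\mid\chi$ and $v$ is a value of type $\tau$ (hence typed with unit effect $(\emptyset,\emptyset,I)$), then $\Gamma\vdash e[v/x]:\sigma\mid\chi$; this handles \textsc{E-App} with $q=I$, while \textsc{E-PrimApp} follows from the parameter assumption that a primitive's emitted effect $q$ coincides with its static effect (for $\mathcal{T}(\Sigma)$, $\mathsf{event}[\alpha]$ emits and is statically assigned $(\emptyset,\emptyset,\{\alpha\})$). Second, an \emph{evaluation-context decomposition lemma}: the typing of $E[r]$ factors the effect of the hole out on the \emph{left}, because the already-evaluated frames of an evaluation context are values contributing unit effect, so that rewriting $r$ to $r'$ with $q\rhd\chi_{r'}\sqsubseteq\chi_r$ and $\tau_{r'}<:\tau_r$ yields $q\rhd\chi_{E[r']}\sqsubseteq\chi_{E[r]}$ at a subtype. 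Proved by induction on $E$ using monotonicity, this reduces \textsc{E-Context} to the local cases and lets a primitive effect emitted deep inside $E$ be floated to the front to match $(\emptyset,\emptyset,q)$ in the statement; it is also the tool for re-assembling the outer context $E$ in the control cases. \textsc{E-PromptVal} is then immediate: a value body has effect $(\emptyset,\emptyset,I)$, so every projection and filter in \textsc{T-Prompt} is trivial and the prompt's effect collapses to $I$, matching $v$'s effect, giving $I\rhd I\sqsubseteq I$.

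The control rules are the heart of the argument: inversion first decomposes along $E$, then through \textsc{T-Prompt}, then along the inner ($\ell$-prompt-free) context $E'$ down to the control operator. For \textsc{E-Abort}, the abort contributes $\abort{\ell~I\leadsto\tau_v}$, which, left-accumulated out through the value frames of $E'$, reaches the prompt still unblocked with prefix $I$; \textsc{V-Effects} gives $\tau_v<:\sigma$, and the projection $C|^{Q_h}_\ell$ places $I\rhd Q_h=Q_h$ into the prompt's underlying effect — exactly the effect of the reduct $h~v$ — so the new effect is bounded with $q=I$. \textsc{E-InvokeCC} is dual: the replace control effect carries the latent effect of the captured $E''$, recorded by the replace-part of $C|^{Q_h}_\ell$, and a continuation-value typing lemma relating $(\mathsf{cont}~\ell~E'')$'s latent effect to the effect of $E''[v]$ shows the reduct is typable within the already-accounted-for bound.

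The main obstacle is \textsc{E-CallCC}, where soundness of the prophecy mechanism must be cashed out. Here $E'$ becomes a continuation value $(\mathsf{cont}~\ell~E')$ substituted as the argument of $k$, and one must exhibit a latent effect for this value bounded by the prophesied $\chi_k$, so that $(k~(\mathsf{cont}~\ell~E'))$ type-checks (by covariance of continuation types in the latent effect) and the rewritten prompt body's effect stays below the original. The key lemma is that the observation accumulated into the prophecy at the enclosing prompt — built by right-accumulation $\blacktriangleright$ as \textsc{T-CallCont}'s prophecy passes through the frames of $E'$ — over-approximates the actual latent effect $\chi_{E'}$ of the captured context; the \textsc{V-Effects} check $\punblock{C_p}{\ell}\sqsubseteq\punblock{C_{proph}}{\ell}$ (and its prophecy and underlying analogues), performed on \emph{unblocked} sets, then yields $\chi_{E'}\sqsubseteq\chi_k$. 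Establishing this connection — and maintaining the blocking/unblocking invariants across nested prompts of differing tags when the captured context is itself reinvoked, precisely the self-referential case where a prophecy observes a blocked copy of itself, as in the delimited-continuation loop encoding — is the delicate bookkeeping, since unblocking in \textsc{V-Effects} is exactly what makes the check succeed there. With $\chi_{E'}\sqsubseteq\chi_k$ in hand, the prophecy is validated and removed by \textsc{T-Prompt}, the rewritten body re-derives with a smaller prophecy set, and the final $\sqsubseteq$ follows by monotonicity together with the decomposition lemma for the outer $E$.
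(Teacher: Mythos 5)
Your proposal follows essentially the same route as the paper's proof: case analysis on the global reduction, a substitution-based redex-preservation lemma for the local rules, a context-decomposition/reduct-effects lemma (the paper splits this into Context Typing Decomposition, Reduct Effects, and Context Substitution) to handle \textsc{E-Context} and to reassemble the outer context in the control cases, and, for \textsc{E-CallCC}, exactly the paper's key Valid Prophecies lemma --- that the right-accumulated observation sitting in the prompt body's prophecy set types the captured context, so that the unblocked comparison in \textsc{V-Effects} justifies typing the reified continuation at the prophesied latent effect. One caution that does not affect correctness: your justification that the hole's effect ``factors out on the left'' is literally false at prompt frames (where \textsc{T-Prompt} transforms rather than sequences the body effect), but the inequality you actually state is the paper's Reduct Effects lemma, and your stated proof method (induction on $E$ with monotonicity, which at prompt frames amounts to the fact that prefixing by a pure underlying effect $(\emptyset,\emptyset,q)$ commutes with the prompt's unblocking, filtering, and projection operations) is precisely how the paper discharges it.
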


A key lemma for soundness of the rule for \lstinline|call/cc| precisely relates prophecy observations to typing continuations. Informally, we prove that for a term $E[e]$, if the effect of $e$ contains a prophecy $\mathsf{prophecy}\;\ell\;\chi\leadsto\tau;\mathsf{obs}\;(\emptyset,\emptyset,I)$ and $E$ contains no prompts for tag $\ell$, then (1) the effect of $E[e]$ contains a prophecy  $\mathsf{prophecy}\;\ell\;\chi\leadsto\tau\;\mathsf{obs}\;\chi'$ (accumulating some observation), and (2) plugging an appropriately-typed value $v$ into $E$ produces a term $E[v]$ with static effect $\chi'$ (exactly matching the observation). The assumptions of the lemma are exactly the conditions when considering \textsc{E-CallCC} in the preservation proof: $E[e]$ is the body of the delimiting prompt and $e$ is a use of \lstinline|call/cc|, so by \textsc{T-CallCont} an appropriate prophecy exists in $e$'s effect ensuring the prophecy is gives sound latent effect for typing the captured continuation.
\looseness=-1

\begin{remark}[{Syntactic vs.\ Semantic Soundness}]
Our proof imparts no semantic meaning to effects beyond syntactically relating the dynamic and static effects --- it does not check that a certain effect enforces what it is meant to (e.g., deadlock freedom), unless like some finite trace effects~\cite{skalka2008types,Koskinen14LTR} the relation between static and dynamic effects is already the intent.
This is common to any syntactic type safety approach for generic effects~\cite{marino09}.
Gordon~\cite{quantalesjournal} has extended his proof approach for semantic pre- and post-condition type properties, such as ensuring locking effects accurately describe lock acquisition and release, but limited to safety properties; in principle this should be adaptable to our setting.
Denotational approaches to abstract effect systems~\cite{katsumata14,mycroft16,tate13,atkey2009parameterised} inherently give actual semantics, and therefore can ensure liveness properties.

\ifTR
Ignoring the syntactic nature of soundness leads to counterintuitive misunderstandings.  Consider an effect quantale with 3 elements --- $\mathsf{Total}\sqsubseteq\mathsf{Partial}\sqsubseteq\top$ --- intended to model total or partial computations.  If sequencing simply takes least-upper bound with respect to the partial order ($\rhd=\sqcup$), this is a valid effect quantale with $\mathsf{Total}$ as the identity.  But Gordon's iteration operator will set $\mathsf{Total}^*=\mathsf{Total}$, suggesting that infinite loops of ``$\mathsf{Total}$'' actions are $\mathsf{Total}$.  This is because the soundness proof does not account for what each effect should mean, and the syntactic effect $\mathsf{Total}$ is not semantically tied to termination.
Notice that this effect quantale is isomorphic to one that simply expresses whether or not a computation uses reflection: $\mathsf{NoReflection}\sqsubseteq\mathsf{Reflection}\sqsubseteq\top$.
\fi
\end{remark}

    \section{Deriving Sequential Effect Rules}
    \label{sec:deriving_rules}
    Section \ref{sec:formal} developed the core type rules which give sequential effects to programs making direct use of tagged delimited control.
    As we have discussed, most programs do not use the full power of delimited control, and instead use only control flow constructs or weaker control operators.
    This section uses the new type type-and-effect rules to \emph{derive} sequential effect rules for a range of control flow constructs and weaker control operators macro-expressed in terms of prompts.
    \looseness=-1

\ifTR
    Our examples fall into two groups.  First, we consider checking consistency of \emph{derived rules} for typical control flow constructs with those hand-designed in prior work, for infinite loops (Section \ref{sec:infloop}) and while loops (Section \ref{sec:while}). 
    Second, we consider derived rules for constructs that are common in most programming languages, yet \emph{never addressed in prior work on sequential effect systems}: exceptions (Section \ref{sec:realexceptions}).
    Finally, we consider expressing a weaker control operator, a form of generator close to an encoding given by Coyle and Crogono~\cite{Coyle:1991:BAI:122179.122181}.
\else
    Our examples fall into two groups.  First, we consider checking consistency of derived rules for typical control flow constructs with those hand-designed in prior work --- for infinite loops and while loops.  Second, we consider derived rules for constructs \emph{never before addressed in sequential effect systems}: exceptions and generators.
\fi

    In each case, we give a derived type rule for the construct of interest.  While we are most explicit in Section \ref{sec:infloop}, in each case our process for deriving the rule is the following:
    \begin{enumerate}
        \item Assume closed typing derivations for subexpressions (e.g., loop bodies)
        \item Apply the type rules from Section \ref{sec:formal} to give a closed-form rule for the macro's expansion to be well-typed under the assumed subexpression types.  Typically these have several undetermined choices for metavariables representing effects, with non-trivial constraints to close the typing derivation.
        \item Simplify the type rule by giving solutions to the constrained-but-undetermined metavariables in terms of the subexpressions' effects.  This gives type rules that are simpler, and possibly less general, but given entirely in terms of the subexpressions' effects.  The simplifications are typically involve rewriting by the laws satisfied by $\mathcal{C}(Q)$, and using the iteration operator from prior work~\cite{ecoop17,quantalesjournal} to solve recursive constraints on undetermined effects.
        \looseness=-1
    \end{enumerate}
    \ifTR
    \else
    Due to space constraints, we show detailed derivations only for simple infinite loops (Section \ref{sec:infloop}), giving only final results for others.
    The calculations for the remaining examples are done in the same way, and are available in the accompanying technical report~\cite{controltr}.
    \fi
    
    \subsection{Infinite Loops}
    \label{sec:infloop}
    Consider a simple definition of an infinite loop using the constructs we have derived here:
    \[
        \llbracket\mathsf{loop}~e\rrbracket = (\%\;\ell\;(\mathsf{let}\;cc = (\mathsf{call/cc}~\ell~(\lambda k\ldotp k))~\mathsf{in}~(\llbracket{e}\rrbracket; cc\;cc))\;(\lambda\_\ldotp \mathsf{tt}))
    \]
    The term above executes $e$ repeatedly, forever (assuming $e$ does not abort).
    Thus, its effect \emph{ought} to indicate that $e$'s effect, which we take to be $(\emptyset,\emptyset,Q_e)$,\footnote{Note the non-$\bot$ underlying effect; well-typed expressions do not have degenerate effects.} is repeated arbitrarily many times.  We take this expansion as the body of a macro $\llbracket\mathsf{loop}~e\rrbracket$.
    This program can be well-typed in our system, with an appropriate effect (assuming the underlying effect quantale is \emph{laxly iterable} per Section \ref{sec:bg}).  The body of the \textsf{call/cc} is pure, but for the expression to be well-typed, the \textsf{call/cc}'s own effect must prophecize some effect $(\emptyset,C_p,\opt{Q_p})$ of the enclosing continuation up to the prompt for $\ell$ (because no \lstinline|call/cc| occurs in the continuation of another, the prophecy set can be empty).
\ifTR
    \[
        \begin{array}{c}
        \underbrace{
        (
            \underbrace{
                \mathsf{let}\;cc = \underbrace{(\mathsf{call/cc}~\ell~(\lambda k\ldotp k))}_{(\{\mathsf{prophecy}~\ell~(\emptyset,C_p,\opt{Q_p})\leadsto\mathsf{unit}~\mathsf{obs}~(\emptyset,\emptyset,I)\},\emptyset,I)}~\mathsf{in}~(\underbrace{\llbracket{e}\rrbracket}_{(\emptyset,\emptyset,Q_e)}}_{(\{\mathsf{prophecy}~\ell~(\emptyset,C_p,\opt{Q_p})\leadsto\mathsf{unit}~\mathsf{obs}~(\emptyset,\emptyset,Q_e)\},\emptyset,Q_e)};
                 \underbrace{cc\;cc}_{(\emptyset,\pblock{C_p}{\ell}\cup((Q_e\rhd \opt{Q_p})\rhd\{\mathsf{replace}~\ell:I\leadsto\mathsf{unit}\}),\bot)}))
        }_{(\{\mathsf{prophecy}~\ell~(\emptyset,C_p,\opt{Q_p})\leadsto\mathsf{unit}~\mathsf{obs}~(\emptyset,C,\bot)\},C,\bot)}
        \\
        \textrm{where}~C=(Q_e\rhd \pblock{C_p}{\ell})\cup((Q_e\rhd \opt{Q_p})\rhd\{\mathsf{replace}~\ell:I\leadsto\mathsf{unit}\})
        \end{array}
    \]
\fi

    The right-accumulator of the prophecy effect, initially $(\emptyset,\emptyset,I)$, eventually accumulates a control set $(Q_e\rhd \pblock{C_p}{\ell})\cup((Q_e\rhd \opt{Q_p})\rhd\{\mathsf{replace}~\ell:I\leadsto\mathsf{unit}\})$ and underlying effect $\bot$, because between capturing the continuation and the prompt, the program evaluates $e$ (underlying effect $Q_e$) and then invokes the captured continuation (prophecized effect $(\emptyset,C_p,\opt{Q_p})$, underlying effect $\bot$). This is also the resulting control effect set for the body; we will refer to it as $C$. 
    The type rule for the prompt itself removes all $\ell$-related prophecies and control effects, leaving both empty (since we assume no control effects escape $e$, $C_p$ should only contain $\ell$-related effects, while the prophecy set contains the single prophecy from the \lstinline|call/cc|).
    For the underlying effect, \textsc{T-Prompt} joins the immediate underlying effect $Q_e$ (from the overall judgment, not the prophecy) with all $\ell$-related behaviors in $C$ --- $e$ has no escaping control effects, and the macro-expanded loop contains no aborts, so $C_p$ ought to have only \textsf{replace} effects, meaning $C$ contains only replace effects, and $Q_e\sqcup\bigsqcup \punblock{C}{\ell}|^I_\ell$ will join the underlying effect of all continuations invoked by the body.
    \textsc{T-Prompt} also performs some checking of result types (which all hold trivially since all types involved are \textsf{unit}), and prophecy validity checks that yield constraints we can solve to derive a closed-form type rule for the loop.
    
    Completing a typing derivation with final underlying effect $Q_\ell=Q_e\sqcup\bigsqcup \punblock{C}{\ell}|^I_\ell$ is possible given the solutions to the effect-related constraints imposed by \textsf{validEffects}:
        $\bot\sqsubseteq \opt{Q_p}$, and 
        \mbox{$(Q_e\rhd \pblock{C_p}{\ell})\cup((Q_e\rhd \opt{Q_p})\rhd\{\mathsf{replace}~\ell:I\leadsto\mathsf{unit}\}) \sqsubseteq C_p$}.
    These could be read off a hypothetical derivation 
    \ifTR
    (for example, see Figure \ref{fig:infloop} in Appendix \ref{apdx:dloop})
    \else
    (see appendices in the technical report~\cite{controltr})
    \fi
    yielding the derived rule
    \vspace{-0.5em}
    \begin{mathpar}
        \inferrule{
            \Gamma\vdash e : \tau \mid (\emptyset,\emptyset,Q_e)\quad
            \bot\sqsubseteq \opt{Q_p}\quad
            (Q_e\rhd \pblock{C_p}{\ell})\cup((Q_e\rhd \opt{Q_p})\rhd\{\mathsf{replace}~\ell:I\leadsto\mathsf{unit}\}) \sqsubseteq C_p
        }{
            \Gamma\vdash \llbracket\mathsf{loop}~e\rrbracket : \mathsf{unit} \mid 
            (\emptyset,\emptyset,Q_e\sqcup\bigsqcup \punblock{C}{\ell}|^I_\ell)
        }
    \end{mathpar}
    However, this rule is more complex than we would like for a simple infinite loop (note we have not expanded $C=(Q_e\rhd \pblock{C_p}{\ell})\cup((Q_e\rhd \opt{Q_p})\rhd\{\mathsf{replace}~\ell:I\leadsto\mathsf{unit}\})$), and also exposes details of the continuation-aware effects --- which is undesirable if the goal is to derive closed rules for using the loop by itself, without developer access to full continuations, and there is an additional requirement that the prophecy used in the derivation is non-trivial (from \textsc{T-Prompt}).
    These constraints can be satisfied by $\opt{Q_p}=Q_e^*$ (thus not $\bot$, ensuring a non-trivial prophecy), with $C_p=\{\mathsf{replace}~\ell:Q_e^*\leadsto\mathsf{unit}\}$ (so $\pblock{C_p}{\ell}=C_p$).
    The choice for $C_p$ ensures than any ``unrolling'' of the loop to include any number of $Q_e$ prefixes (as generated by the left operand of the union in the last constraint) is in fact less than the replacement effect ($Q_e\rhd Q_e^*\sqsubseteq Q_e^*$).
    This then implies that $Q_e\sqcup\bigsqcup \punblock{C}{\ell}|^I_\ell\sqsubseteq Q_e\sqcup(Q_e^*)\sqsubseteq Q_e^*$, by properties of Gordon's iteration operator~\cite{ecoop17,quantalesjournal} (Section \ref{sec:bg}).  Assuming $cc\not\in\Gamma$ (or hygienic macros) and applying subsumption, this leads us to the pleasingly simple derived rule:
    \begin{mathpar}
        \inferrule*[left=D-InfLoop]{
            \Gamma\vdash e : \tau \mid (\emptyset,\emptyset,Q_e)
        }{
            \Gamma\vdash \llbracket\mathsf{loop}~e\rrbracket : \mathsf{unit} \mid (\emptyset,\emptyset,Q_e^*)
        }
    \end{mathpar}

    \ifTR
        \subsection{While Loops}
    \label{sec:while}
    While loops can similarly be macro-expressed via continuations\footnote{An alternative is to capture the continuation before any conditional. This also works, but the (sound) rule derived from this does not match that of prior work~\cite{flanagan2003atomicity,flanagan2003tldi,ecoop17}.}:
    \[
        \begin{array}{l}
        \llbracket\mathsf{while}~c~e\rrbracket =\\
        (\%\;\ell\;(\mathsf{if}(c)~(\mathsf{let}~cc=(\mathsf{call/cc}~id)~\mathsf{in}~(e;\mathsf{if}(c)~(cc~cc)~(tt)))~(tt))~
        \\
        ~~~~(\lambda\_\ldotp tt))   
        \end{array}
    \]

    Assume no other control effects escape $e$ and $c$ (i.e., $\Gamma\vdash e : \tau \mid (\emptyset,\emptyset,Q_e)$ and $\Gamma\vdash c : \mathsf{bool} \mid (\emptyset,\emptyset,Q_c)$).

    \ifTR
    Writing only the underlying effects as shorthand for the case where no control behaviors appear, the effect of the prompt's body is detailed and simplified in Figure \ref{fig:whilebody}.
    \begin{figure*}\scriptsize
    \[
        \begin{array}{ll}
            & Q_c\rhd(((\{\mathsf{prophecy}~\ell~(\emptyset,C_p,Q_p)\leadsto\mathsf{unit}~\mathsf{obs}~(\emptyset,\emptyset,I)\},\emptyset,I)\rhd Q_e\rhd Q_c\rhd \colorbox{lightgray}{$((\emptyset,C_p\cup\{\mathsf{replace}~\ell:Q_p\leadsto\mathsf{unit}\},I)\sqcup I)$}) \sqcup I)
            \\
            \equiv & Q_c\rhd(((\{\mathsf{prophecy}~\ell~(\emptyset,C_p,Q_p)\leadsto\mathsf{unit}~\mathsf{obs}~(\emptyset,\emptyset,I)\},\emptyset,I)\rhd \colorbox{lightgray}{$ Q_e\rhd Q_c\rhd (\emptyset,C_p\cup\{\mathsf{replace}~\ell:Q_p\leadsto\mathsf{unit}\},I)$}) \sqcup I)
            \\
            \equiv & Q_c\rhd(((\{\mathsf{prophecy}~\ell~(\emptyset,C_p,Q_p)\leadsto\mathsf{unit}~\mathsf{obs}~(\emptyset,\emptyset,I)\},\emptyset,I)\rhd (\emptyset,(Q_e\rhd Q_c\rhd C_p)\cup\{\mathsf{replace}~\ell:(Q_e\rhd Q_c\rhd Q_p)\leadsto\mathsf{unit}\},Q_e\rhd Q_c)) \sqcup I)
            \\
            \equiv & Q_c\rhd((\{\mathsf{prophecy}~\ell~(\emptyset,C_p,Q_p)\leadsto\mathsf{unit}~\mathsf{obs}~(\emptyset,C,Q)\},C,Q) \sqcup I)
            \\&\qquad\qquad\textsf{where}~C\overset{def}{=}(Q_e\rhd Q_c\rhd C_p)\cup\{\mathsf{replace}~\ell:(Q_e\rhd Q_c\rhd Q_p)\leadsto\mathsf{unit}\} \qquad Q\overset{def}{=}Q_e\rhd Q_c
            \\
            \equiv & (\{\mathsf{prophecy}~\ell~(\emptyset,C_p,Q_p)\leadsto\mathsf{unit}~\mathsf{obs}~(\emptyset,C,Q)\},Q_c\rhd C,Q_c\rhd Q) \sqcup Q_c
            \\
            \equiv & (\{\mathsf{prophecy}~\ell~(\emptyset,C_p,Q_p)\leadsto\mathsf{unit}~\mathsf{obs}~(\emptyset,C,Q)\},Q_c\rhd C,(Q_c\rhd Q)\sqcup Q_c)
        \end{array}
    \]
    \caption{Simplifying the body effect for a while loop (without control effects escaping subexpressions).}
    \label{fig:whilebody}
    \end{figure*}

    As in the infinite loop case, this body effect along with \textsc{T-Prompt} imposes a set of constraints which, if satisfied, allows the while loop to be well-typed in our type system.  In short, a derived type rule requires some underlying effect $Q_\ell=((Q_c\rhd Q_e\rhd Q_c)\sqcup Q_c)\sqcup\bigsqcup \punblock{(Q_c\rhd C)}{\ell}|^I_\ell$, and a choice for the prophecized control effect set $C_p$ and underlying $Q_p$ where:\\
    \begin{itemize}
        \item $Q_e\rhd Q_c\sqsubseteq Q_p$ (since $Q_e\rhd Q_c\rhd(\bot\sqcup I) = Q_e\rhd Q_c$)
        \item $(Q_e\rhd Q_c\rhd \pblock{C_p}{\ell})\cup((Q_e\rhd Q_c\rhd Q_p)\rhd\{\mathsf{replace}~\ell:I\leadsto\mathsf{unit}\})\sqsubseteq C_p$
    \end{itemize}
    (We have jumped to assuming $Q_p$ is defined, not $\bot$, since it must be greater than $I$; this ensures non-triviality of the prophecy involved.)
    This leads to another complex derived rule, which can be further simplified:
    \begin{mathpar}
        \inferrule{
            \Gamma\vdash c : \mathsf{bool} \mid (\emptyset,\emptyset,Q_c)\\
            \Gamma\vdash e : \tau \mid (\emptyset,\emptyset,Q_e)\\
            Q_e\rhd Q_c\sqsubseteq Q_p\\
            (Q_e\rhd Q_c\rhd \pblock{C_p}{\ell})\cup((Q_e\rhd Q_c\rhd Q_p)\rhd\{\mathsf{replace}~\ell:I\leadsto\mathsf{unit}\})\sqsubseteq C_p
        }{
            \Gamma\vdash \llbracket\textsf{while}~c~e\rrbracket : \mathsf{unit} \mid ((Q_c\rhd Q_e\rhd Q_c)\sqcup Q_c)\sqcup\bigsqcup \punblock{(Q_c\rhd C)}{\ell}|^I_\ell
        }
    \end{mathpar}

    As in the infinite loop case, a simpler solution is available as long as the underlying effect quantale has an iteration operator, and because the \lstinline|callcc| does not capture other \lstinline|callcc|s, we start from the assumption that the prophecy predicts no other prophecies.
    In this case, we set 
    $Q_p=(Q_e\rhd Q_c)^*$, and $C_p=\{\mathsf{replace}~\ell:(Q_e\rhd Q_c)^*\leadsto\mathsf{unit}\}$ (again, $\pblock{C_p}{\ell}=C_p$).
    Then $Q_\ell$ simplifies using properties of effect quantales and the iteration operator:
    \[
        \begin{array}{rcl}
        Q_\ell& =&((Q_c\rhd Q_e\rhd Q_c)\sqcup Q_c)\sqcup\bigsqcup \punblock{(Q_c\rhd C)}{\ell}|^I_\ell\\
        &=& (Q_c\rhd ((Q_e\rhd Q_c)\sqcup I)\sqcup\bigsqcup \punblock{(Q_c\rhd C)}{\ell}|^I_\ell\\
        &=& Q_c\rhd (((Q_e\rhd Q_c)\sqcup I)\sqcup(Q_e\rhd Q_c)^*)\\
        &=& Q_c\rhd (Q_e\rhd Q_c)^*
        \end{array}
    \]
    \else
    A suitable rule can be derived by following the same approach as in Section \ref{sec:while}: using the assumed effects of the condition and body, give the effect of the prompt's body, and use the \textsf{validEffects} constraints of the prompt rule to derive a a closed rule with some constraints on effect prophecies.  These constraints can then be simplified by appealing to the assumed restrictions on $e$ and $c$, and the iteration operator of the underlying effect quantale.  For full details of the derivation, see the supplementary material.
    \fi

    This justifies the following derived rule:
    \begin{mathpar}
        \inferrule*[left=D-While]{
            \Gamma\vdash e : (\emptyset,\emptyset,Q_e)\\
            \Gamma\vdash c : (\emptyset,\emptyset,Q_c)
        }{
            \Gamma\vdash\llbracket\mathsf{while}~c~e\rrbracket : \mathsf{unit} \mid (\emptyset,\emptyset,Q_c\rhd(Q_e\rhd Q_c)^*)
        }
    \end{mathpar}

    This derived rule is an important consistency check against prior work.
    Setting aside the additional enforcement that no other control effects escape $e$ or $c$ (as they would not in languages where control operators were used only for loops), this is nearly identical to the given rule for typing while loops with sequential effects recalled in Section \ref{sec:bg}, as in prior work~\cite{ecoop17,flanagan2003atomicity,flanagan2003tldi} where the rule's soundness was proven directly.
    The only difference is the presence of the (empty) behavior sets for other control effects and prophecies from working in a continuation-aware effect quantale.

    \subsection{While Loops Without Subexpression Prophecies}
    \label{sec:whileabort}
    \ifTR
    Thus far we have only shown derived rules for simple loops.  The infinite and while loops are limited in ways beyond simply being expected based on prior work that addressed them directly: they also ignore the potential for ``improper'' nesting of control operators --- the cases studied thus far assume subexpressions that are not part of the macro expansion do not involve further unresolved control effects --- we have not seen the interaction of loops with aborts, invoking continuations for prompts \emph{outside} a loop, or prophecies from a loop body that need to observe the presence of iteration.  Here we remedy the first two limitations, and in the next subsection address iteration of loop bodies with \emph{arbitrary} control effects.

    We first study iteration under the assumption that loop components may have aborts or continuation invocations that would exit the loop.
    While this stops short of the full generality of our system, it still encompasses many languages whose control flow constructs and operators --- when expressed in terms of tagged delimited continuations --- do not nest the capture of continuations inside macro arguments.
    This includes loops and exceptions.\footnote{By contrast, generators are a counterexample, as we see in Section \ref{sec:simplegen}.}  In these cases, subexpressions of $c$ and $e$ that capture continuations occur under prompts that are themselves within $c$ or $e$, so $P_e$ and $P_c$ above would be $\emptyset$.  In this case, $\punblock{P_\chi}{\ell}\setminus\ell=\emptyset$. 
    Figure \ref{fig:simpl_Cchiell} simplifies $\punblock{C_\chi}{\ell}\setminus\ell$. For simplicity, we assume $Q_e$ and $Q_c$ are defined (not $\bot$); if $e$ or $c$ does have underlying effect $\bot$, its effect can be coerced by subtyping to an effect with non-$\bot$ underyling effect.  The derivation could be done explicitly permitting $\bot$ underlying effects.
    \begin{figure*}
    \[
        \begin{array}{rl}
        \punblock{C_\chi}{\ell}\setminus\ell\equiv & \punblock{[C_c\cup (Q_c\rhd C)]}{\ell}\setminus\ell\\
        \equiv & [C_c\cup (Q_c\rhd (C_e\cup(Q_e\rhd C_c)\cup(Q_e\rhd Q_c\rhd \pblock{C_p}{\ell})\cup\{\mathsf{replace}~\ell:\ldots\leadsto\mathsf{unit}\}))]\setminus\ell\\
        & \textrm{because}~\ell\not\in C_c\land \ell\not\in C_e~\textrm{and unblocking for $\ell$ cancels blocking for $\ell$}\\
        \equiv & [C_c\cup (Q_c\rhd (C_e\cup(Q_e\rhd C_c)\cup(Q_e\rhd Q_c\rhd C_p\setminus\ell)))]\\
        & \textrm{because}~\ell\not\in C_c\land \ell\not\in C_e~\textrm{and definition of $-\setminus\ell$}\\
        \equiv & [C_c\cup (Q_c\rhd (C_e\cup(Q_e\rhd C_c)\cup(Q_e\rhd Q_c\rhd ((Q_p\rhd C_e)\cup (Q_p\rhd Q_e\rhd C_c)))))]\\
        & \textrm{after substituting for $C_p$}\\
        \sqsubseteq & [C_c\cup (Q_c\rhd (C_e\cup(Q_e\rhd C_c)\cup((Q_p\rhd C_e)\cup (Q_p\rhd Q_e\rhd C_c))))]\\
        & \textrm{because}~Q_e\rhd Q_c\rhd Q_p\sqsubseteq (Q_e\rhd Q_c)^*\rhd Q_p\sqsubseteq Q_p~\textrm{for choice of}~Q_p=(Q_e\rhd Q_c)^*\\
        \sqsubseteq & [C_c\cup (Q_c\rhd ((Q_e\rhd C_c)\cup(Q_p\rhd C_e)\cup (Q_p\rhd Q_e\rhd C_c)))]\\
        & \textrm{because}~C_e=I\rhd C_e\sqsubseteq Q_p\rhd C_e
        \end{array}
    \]
    \caption{Simplifying aborting while loop body effect}
    \label{fig:simpl_Cchiell}
    \end{figure*}
    \else
    If we permit the condition and loop body to contain aborts (i.e., $C_c$ and $C_e$ may contain abort effects), the same process carried out above leads to a nautral extension of the rule.
    \fi
    The closed derived rule under these assumptions then becomes:
    \begin{mathpar}
        \inferrule[D-AbortingWhile]{
            l\not\in C_c \\ l\not\in C_e\\
            \Gamma\vdash c : (\emptyset,C_c,Q_c)\\
            \Gamma\vdash e : (\emptyset,C_e,Q_e)
        }{
            \Gamma\vdash\llbracket\mathsf{while}_\ell~c~e\rrbracket : \mathsf{unit} \mid (\emptyset,\punblock{C_\chi}\setminus\ell,Q_c\rhd(Q_e\rhd Q_c)^*)
        }
    \end{mathpar}
    The full expansion of the control effect set naturally corresponds to the four intuitive cases where control may exit the loop by means other than the condition resolving to false:
    \begin{itemize}
        \item The first time the condition is executed
        \item The first time the body is executed (after first executing the condition)
        \item A subsequent condition execution (after executing the initial condition, then repeating the body and condition some number of times)
        \item A subsequent body execution (after executing the initial condition, then repeating the condition and body some number of times, followed by a normal execution of the condition)
    \end{itemize}

    \subsection{Infinite Loops with Control}
    \label{sec:infcontrol}
    While it is possible to derive a fully general rule for while loops that admit the full flexibility of our effect system --- without restricting the effects of subexpressions --- the details are quite verbose.  We instead demonstrate the principles on the slightly simpler example of the infinite loop; following the same process for the while loop yields a similar but correspondingly more complex result (in particular, the control effect set is the same as in Section \ref{sec:whileabort}).  Showing the example of the infinite loop also demonstrates quite clearly that the fact that our transformation preserving lax iterability (Section \ref{sec:contiter}) is not only a theoretical nicety, but useful.

    \ifTR
    The effect $\chi$ of the prompt body in this case, for $\chi_e=(P_e,C_e,Q_e)$ and $\chi_{invk}=(\pblock{P_p}{\ell},\pblock{C_p}{\ell}\cup\{\mathsf{replace}~\ell:Q_p\leadsto\mathsf{unit}\},\bot)$, is
    given in Figure \ref{fig:infcontrol}.
    \begin{figure*}
    \[\scriptsize\begin{array}{r@{~\equiv~}l}
        \chi & (\{\mathsf{prophecy}~\ell~(P_p,C_p,Q_p)\leadsto\mathsf{unit}~\mathsf{obs}~(\emptyset,\emptyset,I)\},\emptyset,I)\rhd \chi_e\rhd \chi_{invk}\\
        & (P_e\cup\{\mathsf{prophecy}~\ell~(P_p,C_p,Q_p)\leadsto\mathsf{unit}~\mathsf{obs}~(P_e,C_e,Q_e)\},C_e,Q_e)\rhd \chi_{invk}\\
        & (\pblock{P_p}{\ell}\cup((P_e\cup\{\mathsf{prophecy}~\ell~(P_p,C_p,Q_p)\leadsto\mathsf{unit}~\mathsf{obs}~(P_e,C_e,Q_e)\})\blacktriangleright\chi_{invk}),C_e\cup\pblock{(Q_e\rhd C_p)}{\ell}\cup\{\mathsf{replace}~\ell:(Q_e\rhd Q_p)\leadsto\mathsf{unit}\},Q_e)\\
    \end{array}\]
    \caption{Prompt body effect for infinite loops with arbitrary nested control.}
    \label{fig:infcontrol}
    \end{figure*}
    This requires a choice of prophecy, satisfying (after simplifying $\chi_p$ above):
    \begin{itemize}
        \item $\bot\sqsubseteq Q_p$
        \item $\punblock{C_e\cup(Q_e\rhd \pblock{C_p}{\ell})\cup\{\mathsf{replace}~\ell:(Q_e\rhd Q_p)\leadsto\mathsf{unit}\}))}{\ell}
                \sqsubseteq \punblock{C_p}{\ell}$
        \item $ \punblock{P_e\blacktriangleright (\pblock{P_p}{\ell},\pblock{C_p}{\ell}\cup\{\mathsf{replace}~\ell:Q_p\leadsto\mathsf{unit}\},\bot)}{\ell} \sqsubseteq \punblock{P_p}{\ell}$
    \end{itemize}
    Unsurprisingly, the underlying and control constraints suggest a choice of $Q_p=Q_e^*$ and $C_p=(Q_e^*\rhd C_e)\cup\{\mathsf{replace}~\ell:(Q_e^*)\leadsto\mathsf{unit}\}$.
    So a solution to the prophecy constraint would be given by a solution to
    \[ \punblock{P_e\blacktriangleright (\pblock{P_p}{\ell},\pblock{(Q_e^*\rhd C_e)\cup\{\mathsf{replace}~\ell:(Q_e^*)\leadsto\mathsf{unit}\}}{\ell},\bot)}{\ell} \sqsubseteq \punblock{P_p}{\ell} \]
    which, since unblocking a prophecy recursively unblocks observed prophecies and control effects, and unblocking something blocked for the same tag cancels, simplifies to
    \[ P_e\blacktriangleright {P_p},(Q_e^*\rhd C_e)\cup\{\mathsf{replace}~\ell:(Q_e^*)\leadsto\mathsf{unit}\},\bot) \sqsubseteq \punblock{P_p}{\ell} \]
    We would like a solution without prophecies blocked for $\ell$, in which case we may solve
    \[ P_e\blacktriangleright (P_p,(Q_e^*\rhd C_e)\cup\{\mathsf{replace}~\ell:(Q_e^*)\leadsto\mathsf{unit}\},\bot) \sqsubseteq P_p \]
    because unblocking prophecies that are already unblocked has no effect.
    This is solved for 
    \[
        P_p = \bigcup_{i\in\mathbb{N}} P_e\blacktriangleright (P_e,(Q_e^*\rhd C_e)\cup\{\mathsf{replace}~\ell:(Q_e^*)\leadsto\mathsf{unit}\},\bot)^i
    \]
    which is also a solution to the original constraint.

    Solving for the final effect of the prompt expression itself:
    \begin{itemize}
        \item $Q = Q_e\sqcup\bigsqcup \punblock{C_{\chi}}{\ell}|^I_\ell = Q_e\sqcup (Q_e^*) = Q_e^*$
        \item $C = \punblock{C_{\chi}}{\ell}\setminus^I\ell = Q_e^*\rhd C_e$
        \item $P = \punblock{P_{\chi}}{\ell}\setminus^I\ell$\\
        $ =  (\punblock{\pblock{P_p}{\ell}}{\ell}\cup\punblock{((P_e\cup\{\mathsf{prophecy}~\ell~(P_p,C_p,Q_p)\leadsto\mathsf{unit}~\mathsf{obs}~(P_e,C_e,Q_e)\})\blacktriangleright\chi_{invk})}{\ell})\setminus^I\ell$\\
        $={P_p}\setminus^I\ell\cup \punblock{(P_e\blacktriangleright\chi_{invk})}{\ell}\setminus^I\ell$\\
         $={P_p}\setminus^I\ell\cup \punblock{(P_e\blacktriangleright(\pblock{P_p}{\ell},\pblock{C_p}{\ell}\cup\{\mathsf{replace}~\ell:Q_p\leadsto\mathsf{unit}\},\bot))}{\ell}\setminus^I\ell
        $\\
         $={P_p}\setminus^I\ell\cup (P_e\blacktriangleright(\punblock{\pblock{P_p}{\ell}}{\ell},\punblock{\pblock{C_p}{\ell}}{\ell}\cup\{\mathsf{replace}~\ell:Q_p\leadsto\mathsf{unit}\},\bot))\setminus^I\ell$\\
         $={P_p}\setminus^I\ell\cup (P_e\blacktriangleright({P_p},{C_p}\cup\{\mathsf{replace}~\ell:Q_p\leadsto\mathsf{unit}\},\bot))\setminus^I\ell$\\
        $={P_p}\setminus^I\ell\cup (P_e\blacktriangleright({P_p}\setminus^I\ell, Q_e^*\rhd C_e,Q_e^*))
        $
    \end{itemize}
    A thorough reader will notice that because
    ${P_p}\setminus^I\ell = \bigcup_{i\in\mathbb{N}}P_e\blacktriangleright(P_e,(Q_e^*\rhd C_e), Q_e^*)$,
    this is less than  $(P,C,Q)\sqsubseteq(P_e,C_e,Q_e)^*$ using Section \ref{sec:contiter}'s notion of iteration, licensing the following derived rule that accounts for arbitrary body effects despite its superficial simplicity:
    \else
    Following the approach taken for other rules --- gathering the constraints for the macro-expansion to be well-typed, then simplifying the constraints with judicious application of subsumption and finding (not-necessarily-minimal) solutions to constraints --- leads to the following derived rule relying on Section \ref{sec:contiter}'s lifted iteration operator:
    \fi
    \begin{mathpar}
        \inferrule*[left=D-FullInfLoop]{
            \Gamma\vdash e : \tau_e \mid \chi_e
        }{
            \Gamma\vdash \llbracket\mathsf{loop}~e\rrbracket : \mathsf{unit} \mid \chi_e^*
        }
    \end{mathpar}
    When the prophecy set is empty, this rule simplifies (by unfolding the definition of $(-)^*$) to \textsc{D-InfLoop} from Section \ref{sec:infloop}.  If the process above is followed for the while loop expansion used in Sections \ref{sec:while} and \ref{sec:whileabort}, the resulting rule simplifies to \textsc{D-AbortingWhile} and \textsc{D-While} when assuming the same constraints as in those sections.

    \subsection{Exceptions}
    \label{sec:realexceptions}

    In Use Case \ref{use:basicabort}, we informally considered typing a macro expansion of basic exception handling facilities:
    \[
        \begin{array}{r@{~=~}l}
        \llbracket\mathsf{try}\;e\;\mathsf{catch}\;\overline{C_i\Rightarrow e_i}^n\rrbracket & (\%~C_1~\ldots~(\%~C_n~e~e_n)~\ldots~e_1)
        \\
        \llbracket\mathsf{throw}_C e\rrbracket & \textrm{\lstinline|(abort C e)|}
        \end{array}
    \]
    The earlier discussion focused on the need to track what effects occurred before a throw vs.\ after.  Now that we have discussed the type rules for prompts and aborts, this mapping is clear, and derived rules for the simple case (no escaping control effects) follow easily from the rules for prompt and abort.  Assuming there is a designated prompt label $\ell_C$ corresponding to every thrown type $C$:

    \begin{mathpar}
        \inferrule[D-TryCatch]{
            \Gamma\vdash e : \tau \mid (\emptyset,\{\mathsf{abort}~\ell_C~Q\leadsto C\},Q_e)\\
            \Gamma\vdash h : C\xrightarrow{(\emptyset,\emptyset,Q_h)}\tau \mid (\emptyset,\emptyset,I)
        }{
            \Gamma\vdash\llbracket\mathsf{try}~e~\mathsf{catch}~C\Rightarrow h\rrbracket : \tau \mid (\emptyset,\emptyset,Q_e\sqcup(Q\rhd Q_h))
        }
        \and
        \inferrule*[left=D-Throw]{
            \Gamma\vdash e : C \mid (\emptyset,\emptyset,Q_e)
        }{
            \Gamma\vdash \llbracket\mathsf{throw}_C~e\rrbracket : \tau \mid (\emptyset,\{\mathsf{abort}~\ell_C~Q_e\leadsto C\},\bot)
        }
    \end{mathpar}
    Iterating the construction for \textsc{D-TryCatch} while permitting other aborts in the body effect and still preventing control effects in each handler gives a similar rule for an arbitrary set of exceptions.  Mimicking the exact semantics of Java- or C\# style exceptions with multiple catch blocks per try --- specifically, that a throw within one catch block is not handled by catch blocks for the same try --- requires sum types and re-throwing, which is possible but does not illuminate the details of our continuation effects.

    \subsection{Generalized Iterators}
    \label{sec:simplegen}

    Here we consider a simple encoding of generators in terms of delimited continuations.  Our encoding is similar to Coyle and Crogono's~\cite{Coyle:1991:BAI:122179.122181}, but written independently (we first gave an encoding ourselves, then figuring it was unlikely to be new, located a reference with a similar approach).

    The design of our encoding is as follows: the code that traverses some data structure (or lazily enumerates a sequence) is given as a function taking two arguments: a function to pass a value to a consumer (often a primitive named \lstinline|yield| in many implementations, like C\#), and a function to indicate that iteration is complete (we will call it \lstinline|done|, but it is sometimes given other names such as \lstinline|yield break| in C\#).  Given such code, the function \lstinline|iterate| that we define returns a stateful procedure, each invocation of which returns either the \emph{next} value from the iterator, or a value indicating completion (via an option type).  Only one invocation of \lstinline{iterate} is required; then each time client code is ready for the next value, it invokes the same function returned from the one call to \lstinline|iterate|.

    \begin{figure}
\begin{lstlisting}[language=Scheme,basicstyle=\scriptsize\ttfamily]
(define iterate
  (lambda (f)
    (let* ([tag (new-prompt)]
           [resumption '*]
           [get-next 
              ($\lambda$ () (% (resumption '()) ($\lambda$ (v) v) #:tag tag))]
           [yield ($\lambda$ (val) (call/cc ($\lambda$ (res) 
                                                 (set! resumption res) 
                                                 (abort/cc tag 
                                                    `(Some ,val))) tag))]
           [finish ($\lambda$ () (set! resumption ($\lambda$ () (error))) 
                              (abort/cc tag 'None))]
           )
      (% (begin (call/cc ($\lambda$ (k) (set! resumption k) (abort/cc tag get-next)) tag) 
                (f yield finish) 
                (finish))
         ($\lambda$ (v) v) #:tag tag ))
      ))
\end{lstlisting}
\caption{Racket code (untyped) for a basic generator.}
\label{fig:gen1}
\end{figure}

    Figure \ref{fig:gen1} gives untyped Racket code for a simple generator.
    Compared to our core language, Racket names several primitives slightly differently, moves the tag to the last argument position (it is optional in Racket), and uses a keyword argument \lstinline|#:tag| to specify the prompt tag.
    \lstinline|iterate| takes as an argument a two-parameter function \lstinline|f|.  It allocates a fresh prompt tag \lstinline|tag|, and a placeholder for the resumption continuation --- initially \lstinline|'*| --- which will be used to store the continuation that will produce the remaining items to be generated.  \lstinline|get-next| assmes that placeholder has been initialized: when invoked, it creates a new prompt, and invokes the resumption context.
    \lstinline|yield| captures the enclosing context up to the nearest prompt for \lstinline|tag|, stores it into \lstinline|resumption|, and then aborts with (an option of) the generated value.  The intent is that \lstinline|yield| is invoked inside the prompt created by \lstinline|get-next|, by \lstinline|f|.  The \lstinline[language=Scheme]|abort/cc| throws the value\footnote{The use of back-tick and comma here is how Racket (like Scheme) exposes a shorthand for quasiquotation.} to the handler, which in this case is the identity function, returning the yielded element to site of the call to \lstinline|get-next|.
    \lstinline|finish| marks the iteration as complete.

    The main body after the let-bindings opens a new prompt, whose body is \emph{almost} \lstinline|f| with \lstinline|yield| and \lstinline|finish| provided.  This would permit the code that knows how to generate items --- \lstinline|f| --- to emit items incrementally and indicate its completion.  The actual body is slightly more complex.  The body must make an extra call to \lstinline|finish| after the call to \lstinline|f|, in case \lstinline|f| neglects to call \lstinline|finish| itself. The body must also initialize the resumption context.  It does this by capturing a continuation \lstinline|(begin $\bullet$ (f yield finish) (finish))|, and storing that in the \lstinline|resumption| slot.  The body of that \lstinline|call/cc| then aborts, yielding the function \lstinline|get-next| to the caller of \lstinline|iterate| (by way of applying the identity-function handler, as in our formal semantics).

    Clients can then obtain generators from \lstinline|iterate|, as in the example REPL session in Figure \ref{fig:repl}.

\begin{figure}
\begin{lstlisting}[language=Scheme,basicstyle=\scriptsize\ttfamily]
> (define foo ($\lambda$ (yield done) (yield "a") (yield "b") (done) (yield "never executed")))
> (define next (iterate foo))
> (println (next))
'(Some "a")
> (println (next))
'(Some "b")
> (println (next))
'None
> 
\end{lstlisting}
\caption{Example REPL session using the Racket iterator from Figure \ref{fig:gen1}.}
\label{fig:repl}
\end{figure}
    In general rather than \lstinline|foo|, which is not very interesting by itself, \lstinline|iterate| would be used with routines that yield successive elements of a data structure (list, tree, etc.), or perform come non-trivial computation only on demand (i.e., a form of stream).

    We can express nearly the same code in our core language.  There are only a couple small adjustments to make:
    \begin{itemize}
        \item We must explicitly use sum types for \lstinline|resumption| and the informal option type.
        \item We must introduce a separate prompt and tag to separate the prompt and abort that throws \lstinline|get-next| from the prompt and abort that yields values passed to \lstinline|yield|, since the two would need to have differing return types.
        (Another option would be to use sum types again, but this complicates client code significantly.)
    \end{itemize}
    In addition, our core language cannot give \lstinline|iterate| its own type, but must instead define it as a macro: our core language lacks \lstinline|new-prompt| to declare fresh prompt tags, and even with that, we would require type-level abstraction over tags to give \lstinline|f| a type.  A full language implementation would need to resolve these limitations, but for our current purposes the macro approach is adequate.  

    \begin{figure*}
\begin{lstlisting}[basicstyle=\ttfamily\scriptsize]
$\llbracket\texttt{iterate}~\texttt{init}~\texttt{gen}~\texttt{f}\rrbracket=$
    let resumption /* : ref (Uninit + cont unit (Pp,Cp,E*) (option $\tau$) + Done) */ = ref (inl Uninit) in
    let get-next /* : unit $\xrightarrow{E^*}$ (option $τ$) */ = 
        ($\lambda$ _. (case (! resumption) of
                      ([(inr (inl resume)) (% gen (resume tt) ($\lambda$ (v) v))]
                       [_ None]))
    in
    let yield /* : $\tau$ $\xrightarrow{(\{\mathsf{prophecy}~gen~(P_p,C_p,E^*)\leadsto(\mathsf{option}~\tau)~\mathsf{obs}~(\emptyset,\emptyset,I)\},\{\mathsf{abort}~gen~I\leadsto(\mathsf{option}~\tau)\},\bot)}$ unit */ =
        ($\lambda$ val. (call/cc gen ($\lambda$ res. (resumption := (inr (inl res))) (abort gen (Some val)))))
    in
    let finish /* : unit $\xrightarrow{(\emptyset,\{\mathsf{abort}~gen~I\leadsto(\mathsf{option}~\tau)\},\bot)}$ unit */ =
        ($\lambda$ _. (resumption := (inr (inr Done))) (abort gen None))
    in
      (% init /* :: typeof(get-next) */
         (begin
           (% gen /* :: option τ */
              (begin (call/cc gen ($\lambda$ k. (resumption := (inr (inl k))) (abort init get-next))) (f yield finish) (finish))
              ($\lambda$ v. v))
           get-next)
         ($\lambda$ v. v))
\end{lstlisting}
        \caption{A typed generator, parameterized (here implicitly) by two tags, \lstinline|init| and \lstinline|gen|.}
        \label{fig:gen2}
    \end{figure*}

    Figure \ref{fig:gen2} gives a version in our core language, assuming an instantiation with mutable references (with the identity effect for all uses) and a sum type, with type annotations. It makes the distinctions mentioned above, aborting fron inside the inner initial prompt (the \lstinline|gen| prompt) to the outer initial prompt (the \lstinline|init| prompt) to separate the result types, but still return \lstinline|get-next| to \lstinline|iterate|'s client only after the the resumption is initialized.  As in Figure \ref{fig:gen1}, the initially-captured continuation used for the first call to \lstinline|get-next| is still \lstinline|(begin $\bullet$ (f yield finish) (finish))|.

    Before we discuss typing uses of \lstinline|iterate|, let us consider what a desirable typing would entail.  First, note that the result of a ``call'' to \lstinline|iterate| is a closure (specifically \lstinline|get-next|), and assuming the underlying effect quantale ignores the reference manipulation during initialization, the immediate effect of evaluating a use of \lstinline|iterate| should be $(\emptyset,\emptyset,I)$.

    The assumed argument types for $f$ reflect the declared types for \lstinline|yield| and \lstinline|finish|.
    The main point to justify above is the latent effect assumed for $f$.
    Consider $E$ to be an upper bound on the underlying effect of $f$'s body between two successive calls to \lstinline|yield|.\footnote{Or between a \lstinline|yield| and a final call to \lstinline|finish|, or between \lstinline|finish| and any ``regular'' return by $f$.}

    \begin{figure*}
    \begin{mathpar}
    \mathsf{GenProphs}(P,\ell,E)=\begin{array}{l}
        \forall \ell',P',C',Q',P'',C'',Q'',\tau\ldotp
         \mathsf{prophecy}~\ell'~(P',C',Q')\leadsto\tau~\mathsf{obs}~(P'',C'',Q'')\in P \Rightarrow\\
        \qquad \ell'=\ell \land\tau=\mathsf{bool}\land \mathsf{GenProphs}(P',\ell,E) \land 
        C' \sqsubseteq \{\mathsf{abort}~\ell~(E^*)\leadsto(\mathsf{option}~\tau)\}\land
        Q'\sqsubseteq E^*\\
        \qquad \land\punblock{P''}{\ell}\sqsubseteq\punblock{P'}{\ell}
            \land C''\sqsubseteq C'
            \land Q''\sqsubseteq Q'
    \end{array}
    \and
        \inferrule[D-Iterate]{
            \Gamma\vdash f :
                \left(\begin{array}{l}
                     (\tau\xrightarrow{(\{\mathsf{prophecy}~gen~(P_p,C_p,E^*)\leadsto(\mathsf{option}~\tau)~\mathsf{obs}~(\emptyset,\emptyset,I)\},\{\mathsf{abort}~gen~I\leadsto(\mathsf{option}~\tau)\},\bot)}\mathsf{unit})\arcr
                    \qquad\xrightarrow{I}
        (\mathsf{unit}\xrightarrow{(\emptyset,\{\mathsf{abort}~gen~I\leadsto(\mathsf{option}~\tau)\},\bot)}\mathsf{unit})\arcr
        \qquad\qquad\xrightarrow{(P,C,E^*)}\mathsf{unit}
                \end{array}\right)
        \\
        C \sqsubseteq \{ \mathsf{abort}~gen~(E^*)\leadsto(\mathsf{option}~\tau) \}
        \\
        \mathsf{GenProphs}(P,\mathit{gen},E)
        }{
            \Gamma\vdash \llbracket\mathsf{iterate}~init~gen~f\rrbracket : \mathsf{unit}\xrightarrow{(\emptyset,\emptyset,E^*)}\mathsf{option}~\tau \mid (\emptyset,\emptyset,I)
        }
    \end{mathpar}
    \caption{Derived rule for generators.}
    \label{fig:iterate}
    \end{figure*}

    Following our approach above, we can give the derived rule in Figure \ref{fig:iterate} assuming all prophecies and control effects are related to the generator tag $\mathit{gen}$.
    Key to this derived rule above is the fact that \lstinline|f|'s body is constrained to have prophecies and control effects related only to $gen$.  Because all invocations of \lstinline|f| or continuations containing parts of \lstinline|f|'s body occur under prompts for $gen$, all of those effects are resolved inside calls to \lstinline|get-next|, leaving only the underlying effect. (This rule does assume no other effects --- such as aborts from exceptions --- escape the body of the generator.)
    \textsc{D-Iterate} also permits the prophecy ``emitted'' by the continuation capture for \lstinline|yield| to vary between uses of \textsc{D-Iterate}: $P_p$ and $C_p$ are the prophecies and control effects that the uses of \lstinline|yield| for that particular iterator construction should predict.  (The \textsf{GenProphs} antecedent in \textsc{D-Iterate} ensures the prophecy is validated.)  Intuitively, $P_p$ and $C_p$ should be chosen such that reusing them for the prophecy of every continuation captured by \lstinline|yield| predicts the effect of the code up to the next \lstinline|yield| or the end of the function --- including the effects of the next use of \lstinline|yield|, which itself must predict the effect of the code up to the next \lstinline|yield|\ldots and so on.

    With the derived rule in hand, we would hope that it is precise enought to validate the intuitive effect for common constructions.

    \begin{example}[Iterating a Pure Function]
        Consider the example of a generator that always immediately yields the boolean \lstinline|true|.  Under the assumptions made earlier (that reference mutations are ignored by the effect system), we may derive:
        {\scriptsize \[ \Gamma\vdash \llbracket\mathsf{iterate}~init~gen~(\lambda y,f\ldotp \llbracket\mathsf{loop}~(y~\mathsf{true})\rrbracket)\rrbracket : \mathsf{unit}\xrightarrow{(\emptyset,\emptyset,I)}\mathsf{option}~\mathsf{bool} \mid (\emptyset,\emptyset,I) \]}
        This follows from the rule above because the loop body $(y~\mathsf{true})$ has effect $\chi$ as defined in Figure \ref{fig:puregen_eff}, which is essentially the assumed latent effect for the \lstinline|yield| argument, assuming the presence of (equi-)recursively-defined prophecies.
        By the derived rule from Section \ref{sec:infcontrol}, the overall loop then has effect $\chi^*$, which is then the latent effect of the function.
        The conditions on $C$ and $P$ in \textsc{D-Iterate} are then clearly satisfied with $E=I$: the infinite union in the iteration above only creates prophecies satisfying \textsf{GenProphs}, in particular because all finite iterations of the effect produce prophecies less than the recursive prophecy (after the unblocking).

        To provide a bit more intuition for this, note that the prophecy set component $P_{\chi}$ of $\chi$ is itself less than the recursive prophecy $P_{fix}$ predicted by $P_{\chi}$: $P_\chi\sqsubseteq P_{fix}$.  This is enough to show that the prophecy component of $\chi^2$ (as shown in Figure \ref{fig:puregen_eff}) remains valid (in the sense of \textsf{validEffects}' prophecy validation).  This extends to any of the finite iterations introduced by the iteration operator of Section \ref{sec:contiter}.
        Because $P_{fix}$'s recursively-defined observations are again $P_{fix}$, then sequencing any finite iteration of $\chi$ with itself yields a finite approximation of $P_{fix}$.  The approximation is $\sqsubseteq P_{fix}$ in every case because at the point the approximation drops off with the ``base case'' observation $(\emptyset,\emptyset,I)$, this is less than the observation in $P_{fix}$ because the prophecy and control components are merely empty sets, and $I\sqsubseteq I$.
        
        So the overall latent effect of the iterator produced by \textsf{iterate} is $(\emptyset,\emptyset,I)$.

    \end{example}

    \begin{figure*}\scriptsize
    \[\textrm{let}~P_{fix}\equiv\{\mu{}P\ldotp\mathsf{prophecy}~gen~(\{P\},\{\mathsf{abort}~gen~I\leadsto\mathsf{bool}\},I)~\mathsf{obs}~(\{P\},\{\mathsf{abort}~gen~I\leadsto\mathsf{bool}\},\bot)\leadsto\mathsf{unit}\} \]
    \[ P_{fix} \equiv \{\mathsf{prophecy}~gen~(P_{fix},\{\mathsf{abort}~gen~I\leadsto\mathsf{bool}\},I)~\mathsf{obs}~(P_{fix},\{\mathsf{abort}~gen~I\leadsto\mathsf{bool}\},\bot)\leadsto\mathsf{unit}\}\]
    \[ \chi\equiv(\{\mathsf{prophecy}~gen~\textrm{\textcolor{blue}{$(P_{fix},\{\mathsf{abort}~gen~I\leadsto\mathsf{bool}\},I)$}}~\mathsf{obs}~(\emptyset,\emptyset,I)\}, \{\mathsf{abort}~gen~(\mathsf{option}~I\leadsto\mathsf{bool})\}, \bot) \]
    \[ 
        \begin{array}{lcl}
        P_{\chi^2}&\equiv& \{\mathsf{prophecy}~gen~\textrm{\textcolor{blue}{$(P_{fix},\{\mathsf{abort}~gen~I\leadsto\mathsf{bool}\},I)$}}~\mathsf{obs}~(\emptyset,\emptyset,I)\}\blacktriangleright\chi\\
        &\equiv& \{\mathsf{prophecy}~gen~\textrm{\textcolor{blue}{$(P_{fix},\{\mathsf{abort}~gen~I\leadsto\mathsf{bool}\},I)$}}~\mathsf{obs}~(P_\chi,C_\chi,Q_\chi)\}\\
        \end{array}
    \]
    \caption{The effect of the infinite loop body that always immediately yields a value.}
    \label{fig:puregen_eff}
    \end{figure*}

    This example assumes recursive prophecies, not present in our initial presentation, and which up to now we have not required.  While we can construct the semantics of any finite prefix of an execution by taking the union over all finite iterations per Section \ref{sec:contiter}, that construction ignores whether or not the observations in the resulting prophecies could actually be consistent with the predictions.  In order to write an actual prediction that over-approximates these observations, we require recursive prophecies.  

    \begin{example}[Iterating an Impure Looping Generator]
        One style of use for generators is to implement on-demand (pull) streams.  In a concurrent setting this may be implemented in a way where on each request the generator takes a lock in order to find the next element to produce.  To avoid unnecessary serialization, the lock must be released before yielding a new element, then reacquired.  There are two ways to implement this.  The first approach acquires, searches, and releases the lock on each call to \lstinline|get-next|.  In this case the body effect is pure as above.  An alternative is to assume the caller holds the lock initially, and the iterator should release and reaquire the lock. 

        Consider iterating the following generator function, which presumes the existence of some auxilliary state in the environment:\footnote{Technically this code should contain an inner loop because condition variable implementations permit spurious wake-ups, for a variety of reasons, including that even if the intended condition was true when the sleeping thread was signalled, it may have been falsified again before the thread has a chance to execute again.  This is a simplified example to focus on the effects.}
\begin{lstlisting}[language=Scheme]
($\lambda$ (yield finish)
    (begin (while (not-done?) 
             (begin (cond_wait l) 
                      (yield (first-elem))))
             (finish '())))
\end{lstlisting}
        The underlying effect of the body would be $(\{l\},\{l\})^*=(\{l\},\{l\})$ in the underlying locking effect quantale (it begins assuming $l$ is held, and finishes assuming $l$ is held).  This effect is observed in the execution prefix preceding both the call to \lstinline|finish| and prior to ``each'' call to \lstinline|yield|.
        The loop body has nearly the same effect as the body in the previous example, but with some uses of $I$ replaced by $(\{l\},\{l\})$, per Figure \ref{fig:condwait_eff}.
        The same argument for the prophecies always remaining valid extends to this case, taking advantage of the fact that $(\{l\},\{l\})^*=(\{l\},\{l\})$.
        This makes the latent effect of the generator function for the above also $(\{l\},\{l\})$.
    \end{example}
    \begin{figure*}
        {\scriptsize
    \[ \chi\equiv(\{\mathsf{prophecy}~gen~\textrm{\textcolor{blue}{$(P_{fix},\{\mathsf{abort}~gen~I\leadsto\mathsf{bool}\},(\{l\},\{l\}))$}}~\mathsf{obs}~(\emptyset,\emptyset,I)\}, \{\mathsf{abort}~gen~(\mathsf{option}~I\leadsto\mathsf{bool})\}, (\{l\},\{l\})) \]
        }
        \vspace{-2em}
        \caption{Loop body effect of the loop that waits and synchronizes for a value.}
        \label{fig:condwait_eff}
    \end{figure*}

    Because \textsc{D-Iterate} is a derived rule in a sound type-and-effect system, we know it is sound.  
    We could also go beyond the rule \textsc{D-Iterate} above, which assumes the only control effects and prophecies escaping the body of \lstinline|f| are related to the generator infrastructure.  This is naturally not always the case --- in a language like C\#, the body of a generator can throw exceptions. We do not explore the details of deriving rules for such combinations here, but the same methodology employed thus far still applies to that case.

    \else
    \subsection{Other Derived Rules}

    We have similarly macro-expressed traditional while loops, exceptions (\lstinline|try-catch| and \lstinline|throw|), and generators~\cite{Coyle:1991:BAI:122179.122181} (a form of coroutine now common in C\#, Python, and JavaScript), and derived rules for each.  Figure \ref{fig:derived_rules} gives derived rules for these constructs under a variety of conditions, each derived following a similar process to \textsc{D-InfLoop}.

    \begin{figure*}\small
        \begin{mathpar}
        \inferrule[\small D-While]{
            \Gamma\vdash e : (\emptyset,\emptyset,Q_e)\\
            \Gamma\vdash c : (\emptyset,\emptyset,Q_c)
        }{
            \Gamma\vdash\llbracket\mathsf{while}~c~e\rrbracket : \mathsf{unit} \mid (\emptyset,\emptyset,Q_c\rhd(Q_e\rhd Q_c)^*)
        }
        \and
        \inferrule[\small D-FullInfLoop]{
            \Gamma\vdash e : \tau_e \mid \chi_e
        }{
            \Gamma\vdash \llbracket\mathsf{loop}~e\rrbracket : \mathsf{unit} \mid \chi_e^*
        }
        \and
        \inferrule[\small D-AbortingWhile]{
            l\not\in C_c \\ l\not\in C_e\\
            \mathsf{AbortsOnly}(C_c\cup C_e)\\
            \Gamma\vdash c : (\emptyset,C_c,Q_c)\\
            \Gamma\vdash e : (\emptyset,C_e,Q_e)
        }{
            \Gamma\vdash\llbracket\mathsf{while}_\ell~c~e\rrbracket : \mathsf{unit} \mid (\emptyset,
            \bigcup\left\{\begin{array}{@{}c@{}}
                C_c 
                \cup(Q_c\rhd Q_e\rhd C_c)
                \cup(Q_c\rhd (Q_e\rhd Q_c)^*\rhd C_e),\arcr
                (Q_c\rhd (Q_e\rhd Q_c)^*\rhd Q_e\rhd C_c)
            \end{array}\right\},
            Q_c\rhd(Q_e\rhd Q_c)^*)
        }
        \and
        \inferrule[\small D-TryCatch]{
            \Gamma\vdash e : \tau \mid (\emptyset,\{\mathsf{abort}~\ell_C~Q\leadsto C\},Q_e)\\\\
            \Gamma\vdash h : C\xrightarrow{(\emptyset,\emptyset,Q_h)}\tau \mid (\emptyset,\emptyset,I)
        }{
            \Gamma\vdash\llbracket\mathsf{try}~e~\mathsf{catch}~C\Rightarrow h\rrbracket : \tau \mid (\emptyset,\emptyset,Q_e\sqcup(Q\rhd Q_h))
        }
        \quad
        \inferrule[\small D-Throw]{
            \Gamma\vdash e : C \mid (\emptyset,\emptyset,Q_e)
        }{
            \Gamma\vdash \llbracket\mathsf{throw}_C~e\rrbracket : \tau \mid (\emptyset,\{\mathsf{abort}~\ell_C~Q_e\leadsto C\},I)
        }
        \and
        \mbox{$
    \mathsf{GenProphs}(P,\ell,E)=\begin{array}{l}
        \forall \ell',P',C',Q',P'',C'',Q'',\tau\ldotp\\
         \mathsf{prophecy}~\ell'~(P',C',Q')\leadsto\tau~\mathsf{obs}~(P'',C'',Q'')\in P \Rightarrow\\
         \ell'=\ell \land\tau=\mathsf{bool}\land \mathsf{GenProphs}(P',\ell,E) \land 
        C' \sqsubseteq \{\mathsf{abort}~\ell~E^*\leadsto(\mathsf{option}~\tau)\}\\
         \land
        Q'\sqsubseteq E^*\land\punblock{P''}{\ell}\sqsubseteq\punblock{P'}{\ell}
            \land C''\sqsubseteq C'
            \land Q''\sqsubseteq Q'
    \end{array}$}
    \and
        \inferrule[\small D-Iterate]{
            \Gamma\vdash f :
                \left(\begin{array}{l}
                     (\tau\xrightarrow{(\{\mathsf{prophecy}~gen~(P_p,C_p,E^*)\leadsto(\mathsf{option}~\tau)~\mathsf{obs}~(\emptyset,\emptyset,I)\},\{\mathsf{abort}~gen~I\leadsto(\mathsf{option}~\tau)\},\bot)}\mathsf{unit})
                    \xrightarrow{I}
                    \arcr\qquad
        (\mathsf{unit}\xrightarrow{(\emptyset,\{\mathsf{abort}~gen~I\leadsto(\mathsf{option}~\tau)\},\bot)}\mathsf{unit})
        \xrightarrow{(P,C,E^*)}\mathsf{unit}
                \end{array}\right)
        \mid (\emptyset,\emptyset,I)
        \\
        C \sqsubseteq \{ \mathsf{abort}~gen~(E^*)\leadsto(\mathsf{option}~\tau) \}
        \\
        \mathsf{GenProphs}(P,\mathit{gen},E)
        }{
            \Gamma\vdash \llbracket\mathsf{iterate}~init~gen~f\rrbracket : \mathsf{unit}\xrightarrow{(\emptyset,\emptyset,E^*)}\mathsf{option}~\tau \mid (\emptyset,\emptyset,I)
        }
        \end{mathpar}
        \vspace{-1em}
        \caption{A collection of derived rules.}
        \label{fig:derived_rules}
    \end{figure*}

    Loops do not \emph{require} the use of control operators to express, of course.  However, they are an important consistency check.  Notably, \textsc{D-While} recovers, via the approach above, exactly the rule for while loops that was hand-crafted in prior work~\cite{flanagan2003atomicity,flanagan2003tldi} and only recently given general treatment~\cite{ecoop17,quantalesjournal}.
    \textsc{D-AbortingWhile} is a limited generalization of \textsc{D-While}, for the case where the condition or body by use \lstinline|abort| (e.g., throw exceptions).  The underlying effect --- for when the loop completes normally --- is as in \textsc{D-While}.  The control effect reflects the various times an abort may be thrown, based on the assumptions about aborts in $c$ and $e$: during the initial execution of the condition, the initial execution of the body, or by \emph{subsequent} executions of the condition or body.
    The derived rules for \lstinline|try-catch| and \lstinline|throw| reflect their simple expression in terms of prompts and \lstinline|abort|.

    The generator rule is more complex, requiring a bit of careful thought about the well-formedness condition \textsf{GenProphs}, but still following the general approach taken for \textsc{D-Loop}.
    Recall that generators are a construct for producing some sequence of values asynchronously: a generator is an object that may be queried repeatedly, and each query either produces a new value or indicates there are no more values.  This is similar to an iterator, except generators are written in direct style.  A language or library typically exposes a \lstinline|yield| construct to produce a value for the client. (This now describes Python, C\#, F\#, and JavaScript, among other languages.)  After the client consumes each value and queries the generator again, control resumes immediately after the last-executed \lstinline|yield|, continuing until another value is \lstinline|yield|ed or generation is complete. Consider a simple client of Coyle and Crogono's implementation using Scheme macros and \lstinline|call/cc|~\cite{Coyle:1991:BAI:122179.122181} (see our technical report for Racket code~\cite{controltr}):
    \looseness=-1
\begin{lstlisting}[language=Scheme,basicstyle=\small\ttfamily]
(define next (iterate ($\lambda$ (yield done) (yield "a") (yield "b") (done))))
\end{lstlisting}
    the iterator body takes two arguments, \lstinline|yield| and \lstinline|done| for yielding a value, and indicating generation is complete respectively.  \lstinline|next| is the result of building a generator from this function; \lstinline|iterate| (the generator implementation) supplies functions for \lstinline|yield| and \lstinline|done| (described below).
    
    Our macro $\mathsf{iterate}~\textit{init}~\textit{gen}~f$ (\textit{init} and \textit{gen} are tags) expands to a heap-storage-based generator encoding similar to Coyle and Crogono's~\cite{Coyle:1991:BAI:122179.122181} (and also given explicitly with type annotations in our technical report~\cite{controltr}).
    A prompt is used to delimit the scope of the generator body.  Yielding a value captures the current continuation up to that prompt, stores it in a heap cell, then uses \lstinline|abort| to throw the yielded value.
    The rule \textsc{D-Iterate} assumes $f$ is a function of two arguments as above: the first is a function playing the role of \lstinline|yield|, the second a function that marks the iterator as complete (\lstinline|done|).
    The macro returns a function which, when invoked, returns an option containing either the next element produced by the generator, or a failure.  When invoked, this function introduces a new prompt for tag $gen$, and invokes the stored continuation to produce the next element (via the \lstinline|yield| parameter) or a completion.
    The rule assumes the activity ``between'' successive yields can be over-approximated by an underlying effect $E$, and stores the continuations with underlying effect $E^*$.  Because uses of the first parameter, \lstinline|yield|, capture continuations, a prophecy choice must be made ($P_p$) for that particular generator, for the yield to predict the appropriate remainder of the generator body.  \textsc{D-Iterate} is specialized here to the assumption that all prophecies and aborts in the generator body are related to the tag $gen$ only, which allows us to lift the requirements of prophecy validation into the \textsf{GenProphs} predicate.
    A complete rule permitting at least exceptions to escape a generator is possible, but would be very complex (so much so that C\# strongly restricts their interactions~\cite{csharpgenrestrictions}).
    \looseness=-1

    \fi

\section{Related Work}
\label{sec:relwork}
\ifTR
 Here we recall other related work not covered in Section \ref{sec:bg}.
 \fi

Recent years have seen great progress on semantic models for sequential effect systems~\cite{tate13,katsumata14,mycroft16}, centering on what are now known as \emph{graded monads}: monads indexed by some kind of monoid (to model sequential composition), commonly a partially-ordered monoid following Katsumata~\cite{katsumata14}.  Gordon~\cite{ecoop17} focused on capturing common structure for prior concrete effect systems, leading to the first abstract characterization of sequential effect systems with singleton effects, effect polymorphism, and iteration of sequential effects.  
\looseness=-1

To the best of our knowledge we are the first to use the term ``accumulator'' as we do to identify this as a reusable technique.  However accumulators have appeared before.  Koskinen and Terauchi's effect system~\cite{Koskinen14LTR} uses left-accumulators for safety and liveness properties (requiring an oracle for liveness).  Effects in their system are a pair of sets, one a set of finite traces (for terminating executions) and the other a set of infinite traces (for non-terminating executions).  The infinite traces left-accumulate: code that comes after a non-terminating expression in program-order never runs.  On the other hand, \emph{finite} executions from code \emph{before} an infinite execution extend the prefix of the infinite executions.
Earlier, Neamtiu et al.~\cite{neamtiu2008contextual} defined \emph{contextual} effects to track what (otherwise order-unaware) effects occurred before or after an expression,  to ensure key correctness properties for code using dynamic software updates.
\looseness=-1

Effect systems treating continuations are nearly as old as effect systems themselves~\cite{Jouvelot1989continuations}.
To the best of our knowledge, we are the first to integrate \emph{sequential} effects with exceptions, generators, or general delimited continuations --- or any control flow construct beyond while loops, including any form of continuation, tagged or otherwise.
As mentioned in Section \ref{sec:bg}, the original motivation for tags was to prevent encodings of separate control operators from interfering with each other~\cite{Sitaram1993}, which is critical for our goals, strictly more expressive than untagged continuations, and motivates important elements of the theory (blocking).
The only other work we know of focusing on effects with tagged delimited continuations is Pretnar and Bauer's variant~\cite{pretnar2014effect} of algebraic effects and handlers~\cite{pretnar2014effect} where operations may be handled by outer \textsf{handle} constructs (not just the closest construct as in other algebraic effects work). Their commutative effects ensures all algebraic operations are handled by some enclosing handler.
\looseness=-1

Tov and Pucella~\cite{tov2011} examined the interaction of \emph{untagged} delimited continuations with substructural types (a coeffect~\cite{Petricek2014CCC}).
Delbianco and Nanevski adapted Hoare Type Theory for untagged \emph{algebraic} continuations~\cite{Delbianco:2013:HRC:2500365.2500593},
which include prompts and handlers, but place handlers at the site of an abort rather than at the prompt in order to satisfy some useful computational equalities (see below).
As a consequence, encoding non-trivial control flow constructs in their system becomes significantly more complex; for example, simulating the standard semantics of throwing exceptions to the nearest enclosing catch block for the exception type would require catching, dispatching, and re-throwing at every prompt.  
This and lack of tagging would make compositional study of multiple control flow constructs / control operators difficult, and as our work shows the treatment of multiple tags is not a trivial extension of untagged semantics.
Atkey~\cite{atkey2009parameterised} considered denotational semantics for (untagged) \emph{composable} continuations in his parameterized monad framework for (denotational) sequential effect systems, essentially giving a denotation of a type-and-effect system for answer type modification~\cite{asai2007polymorphic,danvy1989functional} --- a kind of sequential effect which can be used to allow continuations to temporarily change the result type of a continuation, as long as it is known (via the effects) that it will be changed back. Thus Atkey considered answer type modification effects as an instance of a sequential effect specific to using control operators, rather than having an application-domain-focused effect (like exceptions or traces) work with continuations or giving an account of general sequential effects for control operators.  
Readers familiar with answer type modification may wonder about directly supporting it in our generic core language. We have not yet considered this deeply, but note that (i) directly ascribing answer type modifications to control operations would require assigning \emph{specific} answer type modification effects to the control operations, not just effects derived from primitives, but (ii) Kobori et al.~\cite{KoboriKK16} showed that tagged shift/reset can express answer type modification in a type system that does not track answer types explicitly, so such an extension may provide no additional power (treating convenience as another matter).
\looseness=-1

Algebraic effects with handlers~\cite{plotkin2009handlers} are a means to describe the \emph{semantics} of effects in terms of a set of operations (the effectful operations) along with handlers that interpret those operations as actions on some resource.  The combination yields an algebra characterizing equality of different effectful program expressions, hence the term ``algebraic''.
Languages with algebraic effects include an effect system to reason about which effects a computation uses, to ensure they are handled.  Some implementations even use Lindley and Cheney's effect adaptation~\cite{lindley2012row} of row polymorphism~\cite{wand1989type} to support effect inference~\cite{koka}.
Handlers for algebraic effects receive both the action to interpret and the continuation of the program following the effectful action.  Thus they can implement many control operators, including generators and cooperative multithreading~\cite{Leijen2017async}, as with the delimited continuations we study.
In an untyped setting without tagging, algebraic handlers can simulate (via macro translation) \lstinline|shift0|/\lstinline|reset0|~\cite{Forster:2017:EPU:3136534.3110257}, which can simulate prompts and handlers~\cite{shan2007static} (with correct space complexity, not only extensionally-correct behavior); with those limitations, handlers are as powerful as the constructs we study. For the common commutative effect system for handlers that ensures all operations are handled, Forster et al.~\cite{Forster:2017:EPU:3136534.3110257} prove that the translation from handlers to prompts (\lstinline|shift0|) is not type-and-effect preserving, and conjecture the reverse translation also fails to preserve types.  They conjectured that adding polymorphism to each system would enable a type-and-effect preserving translation (again, without tagging, for a commutative effect system), which was recently confirmed by Pir\'{o}g et al.~\cite{pirog2019typed} for a class of commutative effect systems.

The effect systems considered for algebraic effects thus far have only limited support for reasoning about sequential effects.
The types given for individual algebraic effects do support reasoning about the existence of a certain type of resource before and after the computation~\cite{Brady2013,bauer2013effect}.  However, the way this is done corresponds to a parameterized monad~\cite{atkey2009parameterised}, which Tate showed~\cite{tate13} crisply do not include all meaningful sequential effect systems. His examples that cannot be modeled as parameterized monads include examples that \emph{can} be modeled as effect quantales~\cite{quantalesjournal}, such as the non-reentrant locking effect system Tate uses to motivate aspects of his approach.

General considerations of sequential effect systems have not yet been explored for algebraic effects.  When it is considered, it seems likely ideas from our development (particularly prophecies) will be useful.  For example, Dolan et al.~\cite{dolan2017ocamleff} offer two reasons for dynamically enforcing linearity of continuations in their handlers: performance, but also avoiding the sorts of errors prevented by sequential effect systems, such as closing a file twice by reusing a continuation. 
\looseness=-1

It also seems plausible that our approach could be adapted to algebraic effects and handlers. With an effectively-tagged version of handlers~\cite{pretnar2014effect}, a similar macro-expression of control flow constructs and control operators is likely feasible, in particular adapting our notion of prophecy and observation to handlers: in this case, the continuations themselves are seen by handlers that are direct subexpressions of the handling construct itself, so prophecies might observe ``outside-in'' rather than our system's ``inside-out'' accumulation.

The approach we take to deriving type rules for control flow constructs and control operators is reminiscent of work done in parallel with ours by Pombrio and Krishnamurthi~\cite{Pombrio:2018:ITR:3192366.3192398}.  They address the problem of producing useful type rules when a language semantics and type rules are defined directly for a simpler core language, and a full source language is defined using syntactic sugar (i.e., macros) that expand into core language expressions with the intended semantics, such as the approach taken by $\lambda_{\textit{JS}}$~\cite{Guha10}.  There the issue is that type errors given in terms of the elaborated core terms are difficult to understand for developers writing in the unelaborated source language.  Pombrio and Krishnamurthi offer an approach to automatically lift core language type rules through the desugaring process to the source language, providing sensible source-level type errors.
Their work focuses on type systems without effects, but including such notions as subtyping and existential types.  They do not consider control operators (delimited continuations) or effects (neither commutative nor sequential).
Extending their approach to support the language features and types (effects) we consider would make our approach more useful to effect system designers, though this is non-trivial due to the many ways to combine sequential effects.
\looseness=-1

\section{Conclusions} 
We have given the first general approach to integrating arbitrary sequential effect systems with tagged delimited control operators, which allows lifting existing sequential effect systems without knowledge of control operators to automatically support tagged delimited control.  We have used this characterization to derive sequential effect system rules for standard control flow structures macro-expressed via continuations, including deriving known forms (loops) and giving the first characterization of exceptions and generators in sequential effect systems.
\looseness=-1

\ifTR
\subsection*{Acknowledgements}
We would like to thank the anonymous referees for LICS 2018, OOPSLA 2018, LICS 2019, OOPSLA 2019, POPL 2020, and ECOOP 2020, whose detailed and constructive feedback lead to significant improvements.
\fi
    
    \bibliographystyle{plainurl}
    \bibliography{effects,csg}
 
\ifTR
\clearpage
\appendix

\section{Hypothetical Typing Derivation for Infinite Loops}
\label{apdx:dloop}
Figure \ref{fig:infloop} gives a hypothetical typing derivation for an infinite loop, assuming no control effects escape the loop's body.  Choosing as in Section \ref{sec:infloop}, $Q_p=Q_e^*$ and $C_p=\{\mathsf{replace}~\ell : Q_e^*\leadsto\mathsf{unit}\}$ makes this derivation valid.  In that case, the final underlying effect in the derivation is equal to $Q_e\sqcup(Q_e^*)=Q_e^*$.

    \begin{figure*}\scriptsize
        \begin{mathpar}
            C = (Q_e\rhd C_p)\cup\{\mathsf{replace}~\ell:Q_e\rhd Q_p\leadsto\mathsf{unit}\}\and
        \mathscr{J}_{val}=\inferrule{
            \Gamma\vdash \mathsf{unit}<:\mathsf{unit}\\
            \Gamma\vdash (\emptyset, C, Q_e) \sqsubseteq (\emptyset,C_p,Q_p)
        }{
            \Gamma\vdash \mathsf{validEffects}(\{\mathsf{prophecy}~\ell~(\emptyset,C_p,Q_p)\leadsto\mathsf{unit}~\mathsf{obs}~(\emptyset,C,Q_e)\}, C, Q_e, \ell, \mathsf{unit})
        }\and
        \mathscr{J}_{cc}=
        \inferrule{\ldots}{
                    \Gamma\vdash (\mathsf{call/cc}~\ell~(\lambda k\ldotp k)) : \mu X\ldotp \mathsf{cont}~\ell~X~(\emptyset,C_p,Q_p)~\mathsf{unit} \mid (\{\mathsf{prophecy}~\ell~(\emptyset,C_p,Q_p)\leadsto\mathsf{unit}~\mathsf{obs}~(\emptyset,\emptyset,I)\},\emptyset,I)}
        \and
        \mathscr{J}_{body}=
            \inferrule{
                \mathscr{J}_{cc}\\
                \inferrule{
                    \inferrule{\mathsf{Assumption}}{\Gamma,cc:\ldots\vdash e : \tau \mid \chi_e}\\
                    \inferrule{\ldots}{\Gamma,cc:\ldots\vdash cc~cc : \mathsf{unit} \mid (\emptyset,\pblock{C_p}{\ell}\cup\{\mathsf{replace}~\ell:Q_p\leadsto\mathsf{unit}\},I)}
                }{
                    \Gamma,cc:(\mu X\ldotp \mathsf{cont}~\ell~X~(\emptyset,C_p,Q_p)~\mathsf{unit})\vdash (e;~cc~cc) : \mathsf{unit} \mid \chi_e\rhd(\emptyset,\pblock{C_p}{\ell}\cup\{\mathsf{replace}~\ell:Q_p\leadsto\mathsf{unit}\},I)
                }
            }{
                \Gamma\vdash (\mathsf{let}\;cc = (\mathsf{call/cc}~\ell~(\lambda k\ldotp k))~\mathsf{in}~(e; cc\;cc)) : \mathsf{unit} \mid (\{\mathsf{prophecy}~\ell~(\emptyset,C_p,Q_p)\leadsto\mathsf{unit}~\mathsf{obs}~(\emptyset,C,Q_e)\},C,Q_e)
            }
        \and
        \inferrule{
            \mathscr{J}_{body}\\
            \inferrule{\ldots}{\Gamma\vdash (\lambda\_\ldotp \mathsf{tt}) : \mathsf{unit}\xrightarrow{(\emptyset,\emptyset,I)}\mathsf{unit}}\\
            \mathscr{J}_{val}\\
        }{
            \Gamma\vdash (\%\;\ell\;(\mathsf{let}\;cc = (\mathsf{call/cc}~\ell~(\lambda k\ldotp k))~\mathsf{in}~(e; cc\;cc))\;(\lambda\_\ldotp \mathsf{tt})) : \mathsf{unit} \mid (\emptyset,\emptyset,Q_e\sqcup\bigsqcup \punblock{C}{\ell}|^I_\ell)
        }
        \end{mathpar}
        \caption{Typing infinite loops.  We assume $\Gamma\vdash e : \tau \mid \chi_e$, where $\chi_e=(\emptyset,\emptyset,Q_e)$.}
        \label{fig:infloop}
    \end{figure*}

\section{Context Typing and Substitutions}
For syntactic type safety, we must give types to terms that exist only at runtime, which include reified continuations.
For this we introduce the evaluation context type $\tau/\chi{\leadsto}\tau'/\chi'$, which characterizes an evaluation context with a hole of type $\tau$, which when filled by an expression of appropriate type and effect at most $\chi$ yields an expression producing $\tau'$ with overall effect $\chi'$.  This is only used by the context-typing judgment $\Sigma;\Gamma\vdash E : \tau/\chi{\leadsto}\tau'/\chi'$, which has no effect of its own, because evaluation contexts do not appear in expression positions during evaluation.
This judgment plays both a convenient administrative role in the soundness proof, and a role in typing the runtime form of continuations.

Note that we have \emph{avoided} explicitly tracking a notion of latent effect for a continuation while typing the main program expression.  There are two reasons for this.  First, doing this explicitly would make the type rules significantly more complex due to the need to identify various contexts and associate latent effects to them.  Second, it is unnecessary in the presence of prophecies: the observations made by a prophecy capture the latent effect between the point of the continuation capture and the enclosing prompt of the same tag.  And these are the only continuations for which a latent effect is useful.
Our use of prophecies permits the inference of these latent effects from the characterization above, in cases where they are required (see Lemma \ref{lem:valid_prophecies}).

The context typing judgment is defined in parallel with typing derivations, one case for each possible way of typing an evaluation context.  For example, there are two rules for typing the function-hole contexts:
\begin{mathpar}
    \inferrule[T-CtxtFunApp]{
        \Sigma;\Gamma\vdash E' :: \tau/\chi\leadsto(\sigma\xrightarrow{\chi_l}\sigma')/\chi'\\
        \Sigma;\Gamma\vdash e : \sigma \mid \chi_e
    }{
        \Sigma;\Gamma\vdash (E'~e) : \tau/\chi\leadsto\sigma'/(\chi'\rhd\chi_e\rhd\chi_l)
    }
    \and\inferrule[T-CtxtContApp]{
        \Sigma;\Gamma\vdash E' :: \tau/\chi\leadsto(\mathsf{cont}~\ell~\sigma~{(P,C,Q)}~\sigma')/\chi'\\
        \Sigma;\Gamma\vdash e : \sigma \mid \chi_e
    }{
        \Sigma;\Gamma\vdash (E'~e) : \tau/\chi\leadsto\sigma'/(\chi'\rhd\chi_e\rhd(\pblock{P}{\ell},C\cup\{\mathsf{replace}~\ell~Q\leadsto\tau'\},I))
    }
\end{mathpar}
The other context typing rules are defined similarly, each effectively exchanging one inductive hypothesis for a recursive context typing hypothesis with the same type and effect for the hole.
The base case is the natural rule for the hole, requiring that the type and effect of the value plugged into the hole are subtype or subeffect of the of the ``plugged'' context's type (since plugging an expression into the empty context is simply that expression):
\[
    \inferrule[T-CtxtHole]{
        \tau <: \tau'\\
        \chi \sqsubseteq \chi'
     }{\Sigma;\Gamma\vdash \bullet :: \tau/\chi\leadsto\tau'/\chi'}
\]
The context typing judgment is defined mutually with the term typing, as the type rule for continuation values refers back to the context typing judgment:

\begin{mathpar}
    \inferrule[T-ContC]{
        \Sigma;\Gamma\vdash E :: \tau/(\emptyset,\emptyset,I)\leadsto\tau_0'/\chi_0\\
        \tau_0'<:\tau'\\
        \punblock{P_0}{\ell}\sqsubseteq\punblock{P}{\ell}\\
        \punblock{C_0}{\ell}\sqsubseteq \punblock{C}{\ell}\\
        Q_0\sqsubseteq Q
    }{
        \Sigma;\Gamma\vdash (\mathsf{cont}_\ell^{\tau'}~E) : \mathsf{cont}~\ell~\tau~\chi~\tau' \mid I
    }
\end{mathpar}

\subsection{Context Decomposition}
Because the preservation proof will destructure full expressions into evaluation contexts and redexes, and we will require both local typing information about the redex and information about replacing the redex within the context, we must be able to decompose the typing derivation of a (filled) evaluation context in parallel with the operational semantics.
\begin{lemma}[Context Typing Decomposition]
    \label{lem:context_typing_decomp}
If
\begin{itemize}
    \item $\Sigma;\Gamma\vdash E[e] : \tau \mid \chi$
\end{itemize}
then there exists a $\tau'$, $\chi'$, such that:
\begin{itemize}
\item $\Sigma;\Gamma\vdash e : \tau' \mid \chi'$
\item $\Sigma;\Gamma\vdash E :: \tau'/\chi'{\leadsto}\tau/\chi$
\end{itemize}
\end{lemma}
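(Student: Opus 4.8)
The plan is to proceed by induction on the structure of the evaluation context $E$, destructuring the typing derivation for $E[e]$ in lockstep with the grammar of contexts. The statement is essentially a decomposition/inversion result: it says that whenever a filled context $E[e]$ is well-typed, we can recover a type and effect for the hole-filler $e$ together with a context-typing derivation for $E$ that, when its hole is filled, reproduces the original type $\tau$ and effect $\chi$. The proof mirrors the parallel-definition structure described in the excerpt, where the context-typing rules (\textsc{T-CtxtHole}, \textsc{T-CtxtFunApp}, \textsc{T-CtxtContApp}, and the analogous rules for the other context shapes) each ``exchange one inductive hypothesis for a recursive context-typing hypothesis with the same type and effect for the hole.''

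\textbf{Base case.} When $E=\bullet$, the filled context $E[e]$ is just $e$, so the term typing derivation for $e$ gives us $\Sigma;\Gamma\vdash e:\tau\mid\chi$ directly. Take $\tau'=\tau$ and $\chi'=\chi$. We then discharge the context obligation with \textsc{T-CtxtHole}, whose premises $\tau<:\tau'$ and $\chi\sqsubseteq\chi'$ hold reflexively (using reflexivity of subtyping and of the subeffect preorder from Figure~\ref{fig:directPO}).

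\textbf{Inductive cases.} For each compound context former --- $(E'~e_2)$, $(v~E')$, $(\%~\ell~E'~v)$, $(\mathsf{call/cc}~\ell~E')$, $(\mathsf{abort}~\ell~E')$, and the continuation-application shapes --- I would invert the typing derivation of $E[e]$ at its root. Each such derivation must end in the corresponding term-typing rule (\textsc{T-App}, \textsc{T-Prompt}, \textsc{T-CallCont}, \textsc{T-Abort}, \textsc{T-AppCont}, etc.), whose premises include a typing judgment for the immediate subterm that itself is a filled context, say $E'[e]$. Applying the induction hypothesis to that smaller filled context $E'[e]$ yields a hole type $\tau'$, a hole effect $\chi'$, a term judgment $\Sigma;\Gamma\vdash e:\tau'\mid\chi'$, and a context judgment $\Sigma;\Gamma\vdash E'::\tau'/\chi'\leadsto\cdots$. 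We keep the same $\tau'$ and $\chi'$ for the hole, re-use the term judgment for $e$ unchanged, and assemble the larger context judgment by applying the matching context-typing rule (e.g.\ \textsc{T-CtxtFunApp} when the outer derivation used \textsc{T-App} on a function-position hole). The remaining premises of that context rule are precisely the sibling premises (e.g.\ $\Sigma;\Gamma\vdash e_2:\sigma\mid\chi_e$) carried over verbatim from the inverted term derivation, and the composed effect recorded in the context rule's conclusion is exactly the effect $\chi$ produced by the outer term rule --- because the context-typing rules were \emph{defined} to replay the same effect composition (via $\rhd$ and the blocking/accumulation machinery) as the corresponding term rules.

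\textbf{Anticipated obstacles.} The main subtlety is bookkeeping across the two rules for a single syntactic context shape. A hole in function position can be typed either by \textsc{T-App} (ordinary function) or by \textsc{T-AppCont} (continuation application), giving rise to both \textsc{T-CtxtFunApp} and \textsc{T-CtxtContApp}; inverting the term derivation tells us which subcase we are in, and we must select the matching context rule so that the effect composition (the plain $\chi'\rhd\chi_e\rhd\chi_l$ versus the blocked/\textsf{replace}-augmented $\chi'\rhd\chi_e\rhd(\pblock{P}{\ell},C\cup\{\replace{\ell~Q\leadsto\tau'}\},I)$) lines up exactly. The second delicate point is the prompt case \textsc{T-Prompt}: its conclusion effect involves unblocking, filtering, and projection operators ($\punblock{-}{\ell}$, $-\setminus^{Q}\ell$, $-|^{Q}_\ell$) applied to the body's effect, and the context-typing rule for a prompt hole must record the identical transformation so that the reconstructed effect still equals $\chi$. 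Verifying this alignment is mechanical but requires care, and it is here that a subsumption-induced mismatch could in principle arise; I expect the cleanest route is to have defined each context rule to reuse the term rule's exact conclusion, reducing these cases to direct pattern-matching rather than any fresh effect-algebra reasoning. No case should require strengthening the statement, since the type and effect of the hole are threaded through unchanged and all side conditions are inherited from the inverted derivation.
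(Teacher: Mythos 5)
Your proposal is correct and follows essentially the same route as the paper's proof: induction on the structure of $E$, inversion on the typing derivation of the filled context (splitting the application case into function versus continuation application), application of the inductive hypothesis to $E'[e]$, and reassembly via the matching context-typing rule, with the base case discharged by \textsc{T-CtxtHole} using reflexivity. The subtleties you flag (the two rules for an application-position hole, and the prompt case replaying the unblocking/filtering operators) are exactly the points the paper's proof handles, so no gap remains.
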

\begin{proof}
By induction on $E$ (with other variables universally quantified in the inductive hypothesis).
\begin{itemize}
    \item Case $E=\bullet$: Here $E[e]=e$, so $\tau'=\tau$, $\chi'=\chi$, and $\chi''=I$.
    \item Case $E=(E'[e]~e')$:
        Here we have two cases, for application of functions or application of continuations.  We present the function application case; the continuation application is similar.
        Given:
        \begin{itemize}
        \item $\Sigma;\Gamma\vdash(E'[e]~e') : \tau \mid \chi$
        \end{itemize}
        By inversion on the typing derivation, for the function application case:
        \begin{itemize}
        \item $\Sigma;\Gamma\vdash E'[e] : \tau_{e'}\xrightarrow{\chi_{latent}}\tau \mid \chi_{f}$
        \item $\Sigma;\Gamma\vdash e' : \tau_{e'} \mid \chi_{e'}$
        \item $\chi=\chi_f\rhd\chi_{e'}\rhd\chi_{latent}$
        \end{itemize}
        Via the inductive hypothesis there exists some $\tau'$, $\chi'$, such that:
        \begin{itemize}
            \item $\Sigma;\Gamma\vdash e : \tau'\mid \chi'$
            \item $\Sigma;\Gamma\vdash E' :: \tau'/\chi'{\leadsto}(\tau_{e'}\xrightarrow{\chi_{latent}}\tau)/\chi_f$
        \end{itemize} 
        We can then invert on context typing and use \textsc{T-App} to prove
        \[\Sigma;\Gamma\vdash E :: \tau'/\chi'{\leadsto}\tau/\chi\]
        The inversion on typing produces a second case, for applying continuations, which proceeds similarly.
    \item Case $E=(v~E'[e])$: Similar to the other function application context.
    \item Case $E=(\%~\ell~E'[e]~v)$: Similar to previous cases.  
    \item Case $E=(\mathsf{call/cc}~\ell~E'[e])$: Similar to previous cases.
\end{itemize}
\end{proof}

Note that when decomposing contexts, the redex is typed in the same type environment as the surrounding evaluation context.  This is a consequence of the fact that no evaluation context reaches ``under'' binders like inside a lambda expression.

\subsection{Valid Prophecies}
\label{apdx:valid_prophecies}
One of the most subtle parts of the effect system is the use of prophecies to capture the residual effect of various evaluation contexts, in a non-local manner.  Intuitively, a prophecy captures all possible effects from the point of the prophecy (the point where the continuation capture would be the next expression to reduce) up to an enclosing prompt.  
In particular, notice that post-composing an effect after one with a prophecy performs the same ``transformations'' on the $C$ and $Q$ components of the prophecy as on those components of the actual effect, effectively type-checking a context twice simultaneously.
The lemma below makes the intuition precise, and shows that the type system in fact matches that intuition.

\begin{lemma}[Valid Prophecies]
\label{lem:valid_prophecies}
For any $\Sigma$, $\Gamma$, $E$, $\tau$, $\chi$, $\chi'$, $\sigma$, $\ell$, $\chi_\ell$, $\gamma$, if
\begin{itemize}
\item $\Sigma;\Gamma\vdash E : \tau/\chi\leadsto\sigma/\chi'$ and
\item $\mathsf{prophecy}~\ell~\chi_\ell\leadsto\gamma~\mathsf{obs}~(\emptyset,\emptyset,I)\in P_\chi$
\item $E$ contains no prompts for $\ell$
\end{itemize}
then there exists a $P$, $C$, and $Q$ such that:
\begin{itemize}
\item $\mathsf{prophecy}~\ell~\chi_\ell\leadsto\gamma~\mathsf{obs}~(P,C,Q)\in P_{\chi'}$
\item $\Sigma;\Gamma\vdash E :: \tau/I\leadsto\sigma/(P,C,Q)$
\end{itemize}
\end{lemma}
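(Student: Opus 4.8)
The plan is to proceed by induction on the context typing derivation $\Sigma;\Gamma\vdash E : \tau/\chi\leadsto\sigma/\chi'$, tracking the \emph{specific} prophecy element through the deterministic effect operations rather than reasoning up to the order. The intuition I want to exploit is that context typing builds up the output effect $\chi'$ from the hole effect $\chi$ purely by right-sequencing ($\rhd$) with the effects of sibling subexpressions, while the observation attached to a prophecy accumulates by the right-accumulation operator $\blacktriangleright$; as noted in Section~\ref{sec:formal}, $\blacktriangleright$ on an observation is exactly $\rhd$. Consequently, at every layer the observation attached to the tracked prophecy in $\chi'$ should coincide with the effect obtained by typing the same context layer with a \emph{unit} hole. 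Throughout I use the fact that the hole type and effect are threaded through the context typing rules unchanged, so the fresh prophecy remains in $P_\chi$ for every sub-context, and that the ``no $\ell$-prompts'' hypothesis descends to every subcontext.

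For the base case $E=\bullet$, inverting \textsc{T-CtxtHole} gives $\tau<:\sigma$ and $\chi\sqsubseteq\chi'$. Since the fresh prophecy lies in $P_\chi$, the component order on prophecy sets in Figure~\ref{fig:directPO} yields a matching prophecy $\mathsf{prophecy}~\ell~\chi_\ell\leadsto\gamma~\mathsf{obs}~(P,C,Q)\in P_{\chi'}$ with the same prediction and $(\emptyset,\emptyset,I)\sqsubseteq(P,C,Q)$, hence $I\sqsubseteq(P,C,Q)$. Re-applying \textsc{T-CtxtHole} with the hole effect $I$ then gives $\Sigma;\Gamma\vdash\bullet::\tau/I\leadsto\sigma/(P,C,Q)$, closing the case.

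In each inductive case I invert the relevant context typing rule to expose the subcontext derivation for $E'$ together with the typings of the sibling expressions, and I apply the induction hypothesis to $E'$. For the purely sequencing formers --- $(E'~e)$, $(v~E')$, $(\mathsf{abort}~\ell'~E')$, and $(\mathsf{call/cc}~\ell'~E')$ --- the output effect has the shape $\chi'_{E'}\rhd(\text{sibling and latent effects})$, so the tracked prophecy's observation $(P_0,C_0,Q_0)$ obtained from the hypothesis is right-accumulated with exactly those effects; using the identity that $\blacktriangleright$ on observations is $\rhd$, the resulting observation equals $(P_0,C_0,Q_0)\rhd(\dots)$. Feeding the hypothesis's unit-hole context typing $\Sigma;\Gamma\vdash E'::\tau/I\leadsto\dots/(P_0,C_0,Q_0)$ and the same sibling typings into the same context-typing rule produces precisely $\Sigma;\Gamma\vdash E::\tau/I\leadsto\sigma/((P_0,C_0,Q_0)\rhd(\dots))$, so both sides agree. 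In the \textsc{call/cc} layer a \emph{new} prophecy is additionally introduced; it appears identically inside the accumulated observation and inside the unit-hole context effect (both compute the same $\rhd$-chain from $(P_0,C_0,Q_0)$), so the match is preserved.

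The main obstacle is the prompt former $E=(\%~\ell'~E'~v)$ with $\ell'\neq\ell$, where the effect is not merely right-sequenced: the prompt rule unblocks for $\ell'$, deletes $\ell'$-targeted prophecies and control effects, and rewrites the observations of the surviving prophecies via the filtering operator $-\setminus^{Q_h}\ell'$ of Figure~\ref{fig:aux}, folding $\ell'$-abort and replace prefixes together with the handler into the underlying component. I must verify that the observation transformation that $-\setminus^{Q_h}\ell'$ performs on the tracked prophecy coincides, component-for-component, with the transformation \textsc{T-Prompt} applies to a whole effect when typing $(\%~\ell'~E'~v)$ with a unit hole --- that $\punblock{-}{\ell'}$, $-\setminus\ell'$, and $-|^{Q_h}_{\ell'}$ act the same on an observation as on an ordinary effect. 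This agreement is by design, since the observation-rewriting clauses of $-\setminus^{Q_h}\ell'$ recursively mirror the conclusion of \textsc{T-Prompt}, but it is the one step requiring genuine unfolding of both definitions and a subsidiary induction on the nesting depth of blocked prophecies and control effects. Once established, the surviving prophecy's observation in $P_{\chi'}$ equals the unit-hole context effect, and re-applying the prompt context-typing rule closes the case. Finally, if the system admits a separate context-subsumption rule, it is dispatched by monotonicity of the component order followed by re-subsumption; otherwise all slack is confined to \textsc{T-CtxtHole} and already handled in the base case.
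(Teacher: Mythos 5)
Your proposal is correct and follows essentially the same route as the paper's proof: induction over the context structure, using the fact that right-accumulation $\blacktriangleright$ on observations coincides with $\rhd$ so that the tracked prophecy's observation stays equal to the unit-hole context effect at every layer, with the prompt case ($\ell'\neq\ell$) closed by checking that $\punblock{-}{\ell'}$ and $-\setminus^{Q_h}\ell'$ rewrite the prophecy's observation exactly as \textsc{T-CtxtPrompt} rewrites the whole effect. The only differences are presentational: you spell out the base case and a hypothetical subsumption case that the paper leaves implicit, and you flag a subsidiary induction for the prompt-case agreement where the paper simply unfolds the definitions of Figure~\ref{fig:aux}.
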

\begin{proof}
By induction on $E$.  We present a demonstrative inductive case and the one interesting case.
\begin{itemize}
\item Case $E=(E'~e)$:
	Inversion on the context typing produces a case for function application, and a case for continuation application.  We show the former; the latter is similar.  By that inversion and the inductive hypothesis:
	\begin{itemize}
	\item $\mathsf{prophecy}~\ell~\chi_\ell\leadsto\gamma~\mathsf{obs}~(P',C',Q')\in P_{E'}$
    \item $\Sigma;\Gamma\vdash E' :: \tau/I\leadsto(\tau'\xrightarrow{\chi_f}\sigma)/(P',C',Q')$
    \item $\Sigma;\Gamma\vdash e : \tau' \mid \chi_e$
    \item $P_E=(P_{E'}\blacktriangleright\chi_e\blacktriangleright\chi_f)\cup(P_e\blacktriangleright\chi_f)\cup P_f$
    \end{itemize}
    Applying \textsc{T-CtxtFunApp} with the ``plugged'' type for $E'$ produces the expected result type ($\sigma$) and a result effect \[(P,C,Q)\overset{def}{=}(P',C',Q')\rhd\chi_e\rhd\chi_f\]
    And given the prophecy observing $(P',C',Q')$ in $P_{E'}$, we know the $(P,C,Q)$ above is present in $P_E$: $P_{E'}\blacktriangleright\chi_e\blacktriangleright\chi_f$ will contain 
    \[
        \mathsf{prophecy}~\ell~\chi_\ell\leadsto\gamma~\mathsf{obs}~(P',C',Q') \blacktriangleright\chi_e\blacktriangleright\chi_f
    \]
\item Case $E=(\%~\ell'~E'~h)$:
	By assumption, $\ell\neq\ell'$.  By the inversion on context typing and the inductive hypothesis:
    \begin{itemize}
	\item $\mathsf{prophecy}~\ell~\chi_\ell\leadsto\gamma~\mathsf{obs}~(P',C',Q')\in P_{E'}$
    \item $\Sigma;\Gamma\vdash E' ::\tau/I\leadsto \sigma/(P',C',Q')$
    \item $\Sigma;\Gamma\vdash h : \sigma'\xrightarrow{(\emptyset,\emptyset,Q_h)}\sigma$
    \item $P_E = \punblock{P_{E'}}{\ell'}\setminus^{Q_h}\ell'$
    \item $C_E = \punblock{C_{E'}}{\ell'}\setminus^{Q_h}\ell'$
    \item $Q_E = Q_{E'}\sqcup\bigsqcup \punblock{C_{E'}}{\ell'}|^{Q_h}_{\ell'}$
    \item $\mathsf{validEffects}(P_{E'},C_{E'},Q_{E'},\ell',\sigma,\sigma')$ ($\sigma'$ is the argument type of the handler)
    \end{itemize}
    Note that the changes from the body effect to prompt effect imposed by \textsc{T-CtxtPrompt} preserve the prophecy of interest (suitably modified itself).  For choices:
	\begin{itemize}
	\item $P=\punblock{P'}{\ell'}\setminus^{Q_h}\ell'$
    \item $C=\punblock{C'}{\ell'}\setminus^{Q_h}\ell'$
    \item $Q=Q'\sqcup\bigsqcup(\punblock{C'}{\ell'}|^{Q_h}_{\ell'})$
    \end{itemize}
    we may apply \textsc{T-CtxtPrompt} to give a context typing for $(\%~\ell~E'~h)$ (The choices for $P$, $C$, and $Q$ directly imply \textsf{validEffects}).  Now we must show \\\mbox{$\prophecy{\ell~\chi_\ell\leadsto\gamma}{(P,C,Q)}\in P_E$}.  Because of the prophecy we know of in $P_{E'}$, the following is in $P_E$:
    \[\begin{array}{rl}
        &\punblock{\mathsf{prophecy}~\ell~\chi_\ell\leadsto\gamma~\mathsf{obs}~(P',C',Q')}{\ell'}\setminus^{Q_h}\ell'  \\
        =& \mathsf{prophecy}~\ell~\chi_\ell\leadsto\gamma~\mathsf{obs}~(\punblock{P'}{\ell'}\setminus^{Q_h}\ell',\punblock{C'}{\ell'}\setminus^{Q_h}\ell',Q'\sqcup\bigsqcup(\punblock{C'}{\ell'}|^{Q_h}_{\ell'})) \\
        =& \mathsf{prophecy}~\ell~\chi_\ell\leadsto\gamma~\mathsf{obs}~(P,C,Q))
    \end{array}
    \]
\end{itemize}
\end{proof}

\subsection{Evaluation Contexts and Residual Effects}
The canonical evaluation rule in the presence of evaluation contexts is \textsc{E-Context}, which replaces an evaluation context's hole with a new expression.  In cases where the effect of the redex is preserved --- such as the last step of applying a function, where the direct effects of the function and parameter are both $I$ and the unfolding merely lifts the function's latent effect to be the direct effect of the applied function --- context typing via \textsc{T-Context} suffices.  However, the other major class of reductions are the cases where some primitive with a non-identity effect is evaluated, leaving a value in its place.  Since in general $I$ --- the effect of all values --- is not a least element in an effect quantale's partial order, this will not be a subeffect of the original hole effect.

In these cases, as in prior sequential effect systems, the sequential composition of the just-evaluated effect \emph{followed by} the residual effect should be a subeffect of the original.  In the presence of our control effects, this intuition still holds, but the details are more complex because we must essentially prove that many subeffect relationships holding before the reduction remain true on ``suffixes'' of control effects.

\begin{lemma}[Reduct Effects]
    \label{lem:reduct_effects}
If
\begin{itemize}
\item $\Sigma;\Gamma\vdash E :: \tau/\chi\leadsto\tau'/\chi'$
\item $\Sigma;\Gamma\vdash e : \tau \mid \chi_e$
\item $\Sigma;\Gamma\vdash q\rhd\chi_e\sqsubseteq\chi$
\end{itemize}
then there exists a $\chi''$ such that
\begin{itemize}
\item $\Sigma;\Gamma\vdash E[e] : \tau' \mid \chi''$
\item $\Sigma;\Gamma\vdash q\rhd\chi''\sqsubseteq\chi'$
\end{itemize}
\end{lemma}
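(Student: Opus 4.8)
The plan is to prove this by induction on the structure of $E$ (equivalently, on the context-typing derivation $\Sigma;\Gamma\vdash E :: \tau/\chi\leadsto\tau'/\chi'$), keeping $e$, $\chi_e$, and $q$ fixed and carrying the hypothesis $q\rhd\chi_e\sqsubseteq\chi$ through every case. The guiding observation is that each context-typing rule builds $\chi'$ from the hole-result effect of the immediate subcontext by an operation that is \emph{monotone} and \emph{commutes with left-prepending} of the primitive effect $(\emptyset,\emptyset,q)$. Since prepending $(\emptyset,\emptyset,q)$ leaves prophecy sets untouched and only left-accumulates $q$ into the control and underlying components, these two properties are exactly what is needed to push the bound $q\rhd\chi''\sqsubseteq\chi'$ outward one frame at a time. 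Throughout I rely on transitivity of $\sqsubseteq$ and monotonicity of $\rhd$ in its left argument.

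For the base case \textsc{T-CtxtHole} we have $\tau<:\tau'$ and $\chi\sqsubseteq\chi'$, and $E[e]=e$; taking $\chi''=\chi_e$ and applying subsumption gives $\Sigma;\Gamma\vdash e:\tau'\mid\chi_e$, while $q\rhd\chi_e\sqsubseteq\chi\sqsubseteq\chi'$. The application-style frames --- \textsc{T-CtxtFunApp}, \textsc{T-CtxtContApp}, the \texttt{call/cc} and \texttt{abort} frames, and the argument-position frame $(v~E')$ --- all place the hole-result effect of the subcontext \emph{leftmost} in the sequential composition yielding $\chi'$ (for $(v~E')$ this uses that a value's direct effect is $I$, so the $v$-component collapses). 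Writing $\chi'=\chi_{E'}\rhd\rho$ for the residual $\rho$ contributed by the frame, the inductive hypothesis supplies a $\chi''_{E'}$ with $\Sigma;\Gamma\vdash E'[e]:\ldots\mid\chi''_{E'}$ and $q\rhd\chi''_{E'}\sqsubseteq\chi_{E'}$; re-applying the corresponding term rule yields $\chi''=\chi''_{E'}\rhd\rho$, and by associativity $q\rhd\chi''=(q\rhd\chi''_{E'})\rhd\rho\sqsubseteq\chi_{E'}\rhd\rho=\chi'$.

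The genuine obstacle is the prompt frame \textsc{T-CtxtPrompt}, where $\chi'$ arises from the body effect $\chi_{E'}=(P,C,\opt{Q})$ not by sequencing but by the unblock/filter/project transform of \textsc{T-Prompt}, namely $(\punblock{P}{\ell}\setminus^{Q_h}\ell,\ \punblock{C}{\ell}\setminus\ell,\ \opt{Q}\sqcup\bigsqcup\punblock{C}{\ell}|^{Q_h}_\ell)$. Two facts are needed. First, re-applying \textsc{T-Prompt} to the reduct body requires $\mathsf{validEffects}$ to hold for the \emph{smaller} effect $\chi''_{E'}$; I would isolate this as a sub-lemma: $\mathsf{validEffects}$ is downward closed under $\sqsubseteq$. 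It holds because both prepending $q$ and taking subeffects preserve the \emph{tags} and \emph{types} carried by abort/replace/prophecy effects (only underlying components shrink), so every subtyping obligation and every prophecy-observation obligation discharged for $\chi_{E'}$ is inherited by $\chi''_{E'}$ --- for prophecies one uses $P''\sqsubseteq P$ (prepending $q$ does not change the prophecy set), selects the matching prediction in $P$, and chains $\punblock{-}{\ell}$-monotonicity through the observation bound of Figure \ref{fig:directPO}. Second, I would check that each of $\punblock{-}{\ell}$, $-\setminus\ell$, $-\setminus^{Q_h}\ell$, and $-|^{Q_h}_\ell$ commutes with left-prepending $q$ and is monotone: e.g.\ for a blocked effect $q\rhd\punblock{\pblock{c}{\ell}}{\ell}=q\rhd c=\punblock{q\rhd\pblock{c}{\ell}}{\ell}$, and $q$ turns $\mathsf{abort}~\ell~Q'$ into $\mathsf{abort}~\ell~(q\rhd Q')$, so projection gives $(q\rhd Q')\rhd Q_h=q\rhd(Q'\rhd Q_h)$ by associativity. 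Combining the two facts, $q\rhd\chi''$ equals the prompt-transform applied to $q\rhd\chi''_{E'}$, which by monotonicity of the transform together with $q\rhd\chi''_{E'}\sqsubseteq\chi_{E'}$ is $\sqsubseteq$ the prompt-transform of $\chi_{E'}$, i.e.\ $\sqsubseteq\chi'$.

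I expect the downward-closure of $\mathsf{validEffects}$ --- and in particular its prophecy-observation clause, where the reduct's observations must be aligned with the observations already validated in the original body --- to be the most delicate step, and would state it as an explicit auxiliary lemma proved by unfolding the orderings in Figure \ref{fig:directPO} and invoking monotonicity of unblocking; everything else reduces to associativity and monotonicity of $\rhd$.
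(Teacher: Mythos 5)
Your proposal is correct and takes essentially the same route as the paper's own proof: induction on the structure of $E$, with the application-style frames discharged by associativity and left-monotonicity of $\rhd$, and the prompt frame handled by observing that the unblock/filter/project transform of \textsc{T-Prompt} commutes with left-prepending $(\emptyset,\emptyset,q)$ and is monotone with respect to the direct order. If anything, you are more explicit than the paper, which only gestures at the obligation to re-establish $\mathsf{validEffects}$ for the reduct body (``relies on verifying that the transformations\ldots preserve subtyping'') where you isolate it as a downward-closure sub-lemma.
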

\begin{proof}
By induction on E, leaving all else universally quantified.  Most cases are straightforward uses of the inductive hypothesis. The exception is the case for prompts, which relies on verifying that the transformations to the body's prophecy and control effects preserves subtyping.
\begin{itemize}
\item Case $E=\cdot$: Trivial.
\item Case $E=(E'~e')$:
	By inversion on context typing, showing only the function application case and omitting the similar continuation application case: 
	\begin{itemize}
	\item $\Sigma;\Gamma\vdash E' :: \tau/\chi\leadsto[\sigma\xrightarrow{\gamma}\tau']/\chi_f$
	\item $\Sigma;\Gamma\vdash \chi'=\chi_f\rhd\chi_{e'}\rhd\gamma$
	\end{itemize}
	From the inductive hypothesis, we may conclude:
	\begin{itemize}
	\item $\Sigma;\Gamma\vdash E'[e] : [\sigma\xrightarrow{\gamma}\tau'] \mid \chi_f'$
	\item $\Sigma;\Gamma\vdash q\rhd\chi_f'\sqsubseteq\chi_f$
    \end{itemize}
    Therefore, let $\chi''=\chi_f'\rhd\chi_{e'}\rhd\gamma$.  Then the typing result for $E[e]$ holds by \textsc{T-App} (note we've suppressed the details of handling the choice of $e'$ being a value or the function having non-dependent type).
    For the subeffect obligation:
    \begin{itemize}
        \item $\Sigma;\Gamma\vdash q\rhd\chi''=q\rhd \chi_f'\rhd\chi_{e'}\rhd\gamma \sqsubseteq \chi_f\rhd\chi_{e'}\rhd\gamma=\chi'$
    \end{itemize}
    follows from the definition of $\chi''$, the definition of $\chi'$ from inversion, and the inductive result.
\item Case $E=(v~E')$: Similar to other cases.
\item Case $E=(\%~\ell~E'~h)$: 
    By inversion on context typing
    \begin{itemize}
    \item $\Sigma;\Gamma\vdash E' :: \tau/\chi\leadsto\tau'/\chi_b$
    \item $\chi_b=(P_b,C_b,Q_b)$
    \item $\chi'=(\punblock{P_b}{\ell}\setminus\ell, \punblock{C_b}{\ell}\setminus\ell, Q_b\sqcup\bigsqcup \punblock{C_b}{\ell}|^{Q_h}_\ell)$
    \item $\Sigma;\Gamma\vdash h : \sigma\xrightarrow{(\emptyset,\emptyset,Q_h)}\tau \mid I$
    \item $\Sigma;\Gamma\vdash\mathsf{validEffects}(P_b, C_b, Q_b, \ell, \tau', \sigma)$
    \end{itemize}
    By the inductive hypothesis:
    \begin{itemize}
        \item $\Sigma;\Gamma\vdash E'[e] : \tau' \mid \chi_b'$
        \item $q\rhd \chi_b'\sqsubseteq\chi_b$
    \end{itemize}
    Destructure $\chi_b'$ as $(P_b',C_b',Q_b')$.
    Let $\chi''=(\punblock{P_b'}{\ell}\setminus\ell,\punblock{C_b'}{\ell}\setminus\ell,Q_b'\sqcup\bigsqcup \punblock{C_b}{\ell}|^{Q_h}_\ell)$.
    Because $q\rhd(P_b', C_b', Q_b') \subseteq (P_b, C_b, Q_b)$, 
    we may conclude
    $q\rhd(\punblock{P_b'}{\ell}\setminus\ell, \punblock{C_b'}{\ell}\setminus \ell, Q_b'\sqcup\bigsqcup \punblock{C_b}{\ell}|^{Q_h}_\ell)\sqsubseteq(\punblock{P_b}{\ell}\setminus\ell, \punblock{C_b}{\ell}\setminus\ell, Q_b\sqcup\bigsqcup \punblock{C_b}{\ell}|^{Q_h}_\ell)$.
\item Case $E=(\mathsf{call/cc}~\ell~E')$: Similar to other cases.
\end{itemize}
\end{proof}

\section{Soundness (Type Safety)}
\label{apdx:soundness}

We prove syntactic type safety for a lightly type-annotated source language.
We require some type annotations because the dynamic semantics form a labelled transition system where the label is the effect of the reduction.  Because some of the control effects contain types, we must add those to the term language.
Specifically:
\begin{itemize}
    \item $\mathsf{abort}^\rho$ is the abort operator labelled with the type of the value passed to the handler.
    \item $\mathsf{call/cc}^\rho_{\chi}$ ``predicts'' a continuation result type (the type of the enclosing prompt) of $\rho$ and latent continuation effect of $\chi$ (as in the continuation type, this is a flattened \emph{underlying} effect)
\end{itemize}
In each case, the runtime typing rule is modified to enforce the correct relationship between the term and the type.  These type-annotations are straightforward to produce from a valid source typing derivation.

The structure of our proof borrows heavily from Gordon's modular proof of type safety for a lambda calculus defined with respect to an unspecified effect quantale, where the language is parameterized by a selection of primitives, new values, new types, a notion of state, state types, and operations and relations giving types to the new primitives, values, and states.  The proof is also parameterized by a lemma that amounts to one-step type preservation for executing fully-applied primitive operations --- these are the only parts of the language that may modify the state (the rest of the framework is defined without knowledge of the state's internals).  Sufficient restrictions are placed on these parameters to ensure various traditional lemmas continue to hold (for example, ensuring that the primitives do not add a third boolean, so the typical Canonical Forms lemma stands).

We impose an additional requirement on the parameters for primitive typing, beyond what Gordon requires.  Whereas Gordon requires that for any primitive, only its final effect is non-$I$, we also require that all control behaviors and block sets for any primitive added are empty.  This ensures that all primitives are in fact local operations.  We retain Gordon's model of dynamic primitive behavior, which reduces fully-applied primitives to a new value, transforms the (pluggable) state, and produces a dynamic effect --- in the \emph{underlying} effect quantale, rather than the control-enhanced one.

\begin{lemma}[Redex Preservation]
    \label{lem:redex_preservation}
If 
\begin{itemize}
\item $\vdash\Gamma$
\item $\vdash\sigma:\Sigma$
\item $\Sigma;\Gamma\vdash e : \tau \mid \chi$
\item $\sigma;e\overset{q}{\rightarrow}\sigma';e'$
\end{itemize}
then there exists $\Sigma'$ and $\chi'$ such that 
\begin{itemize}
\item $\Sigma\le\Sigma'$
\item $\vdash\sigma':\Sigma'$
\item $\Sigma';\Gamma\vdash e' : \tau \mid \chi'$
\item $q\rhd \chi'\sqsubseteq\chi$
\end{itemize}
\end{lemma}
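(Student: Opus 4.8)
The plan is to prove this head-reduction preservation lemma by a flat case analysis on the local rule deriving $\sigma;e\overset{q}{\rightarrow}\sigma';e'$ — namely \textsc{E-App}, \textsc{E-PrimApp}, and the prompt-value rule \textsc{E-PromptVal} — with the genuinely global reductions (\textsc{E-Context}, \textsc{E-Abort}, \textsc{E-CallCC}, \textsc{E-InvokeCC}) deferred to the separate $\Rightarrow$-preservation argument that consumes Context Typing Decomposition (Lemma~\ref{lem:context_typing_decomp}), Reduct Effects (Lemma~\ref{lem:reduct_effects}), and Valid Prophecies (Lemma~\ref{lem:valid_prophecies}). Each head rule fixes the shape of $e$ and so determines, modulo subsumption, the syntax-directed rule concluding $\Sigma;\Gamma\vdash e:\tau\mid\chi$. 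I will lean on three auxiliary facts: a generation/inversion lemma that, from a derivation ending in subsumption steps, recovers the premises of the underlying syntax-directed rule plus residual $<:$ and $\sqsubseteq$ obligations (using the direct order of Figure~\ref{fig:directPO}); a value-typing lemma stating that every value types with some effect $\sqsupseteq(\emptyset,\emptyset,I)$ and may always be retyped at exactly $(\emptyset,\emptyset,I)$; and a value-substitution lemma. The dynamic label $q$ emitted by the local semantics is an underlying effect (or $I$), and where the conclusion writes $q\rhd\chi'$ I read $q$ as the embedding $(\emptyset,\emptyset,q)$ into $\mathcal{C}(Q)$, matching the single-step statement of Section~\ref{sec:summary_soundness}.

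For \textsc{E-App}, with $e=((\lambda x.e_0)~v)$, $q=I$, $\sigma'=\sigma$, and $e'=e_0[v/x]$: since a lambda can only be assigned a function type (never a $\mathsf{cont}$ type, as subtyping never relates the two constructors), generation must go through \textsc{T-App} rather than \textsc{T-AppCont}, yielding a function typing and an argument typing with effects $\chi_1,\chi_2\sqsupseteq(\emptyset,\emptyset,I)$, a result bound $\tau'<:\tau$, and a latent effect $\chi_l$ with $\chi_1\rhd\chi_2\rhd\chi_l\sqsubseteq\chi$. Inverting \textsc{T-Lambda} and applying the substitution lemma types $e_0[v/x]$ at $\tau$ with some latent effect $\chi'\sqsubseteq\chi_l$. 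Choosing $\Sigma'=\Sigma$, the state obligation is immediate, and monotonicity of $\rhd$ gives $q\rhd\chi'=(\emptyset,\emptyset,I)\rhd\chi'\sqsubseteq\chi_1\rhd\chi_2\rhd\chi_l\sqsubseteq\chi$.

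For \textsc{E-PromptVal}, with $e=(\%~\ell~v~h)\rightarrow v$, $q=I$, $\sigma'=\sigma$: inverting \textsc{T-Prompt} types the body $v$ at $\tau$, and the value-typing lemma forces its effect to $(\emptyset,\emptyset,I)$, so that $P=C=\emptyset$ and $\opt{Q}=I$. The prompt's reported effect $(\punblock{P}{\ell}\setminus^{Q_h}\ell,\punblock{C}{\ell}\setminus\ell,\opt{Q}\sqcup\bigsqcup\punblock{C}{\ell}|^{Q_h}_\ell)$ then collapses to $(\emptyset,\emptyset,I)$, whence $\chi\sqsupseteq(\emptyset,\emptyset,I)$; taking $\chi'=(\emptyset,\emptyset,I)$ and $\Sigma'=\Sigma$ closes the case. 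For \textsc{E-PrimApp}, where a fully-applied primitive steps via $\delta(\sigma,p~\overline{e})=(\sigma',v,q)$, the case is discharged by the primitive-preservation parameter of Gordon's framework, which supplies $\Sigma'\ge\Sigma$ with $\vdash\sigma':\Sigma'$, types the produced value at $\tau$, and relates the dynamic underlying effect $q$ to the static one. The requirement (imposed in the appendix) that primitives carry empty prophecy and control-effect sets makes the static effect of $(p~\overline{e})$ of the shape $(\emptyset,\emptyset,Q_p)$, so the $\mathcal{C}(Q)$-level inequality $q\rhd\chi'\sqsubseteq\chi$ reduces exactly to the underlying bound the parameter provides.

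I expect the main obstacle to be the value-substitution lemma in the continuation-effect setting, not any single reduction case. It must establish that $e_0[v/x]$ carries a continuation effect bounded by that of $e_0$, via an induction over the typing of $e_0$ showing that inserting a value (with effect $(\emptyset,\emptyset,I)$) perturbs none of the prophecy, control-effect, or underlying components — the crux being that left- and right-accumulation by $(\emptyset,\emptyset,I)$ act as identities and that latent effects recorded inside function and $\mathsf{cont}$ types are untouched by substitution. Subtlety arises where $x$ may appear within a reified continuation value $(\mathsf{cont}~\ell~E)$, which the induction must route through the context-typing judgment, and throughout one must argue with respect to $\approx$ and the explicit order of Figure~\ref{fig:directPO} rather than a join-induced order. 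Everything else — generation, value typing, and the primitive parameter — is routine once substitution is in hand.
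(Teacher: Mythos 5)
Your proposal takes essentially the same route as the paper: the paper's proof is by induction (in effect, a flat case analysis) on the local reduction rules \textsc{E-App}, \textsc{E-PrimApp}, and \textsc{E-PromptVal}, with inversion on the typing derivation in each case, deferring \textsc{E-Context}, \textsc{E-Abort}, \textsc{E-CallCC}, and \textsc{E-InvokeCC} to the separate $\Rightarrow$-preservation lemma, and discharging the substitution and primitive cases by appeal to Gordon's parameterized framework while noting \textsc{E-PromptVal} as the only genuinely new (and straightforward) case. Your additional detail --- generation modulo subsumption, minimal value typing at $(\emptyset,\emptyset,I)$, the primitive restriction to empty prophecy/control sets, and substitution under reified continuations routed through context typing --- correctly fills in what the paper leaves implicit in its citation of Gordon's proofs.
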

\begin{proof}
    By induction on the reduction step, followed by inversion on the typing derivation in each case.  The proof is nearly identical to Gordon's proofs~\cite{ecoop17,quantalesjournal} beyond the addition of equivariant recursive types, non-dependent products and sum types, subsumption, and a new straightforward case for \textsc{E-PromptVal}.
\end{proof}

\begin{lemma}[Context Substitution]
    \label{lem:ctxt_sub}
If 
\begin{itemize}
\item $\Sigma;\Gamma\vdash E :: \tau/\chi\leadsto\sigma/{\chi'}$
\item $\Sigma;\Gamma\vdash e : \tau \mid \chi$
\end{itemize}
then $\Sigma;\Gamma\vdash E[e] : \sigma \mid \chi'$
\end{lemma}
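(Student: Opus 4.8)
The plan is to proceed by induction on the structure of the evaluation context $E$ (equivalently, on the derivation of the context typing judgment $\Sigma;\Gamma\vdash E :: \tau/\chi\leadsto\sigma/\chi'$), leaving $e$, $\tau$, $\chi$, $\sigma$, and $\chi'$ universally quantified in the inductive hypothesis. This lemma is exactly the converse of Context Typing Decomposition (Lemma \ref{lem:context_typing_decomp}): whereas that lemma peels a filled context apart into a context typing and a redex typing, here I reassemble the two. The crucial structural fact I rely on is that every context typing rule was defined in parallel with a corresponding term typing rule, replacing exactly one subterm-typing premise with a context-typing premise for the hole-containing subterm; so in each inductive case the reconstruction of the term typing derivation is mechanical.

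For the base case $E=\bullet$, inversion on \textsc{T-CtxtHole} gives $\tau<:\sigma$ and $\chi\sqsubseteq\chi'$. Since $\bullet[e]=e$ and we are given $\Sigma;\Gamma\vdash e:\tau\mid\chi$, a single application of subsumption (using both the subtyping $\tau<:\sigma$ and the subeffecting $\chi\sqsubseteq\chi'$) yields $\Sigma;\Gamma\vdash e:\sigma\mid\chi'$, as required.

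For the inductive cases, I invert the context typing derivation, apply the inductive hypothesis to the immediate subcontext $E'$ (whose hole type and effect are the same $\tau/\chi$), and then apply the matching term typing rule. For instance, in the case $E=(E'~e')$ typed by \textsc{T-CtxtFunApp}, inversion yields $\Sigma;\Gamma\vdash E'::\tau/\chi\leadsto(\tau_{e'}\xrightarrow{\chi_l}\sigma)/\chi_f$, $\Sigma;\Gamma\vdash e':\tau_{e'}\mid\chi_{e'}$, and $\chi'=\chi_f\rhd\chi_{e'}\rhd\chi_l$; the inductive hypothesis gives $\Sigma;\Gamma\vdash E'[e]:(\tau_{e'}\xrightarrow{\chi_l}\sigma)\mid\chi_f$, and \textsc{T-App} then reconstructs $\Sigma;\Gamma\vdash(E'[e]~e'):\sigma\mid\chi_f\rhd\chi_{e'}\rhd\chi_l=\chi'$. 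The cases for \textsc{T-CtxtContApp}, $E=(v~E')$, and $E=(\mathsf{call/cc}~\ell~E')$ proceed identically, each using the effect composition dictated by its context typing rule and discharged by the corresponding term rule.

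The one case warranting attention is the prompt context $E=(\%~\ell~E'~h)$, since \textsc{T-Prompt} carries the auxiliary \textsc{V-Effects} side condition and performs the nontrivial unblocking and filtering transformations on the body's prophecy and control effect sets; I expect this to be the main (though still routine) obstacle. It is resolved by the fact that \textsc{T-CtxtPrompt} was defined to carry exactly the same premises: inverting it supplies the subcontext typing $\Sigma;\Gamma\vdash E'::\tau/\chi\leadsto\sigma/\chi_b$ for some body effect $\chi_b=(P_b,C_b,Q_b)$, the handler typing $\Sigma;\Gamma\vdash h:\rho\xrightarrow{(\emptyset,\emptyset,Q_h)}\sigma\mid I$, the side condition $\mathsf{validEffects}(P_b,C_b,Q_b,\ell,\sigma,\rho)$, and the definition of $\chi'$ as the transformed body effect $(\punblock{P_b}{\ell}\setminus^{Q_h}\ell,\ \punblock{C_b}{\ell}\setminus\ell,\ Q_b\sqcup\bigsqcup\punblock{C_b}{\ell}|^{Q_h}_\ell)$. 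The inductive hypothesis applied to $E'$ yields $\Sigma;\Gamma\vdash E'[e]:\sigma\mid\chi_b$ with the identical $\chi_b$, so the same $\mathsf{validEffects}$ derivation and the same effect transformation that appear in the context typing rule can be replayed verbatim under \textsc{T-Prompt}, giving $\Sigma;\Gamma\vdash(\%~\ell~E'[e]~h):\sigma\mid\chi'$. No new reasoning about the effect algebra is needed, precisely because the inductive hypothesis preserves the body effect $\chi_b$ exactly rather than merely up to subeffecting.
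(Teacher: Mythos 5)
Your proof is correct and takes essentially the same approach as the paper, which disposes of this lemma with the single line ``by straightforward induction on the context typing'' --- precisely the induction you spell out. Your key observation in the prompt case (that the inductive hypothesis preserves the body effect $\chi_b$ exactly, so the \textsc{V-Effects} side condition and the effect transformation of \textsc{T-Prompt} can be replayed verbatim) is exactly why the paper can call this induction straightforward.
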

\begin{proof}
    By straightforward induction on the context typing.
\end{proof}

Finally, we are ready to tackle the central preservation lemma.
\begin{lemma}[Context Reduction]
    \label{lem:preservation}
If
\begin{itemize}
\item $\vdash\Gamma$
\item $\vdash\sigma:\Sigma$
\item $\Sigma;\Gamma\vdash e : \tau \mid \chi$
\item $\sigma;e\overset{q}{\Rightarrow}\sigma';e'$
\end{itemize}
then there exists a $\Sigma'$ $\chi'$ such that 
\begin{itemize}
\item $\Sigma\le\Sigma'$
\item $\vdash\sigma':\Sigma'$
\item $\Sigma';\Gamma\vdash e' : \tau \mid \chi'$
\item $q\rhd{\chi'}\sqsubseteq{\chi}$
\end{itemize}
\end{lemma}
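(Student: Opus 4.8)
The plan is to proceed by case analysis on the global reduction $\sigma;e\overset{q}{\Rightarrow}\sigma';e'$, whose rules are \textsc{E-Context}, \textsc{E-PromptVal}, \textsc{E-Abort}, \textsc{E-CallCC}, and \textsc{E-InvokeCC}. Since only \textsc{E-Context} can modify the store (through an underlying \textsc{E-PrimApp} step), in every other case I take $\Sigma'=\Sigma$, so that $\vdash\sigma:\Sigma$ discharges $\vdash\sigma':\Sigma'$ at once. The uniform shape of each argument is: use Lemma~\ref{lem:context_typing_decomp} to split the derivation of the enclosing evaluation context off from the redex, establish a local typing-and-subeffect fact about the contractum, and recombine with Lemma~\ref{lem:reduct_effects} (Reduct Effects), which is precisely engineered to plug a reduct whose effect is bounded below $q\rhd(\cdot)$ back into a typed context while maintaining $q\rhd\chi'\sqsubseteq\chi$.

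For \textsc{E-Context} the redex steps locally, so I apply Lemma~\ref{lem:context_typing_decomp} to obtain a local type and effect $\chi_0$ for the inner redex and a context typing for $E$, invoke Lemma~\ref{lem:redex_preservation} (Redex Preservation) to type the contractum and produce the fresh $\Sigma'$, and close with Lemma~\ref{lem:reduct_effects}. \textsc{E-PromptVal} is immediate: inverting \textsc{T-Prompt} on $(\%~\ell~v~h)$ forces the body effect to be $(\emptyset,\emptyset,I)$, so \textsf{validEffects} holds vacuously and the prompt's effect is $(\emptyset,\emptyset,I)$, which bounds the reduct $v$. For \textsc{E-Abort} I decompose the outer (prompt-free) context $E$ and invert \textsc{T-Prompt}. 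Because $E'$ is an evaluation context, the material to the left of the $\mathsf{abort}$ redex consists of values, so the accumulated abort prefix $Q_a$ satisfies $I\sqsubseteq Q_a$; hence $\mathsf{abort}~\ell~Q_a\leadsto\rho\in\punblock{C}{\ell}$ and, by monotonicity of $\rhd$, its projection $Q_a\rhd Q_h$ dominates $Q_h$, so the prompt's underlying effect dominates $Q_h$. The \textsf{validEffects} antecedent makes the thrown type a subtype of the handler argument, so \textsc{T-App} types $h~v$ with effect $(\emptyset,\emptyset,Q_h)\sqsubseteq\chi_p$ (the prompt effect), and Lemma~\ref{lem:reduct_effects} finishes with $q=I$.

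The substantive cases are \textsc{E-CallCC} and \textsc{E-InvokeCC}. For \textsc{E-CallCC} the contractum keeps the same prompt skeleton but replaces $\mathsf{call/cc}$ by an application of the body to the reified continuation $\mathsf{cont}~\ell~E'$, so the prophecy that \textsc{T-CallCont} introduced must be discharged into a \emph{typing derivation for that continuation value}. Here Lemma~\ref{lem:valid_prophecies} is the crux: applied to $E'$ (which by the side condition contains no prompt for $\ell$), it shows the prophecy's observation accumulates to exactly the actual effect of plugging a value into $E'$, and inverting the enclosing \textsc{T-Prompt}'s \textsf{validEffects} shows that observation is $\sqsubseteq$ the prophesied latent effect $\chi_k$ after unblocking. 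Together these let me apply \textsc{T-ContC} to assign $\mathsf{cont}~\ell~E'$ the latent effect $\chi_k$; since the resulting application's effect is the $\mathsf{call/cc}$ effect with its trailing prophecy component dropped (and $X\sqsubseteq X\rhd(\{p\},\emptyset,I)$), I recombine with Lemma~\ref{lem:reduct_effects} against the full context $E[(\%~\ell~E'[\bullet]~h)]$.

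For \textsc{E-InvokeCC} the prompt body is replaced wholesale by $E''[v]$, so I decompose only the outer context $E$ and re-derive \textsc{T-Prompt} for the new body. Inverting \textsc{T-ContC} on the invoked continuation value and using Lemma~\ref{lem:ctxt_sub} (Context Substitution) yields $E''[v]$ with effect $(P_0,C_0,Q_0)$ where $\punblock{(P_0,C_0,Q_0)}{\ell}\sqsubseteq\punblock{(P_k,C_k,Q_k)}{\ell}$, and the latter appears unblocked inside the original body's effect and was therefore already validated; a short monotonicity observation about \textsf{validEffects} under $\sqsubseteq$ (subeffecting preserves thrown/result subtyping and observation-below-prediction checks, since $\sqsubseteq$ on control and prophecy sets keeps types fixed) re-establishes the antecedent. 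The new prompt's effect is bounded by the old one because the continuation's normal-return underlying effect $Q_0$ was already charged to the old prompt through the projection of its $\mathsf{replace}~\ell$ effect. I expect this last accounting --- that the underlying effect surfacing at the restored prompt is already subsumed by the replacement effect at the original prompt --- together with the prophecy-validity argument for \textsc{E-CallCC} to be the main obstacle, since these are exactly the points where the non-local control-flow bookkeeping of $\mathcal{C}(Q)$ must line up with the operational rewrite.
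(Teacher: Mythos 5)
Your proposal is correct and follows essentially the same route as the paper's own proof: case analysis on the global step, using Context Typing Decomposition, Redex Preservation, and Reduct Effects for \textsc{E-Context}; the Valid Prophecies lemma plus inversion of \textsf{validEffects} to discharge the prophecy into a \textsc{T-ContC} derivation for \textsc{E-CallCC}; and \textsc{T-ContC} inversion, Context Substitution, and re-derivation of \textsc{T-Prompt} (with the underlying effect absorbed via the $\mathsf{replace}~\ell$ projection) for \textsc{E-InvokeCC}. The only deviations are cosmetic: the paper treats \textsc{E-PromptVal} as a local rule handled inside Redex Preservation rather than as a top-level case, and your explicit monotonicity remark re-establishing \textsf{validEffects} for the restored prompt in \textsc{E-InvokeCC} spells out a step the paper leaves implicit.
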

\newcommand{\case}[1]{\item Case \textsc{#1}:\xspace}
\begin{proof}
By induction on the derivation of $\sigma;e\overset{q}{\Rightarrow}\sigma';e'$.
In each case below \emph{other than} the case for the context reduction (\textsc{E-Context}), nothing will change the state.  So in all cases other than \textsc{E-Context}, we choose $\Sigma'\overset{def}{=}\Sigma$, and the semantics ensure $\sigma=\sigma'$, and so in all cases $\vdash\sigma':\Sigma'\Leftrightarrow\vdash\sigma:\Sigma$.
Thus we present only the expression- and effect-related details below.  Moreover, we continue to use $\sigma$ as an additional meta-variable for types, rather than as a meta-variable for states, to minimize the number of prime marks that must be counted by the reader (or author) of the proof.
\begin{itemize}
    \case{E-Context} This case follows from 
    context decomposition (Lemma \ref{lem:context_typing_decomp}), 
    redex reduction (Lemma \ref{lem:redex_preservation}),
    and the reduct effects lemma (Lemma \ref{lem:reduct_effects}).
    In this case,
    \begin{itemize}
        \item $\sigma;e\xrightarrow{q}\sigma';e'$
        \item $\Sigma;\Gamma\vdash E[e] : \tau \mid \chi$
        \item $\vdash \Gamma$
        \item $\vdash \sigma : \Sigma$
    \end{itemize}
    By context decomposition (Lemma \ref{lem:context_typing_decomp}):
    \begin{itemize}
        \item $\Sigma;\Gamma\vdash e : \tau_e \mid \chi_e$
        \item $\Sigma;\Gamma\vdash E :: \tau_e/\chi_e\leadsto\tau/\chi$
    \end{itemize}
    By redex preservation (Lemma \ref{lem:redex_preservation}), there exist $\Sigma'$ and $\chi_e\prime$ such that:
    \begin{itemize}
        \item $\Sigma \le \Sigma'$
        \item $\vdash \sigma' : \Sigma$
        \item $\Sigma';\Gamma\vdash e' : \tau_e \mid \chi_e\prime$
        \item $\Sigma';\Gamma\vdash q \rhd \chi_e\prime \sqsubseteq \chi_e$
    \end{itemize}
    By weakening and repetition of context decomposition:
    \begin{itemize}
        \item $\Sigma';\Gamma\vdash E :: \tau_e/\chi_e\leadsto\tau/\chi$
    \end{itemize}
    Then by the reduct effects lemma (Lemma \ref{lem:reduct_effects}), we can derive the appropriate result for some $\chi'$:
    \begin{itemize}
        \item $\Sigma';\Gamma\vdash E[e'] : \tau \mid \chi'$
        \item $\Sigma';\Gamma\vdash q\rhd \chi'\sqsubseteq\chi$
    \end{itemize}
    \case{E-Abort}
        In this case, 
        \begin{itemize}
            \item $e=E[(\%~\ell~E'[(\mathsf{abort}^\rho~\ell~v)]~h)]$,
            \item $q=I$
            \item $e'=E[h~v]$
            \item $E'$ contains no prompts for $\ell$
        \end{itemize}
        By context decomposition (Lemma \ref{lem:context_typing_decomp}), there exist $\tau_{prompt}$, $\chi_{prompt}$, where:
        \begin{itemize}
            \item $\Sigma;\Gamma\vdash(\%~\ell~E'[(\mathsf{abort}^\rho~\ell~v)]~h) : \tau_{prompt} \mid \chi_{prompt}$
            \item $\Sigma;\Gamma\vdash E :: \tau_{prompt}/\chi_{prompt}\leadsto\tau/\chi$
        \end{itemize}
        By inversion on the prompt's typing:
        \begin{itemize}
            \item $\Sigma;\Gamma\vdash E'[(\mathsf{abort}^\rho~\ell~v)] : \tau_{prompt} \mid (P_{body},C_{body},Q_{body})$
            \item $\Sigma;\Gamma\vdash h : \sigma_h\xrightarrow{(\emptyset,\emptyset,Q_h)}\tau_{prompt} \mid (\emptyset,\emptyset,I)$
            \item $\Sigma;\Gamma\vdash\mathsf{validEffects}(P_{body},C_{body},Q_{body}, \ell, \tau_{prompt}, \sigma_h)$
            \item $\chi_{prompt}=(\punblock{P_{body}}{\ell}\setminus^{Q_h}\ell, \punblock{C_{body}}{\ell}\setminus^{Q_h}\ell, Q_{body}\sqcup\bigsqcup \punblock{C_{body}}{\ell}|^{Q_h}_\ell)$
        \end{itemize}
        Applying context decomposition again to $E'$ and its contents:
        \begin{itemize}
            \item $\Sigma;\Gamma\vdash (\mathsf{abort}^\rho~\ell~v) : \tau_{abrt} \mid \chi_{abrt}$
            \item $\Sigma;\Gamma\vdash E' :: \tau_{abrt}/\chi_{abrt}\leadsto\tau_{prompt}/\chi_{prompt}$
        \end{itemize}
        By inversion on the typing of \textsf{abort} and value typing:
        \begin{itemize}
            \item $\Sigma;\Gamma\vdash v : \sigma \mid I$
            \item $\Sigma;\Gamma\vdash (\emptyset, \{\mathsf{abort}~\ell~I\leadsto\sigma\},I)\sqsubseteq\chi_{abrt}$
        \end{itemize}
        Because abort effects for $\ell$ are propagated through $E'$ (which contains no $\ell$-prompts) without accumulating new prefixes (since $E'$ is an evaluation context):
        \begin{itemize}
            \item $\mathsf{abort}~\ell~I\leadsto\sigma\in C_{body}$
        \end{itemize}
        By inversion on the $\mathsf{validEffects}$ conclusion above:
        \begin{itemize}
            \item $\Sigma;\Gamma\vdash \sigma <: \sigma_h$
        \end{itemize}
        Then by \textsc{T-App}:
        \begin{itemize}
            \item $\Sigma;\Gamma\vdash h~v : \tau_{prompt} \mid (\emptyset,\emptyset,Q_h)$
        \end{itemize}
        This is a subeffect of $\chi_{prompt}$, because $(I\rhd Q_h) \in \punblock{C_{body}}{\ell}|^{Q_h}_\ell$ as a result of the abort control effect.  This allows the use of context substitution (Lemma \ref{lem:ctxt_sub}) to place this result back in $E$.
    \case{E-CallCC}
        In this case
        \begin{itemize}
        \item $e=E[(\%~\ell~E'[(\mathsf{call/cc}_\chi^\rho~\ell~k)]~h)]$
        \item $q=I$
        \item $e'=E[(\%~\ell~E'[(k~(\mathsf{cont}^\rho~\ell~E'))]~h)]$
        \end{itemize}
        By context decomposition (Lemma \ref{lem:context_typing_decomp}):
        \begin{itemize}
            \item $\Sigma;\Gamma\vdash(\%~\ell~E'[(\mathsf{call/cc}_\chi^\rho~\ell~k)]~h) : \tau_{prompt} \mid \chi_{prompt}$
            \item $\Sigma;\Gamma\vdash E :: \tau_{prompt}/\chi_{prompt}\leadsto\tau/\chi$
        \end{itemize}
        By inversion on the prompt typing:
        \begin{itemize}
        \item $\Sigma;\Gamma\vdash E'[(\mathsf{call/cc}_\chi^\rho~\ell~k)] : \tau_{prompt} \mid (P_{body}, C_{body}, Q_{body})$
        \item $\Sigma;\Gamma\vdash h : \sigma_h\xrightarrow{(\emptyset,\emptyset,Q_h)}\tau_{prompt} \mid (\emptyset,\emptyset,I)$
        \item $\Sigma;\Gamma\vdash\mathsf{validEffects}(P_{body},C_{body},Q_{body}, \ell, \tau_{prompt}, \sigma_h)$
        \item $\chi_{prompt}=(\punblock{P_{body}}{\ell}\setminus^{Q_h}\ell, \punblock{C_{body}}{\ell}\setminus^{Q_h}\ell, Q_{body}\sqcup\bigsqcup \punblock{C_{body}}{\ell}|^{Q_h}_\ell)$
        \end{itemize}
        Applying context decomposition again to E' and its contents:
        \begin{itemize}
        \item $\Sigma;\Gamma\vdash(\mathsf{call/cc}_\chi^\rho~\ell~k) : \tau_{hole} \mid \chi_{hole}$
        \item $\Sigma;\Gamma\vdash E' :: \tau_{hole}/\chi_{hole}\leadsto\tau_{prompt}/\chi_{prompt}$
        \end{itemize}
        where $k$ is a value.
        By inversion on the \textsf{call/cc} typing and value typing:
        \begin{itemize}
        \item $\Sigma;\Gamma\vdash k : (\mathsf{cont}~\ell~\tau_{hole}~\chi_{proph}~\sigma)\xrightarrow{\chi_k}\tau_{hole} \mid (\emptyset,\emptyset,I)$
        \item $\chi_{hole}=\chi_k\rhd(\{ \mathsf{prophecy}~\ell~\chi_{proph}\leadsto\sigma~\mathsf{obs}~(\emptyset,\emptyset,I) \}, \emptyset, I)$
        \end{itemize}

        By the valid prophecies lemma (Lemma \ref{lem:valid_prophecies}):
        \begin{itemize}
            \item $\mathsf{prophecy}~\ell~\chi_{proph}\leadsto\sigma~\mathsf{obs}~(P_{obs}, C_{obs}, Q_{obs})\in P_{body}$
            \item $\Sigma;\Gamma\vdash E :: \tau_{hole}/(\emptyset,\emptyset,I) \leadsto \tau_{prompt}/(P_{obs}, C_{obs}, Q_{obs})$
        \end{itemize}
        From the \textsf{validEffects} assumption above, conclude:
        \begin{itemize}
            \item $\punblock{P_{obs}}{\ell}\sqsubseteq\punblock{P_{proph}}{\ell}$
            \item $\punblock{C_{obs}}{\ell}\sqsubseteq \punblock{C_{proph}}{\ell}$
            \item $Q_{obs}\sqsubseteq Q_{proph}$
            \item $\Sigma;\Gamma\vdash \tau_{prompt} <: \sigma$
        \end{itemize}
        These are exactly the requirements for the typing of continuation values via \textsc{T-ContC}, thus:
        \begin{itemize}
        \item $\Sigma;\Gamma\vdash(\mathsf{cont}^\rho~\ell~E') : \mathsf{cont}~\ell~\tau_{hole}~\chi_{proph}~\rho \mid I$
        \end{itemize}
        The use of \textsc{T-App} to apply $k$ to the continuation at its assumed type, along with Context Substitution (Lemma \ref{lem:ctxt_sub}) completes the case.
    \case{E-InvokeCC}
    In this case:
    \begin{itemize}
        \item $e=E[(\%~\ell~E'[((\mathsf{cont}^\rho~\ell~E'')~v)]~h)]$
        \item $e'=E[(\%~\ell~(E''[v])~h)]$
        \item $q=(\ell\mapsto\replace{I\leadsto\sigma},I)$
    \end{itemize}
    By Context Decomposition (Lemma \ref{lem:context_typing_decomp}):
    \begin{itemize}
        \item $\Sigma;\Gamma\vdash (\%~\ell~E'[((\mathsf{cont}^\rho~\ell~E'')~v)]~h) : \tau' \mid \chi_{prompt}$
        \item $\Sigma;\Gamma\vdash E :: \tau_{prompt}/\chi_{prompt}\leadsto\tau/\chi$
    \end{itemize}
    By inversion on the prompt typing:
        \begin{itemize}
        \item $\Sigma;\Gamma\vdash E'[((\mathsf{cont}^\rho~\ell~E'')~v)] : \tau_{prompt} \mid (P_{body}, C_{body}, Q_{body})$
        \item $\Sigma;\Gamma\vdash h : \sigma_h\xrightarrow{(\emptyset,\emptyset,Q_h)}\tau_{prompt} \mid (\emptyset,\emptyset,I)$
        \item $\Sigma;\Gamma\vdash\mathsf{validEffects}(P_{body},C_{body},Q_{body}, \ell, \tau_{prompt}, \sigma_h)$
        \item $\chi_{prompt}=(\punblock{P_{body}}{\ell}\setminus^{Q_h}\ell, \punblock{C_{body}}{\ell}\setminus^{Q_h}\ell, Q_{body}\sqcup\bigsqcup \punblock{C_{body}}{\ell}|^{Q_h}_\ell)$
        \end{itemize}
    By context decomposition for $E'$:
    \begin{itemize}
        \item $\Sigma;\Gamma\vdash((\mathsf{cont}^\rho~\ell~E'')~v) : \tau_{hole} \mid \chi_{hole}$
        \item $\Sigma;\Gamma\vdash E' :: \tau_{hole}/\chi_{hole}\leadsto\tau_{prompt}/\chi_{prompt}$
    \end{itemize}
    By inversion on the continuation application, and value typing:
    \begin{itemize}
        \item $\Sigma;\Gamma\vdash (\mathsf{cont}^\rho~\ell~E'') : \mathsf{cont}~\ell~\gamma~(P_{E''}, C_{E''}, Q_{E''})~\rho \mid I$
        \item $\Sigma;\Gamma\vdash v : \gamma \mid I$
        \item $\chi_{hole}=(\pblock{P_{E''}}{\ell}, \pblock{C_{E''}}{\ell}\cup\{ \mathsf{replace}~\ell:Q_{E''}\leadsto\sigma \}, I)$
    \end{itemize}
    By inversion on typing of the continuation (\textsc{T-ContC}):
    \begin{itemize}
        \item $\Sigma;\Gamma\vdash E'' :: \gamma/(\emptyset,\emptyset,I)\leadsto\sigma/(P_{E''}', C_{E''}', Q_{E''}')$
        \item $\punblock{P_{E''}'}{\ell}\sqsubseteq\punblock{P_{E''}}{\ell}$
        \item $\punblock{C_{E''}'}{\ell}\sqsubseteq\punblock{C_{E''}}{\ell}$
        \item $Q_{E''}'\sqsubseteq Q_{E''}$
    \end{itemize}
    Thus we conclude by Lemma \ref{lem:ctxt_sub} for the $v$ at hand:
    \begin{itemize}
        \item 
        $\Sigma;\Gamma\vdash E''[v] : \sigma\mid(P_{E''}', C_{E''}', Q_{E''}')$
    \end{itemize}
    Because unblocked control effects for $\ell$ and control effects blocked until $\ell$ are carried through $E'$ without change (since it is an evaluation context containing no $\ell$-prompts)
    \begin{itemize}
        \item $\pblock{C_{E''}}{\ell}\cup\{ \mathsf{replace}~\ell:Q_{E''}\leadsto\sigma \} \subseteq C_{body}$ 
    \end{itemize}
    Likewise, $\pblock{P_{E''}}{\ell}\subseteq P_{body}$ (originating from $\chi_{hole}$). 
    (Note that knowing the \textsf{replace} effect above is contained in $C_{body}$ means that the original prophecy validation ensured that $\sigma<:\tau_{prompt}$.)

    This is \emph{almost} enough to make the effect of $E''[v]$ a subeffect of $(P_{body},C_{body},Q_{body})$, allowing the use of subsumption to give it the same effect as the context it is replacing ($E'[\ldots]$).  But the underlying effect $E''[v]$ is not necessarily a subeffect of $Q_{body}$.

    However, if we first apply \textsc{T-Prompt} to $(\%~\ell~E''[v]~h)$, its effect --- $(\punblock{P_{E''}}{\ell}\setminus^{Q_h}\ell, \punblock{C_{E''}}{\ell}\setminus^{Q_h}\ell, Q_{E''}\sqcup\bigsqcup \punblock{C_{E''}}{\ell}|^{Q_h}_\ell)$ --- will be a subeffect of $\chi_{prompt}$.  For this to hold, we require that $\punblock{P_{E''}'}{\ell}\setminus^{Q_h}\ell\sqsubseteq \punblock{P_{body}}{\ell}\setminus^{Q_h}\ell$.  Fortunately, if the \emph{unmasked} $P_{E''}$ is already less than a set masked by $-\setminus^{Q_h}\ell$, then masking it by $-\setminus^{Q_h}\ell$ again will have no effect; and we just proved this above.
    Thus $\punblock{P_{E''}}{\ell}\setminus^{Q_h}\ell\sqsubseteq \punblock{P_{body}}{\ell}\setminus^{Q_h}\ell$.
    Similarly we may conclude that $\punblock{C_{E''}'}{\ell}\setminus^{Q_h}\ell \sqsubseteq\punblock{C_{body}}{\ell}\setminus^{Q_h}\ell$.
    And because of the replacement present in $C_{body}$, $Q_{E''}\sqsubseteq Q_{body}\sqcup\bigsqcup \punblock{C_{body}}{\ell}|^{Q_h}_\ell$.
    The rest of the case (plugging into $E$) follows from subsumption and Context Substitution (Lemma \ref{lem:ctxt_sub}).
\end{itemize}
\end{proof}

\section{Compositional Continuations}

In the main paper we do not discuss compositional (\lstinline|call/comp|) continuations, because they are not required for the macro-expansions we study, and because their metatheory is effectively a simplification of that for non-compositional (\lstinline|call/cc|) continuations.
However they are still useful, both because they give semantic completeness in some denotational models~\cite{SitaramF90b} (when untyped) and because they remedy some of the space consumption issues with using \lstinline|call/cc| to simulate other control operators.
Here we give additional operational rules and type rules for compositional continuations, and outline the extensions to the soundness proof.

\begin{figure*}
\begin{mathpar}
\begin{array}{l}
    e ::= \ldots \mid \mathsf{comp}~E\\
    v ::= \ldots \mid \mathsf{comp}~E\\
    \tau ::= \ldots \mid \mathsf{comp}_\ell~\tau~(P,C,\opt{Q})~\tau\\
    p ::= \ldots \mid \mathsf{cprophecy}~\ell~\chi~\leadsto\tau~\mathsf{obs}~\chi'
\end{array}
\and
\inferrule[\small E-InvokeComp]{ }{
    \sigma;((\mathsf{comp}~E'')~v)
    \overset{I}{\Rightarrow}
    \sigma;E''[v]
}
\and
\inferrule*[left=\small E-CallComp]{ E'~\textrm{contains no prompts for}~\ell }{
    \sigma;E[(\%~\ell~E'[(\mathsf{call/comp}~\ell~k)]~h)]
    \overset{I}{\Rightarrow}
    \sigma;E[(\%~\ell~E'[(k~(\mathsf{comp}~E'))]~h)]
}
\and
\inferrule*[left=T-CallComp]{
    \Gamma\vdash e : (\mathsf{comp}_\ell\;\tau\;\chi_k\;\gamma)\overset{\chi}{\rightarrow}\tau \mid \chi_e
}{
    \Gamma\vdash (\mathsf{call/comp}\;\ell\;e): \tau \mid (\chi_e\rhd\chi)\rhd(\{\mathsf{cprophecy}~\ell~\chi_k\leadsto\gamma~\mathsf{obs}~(\emptyset,\emptyset,I)\},\emptyset,I)
}
\and
\inferrule*[left=T-AppComp]{
    \Gamma\vdash k : \mathsf{comp}\;\tau'\;(P,C,\opt{Q})~\tau \mid \chi_k\\
    \Gamma\vdash e : \tau' \mid \chi_e
}{
    \Gamma\vdash (k\;e) : \tau \mid \chi_k\rhd\chi_e\rhd({P},{C},\opt{Q})
}
\and
    \inferrule[T-CompC]{
        \Sigma;\Gamma\vdash E :: \tau/(\emptyset,\emptyset,I)\leadsto\tau_0'/\chi_0\\
        \tau_0'<:\tau'\\
        \chi_0\sqsubseteq\chi
    }{
        \Sigma;\Gamma\vdash (\mathsf{comp}_\ell^{\tau'}~E) : \mathsf{comp}~\tau~\chi~\tau' \mid I
    }
\end{mathpar}
\caption{Operational semantics and type rules for compositional continuations.}
\label{fig:comp}
\end{figure*}

These extensions are given in Figure \ref{fig:comp}.  
\lstinline|call/comp tag f| captures the same continuation as \lstinline|call/cc tag f|, but in a \emph{composable} form.  \lstinline|f| is invoked with the corresponding \emph{composable} continuation, which when invoked \emph{extends} the curent evaluation context similarly to typical function application, rather than replacing it up to the enclosing appropriately-tagged prompt.  
This is seen most clearly by contrasting the semantics in Figure \ref{fig:comp} to those in Figure \ref{fig:opsem}.

Compositional continuations add a new value expression to represent compositional continuations.  Note that unlike non-compositional continuations, compositional continuations are tagged only to support the type soundness proof; the tag is not operationally required because their application (\textsc{E-CallComp}) is completely local, so no prompt matching is required.  As a result, application of compositional continuations is typed essentially the same as function application, merely using a (compositional) continuation type rather than a function type.  This includes \emph{unmodified} use of the continuation's latent effect and using the compositional continuation's result type as the result of the application, since no context is discarded.  \lstinline|call/comp| is typed nearly the same as \lstinline|call/cc| because the mechanics of capturing the continuation are the same, but because the resulting continuation does not discard any surrounding context when invoked, \textsc{T-AppComp} does not block prophecies or control effects.  As a result, \lstinline|call/comp| is typed with a new type of prophecy, a \textsf{cprophecy}, for which \textsc{V-Effects} must perform additional validation:
\[
    \forall \chi_{proph},\tau',P_p,C_p,\opt{Q_p}\ldotp 
        \mathsf{cprophecy}~\ell~\chi_{proph}\leadsto\tau'~\mathsf{obs}~(P_p,C_p,\opt{Q_p})\in \punblock{P}{\ell}\Rightarrow
        (P_p,C_p,\opt{Q_p})\sqsubseteq\chi_{proph}\land
    \tau<:\tau'
\]
Note that validation of compositional prophecies does use normal subeffecting to compare the predicted and observed effects; recall that unblocking prophecies and control effects when validating regular prophecies was necessary so a prophecy could observe a \emph{blocked} version of the predicted effect from an invocation.  Compositional continuation invocation does not block components, so no such adaptation is required.  $P$ --- the prophecies from the body of the prompt being validated --- does need to be unblocked as in the other case, as these compositional prophecies being checked may still have arisen from invoking non-composable continuations in the body, in which case the \textsf{cprophecy} may be blocked when it is passed to \textsc{V-Effects}.

It would indeed be possible, operationally, to represent composable continuations as merely function abstractions (i.e., $(\lambda v\ldotp E[v])$), but using a distinct class of values both simplifies value typing, and follows the semantics of Flatt et al.~\cite{Flatt2007} more closely.

The soundness proof extends straightforwardly.  The case for compositional continuation application structured like the function application case (though using context substitution instead of variable substitution).  The case for compositional continuation capture is structured like the non-compositional capture, in particular reusing Lemma \ref{lem:valid_prophecies}.

\fi
    \end{document}